\definecolor{ForestGreen}{RGB}{34,139,34}
\newcommand{\bibpath}{./bibtex/clip}
\newcommand{\rr}{\mathbb{R}}
\newcommand{\nn}{\mathbb{N}}
\newcommand{\dd}{\mathcal{D}}
\newcommand{\pp}{\mathcal{P}}
\newcommand{\ii}{\mathcal{I}}
\newcommand{\bb}{\mathbb{B}} % booleans
\newcommand{\rri}{{\overline{\mathbb{R}}}}
\DeclareRobustCommand{\stirling}{\genfrac\{\}{0pt}{}}
\newcommand{\vn}{\vec{n}}
\newcommand{\lfp}{\mathrm{lfp}\,}
\newcommand{\gfp}{\mathrm{gfp}\,}
\newcommand{\fsol}{f_{sol}}
\newcommand{\candf}{\hat{f}}
\newcommand{\Fix}{\mathrm{Fixp}}
\newcommand{\PreFix}{\mathrm{Prefp}}
\newcommand{\PostFix}{\mathrm{Postfp}}
\NewCommandCopy{\rawPhi}{\Phi}
\renewcommand{\Phi}{\mathrm{\rawPhi}}
\NewCommandCopy{\rawPsi}{\Psi}
\renewcommand{\Psi}{\mathrm{\rawPsi}}
\NewCommandCopy{\rawDelta}{\Delta}
\renewcommand{\Delta}{\mathrm{\rawDelta}}
\newcommand{\End}{\mathrm{End}}
\newcommand{\Vect}{\mathrm{Vec}}
\newcommand{\ite}{\mathbf{ite}}
\newcommand{\inputset}{\mathcal{X}} % Only for optimisation, unused here
\newcommand{\Data}{\mathtt{Data}}
\newcommand{\sem}[1]{\llbracket#1\rrbracket}
\newcommand{\sema}[1]{\llbracket#1\rrbracket^\sharp}
\newcommand{\boundaries}{\mathfrak{B}}
\newcommand{\boundariesbound}{$\boundaries$-bound\xspace}
\newcommand{\boundariesbounds}{$\boundaries$-bounds\xspace} % Should be defined the same as \boundariesbound ?
\newcommand{\boundariesubs}{$\boundaries$-ubs\xspace}
\newcommand{\casesdomain}{\mathcal{C}}
\newcommand{\transfc}{T}
\newcommand{\transfa}{T^\sharp}
\newcommand{\Seq}{\texttt{Seq}}
\newcommand{\alphaB}{\alpha_\boundaries}
\newcommand{\gammaB}{\gamma_\boundaries}
\newcommand{\pleq}{\mathop{\dot{\leq}}}
\newcommand{\ciaopp}{CiaoPP\xspace}
\tikzset{dot/.style = {circle, fill, minimum size=#1, inner sep=0pt, outer sep=0pt},
         dot/.default = 6pt
}
\newcommand\irregularcircle[2]{% radius, irregularity
  \pgfextra {\pgfmathsetmacro\len{(#1)+rand*(#2)}}
  +(0:\len pt)
  \foreach \a in {10,20,...,350}{
    \pgfextra {\pgfmathsetmacro\len{(#1)+rand*(#2)}}
    -- +(\a:\len pt)
  } -- cycle
}
\newsavebox{\codebox}
\newcommand{\lrnote}[1]{\todo[backgroundcolor=cyan!20]{{\bf LR:}~#1}}
\newcommand{\plgnote}[1]{\todo[backgroundcolor=orange!20]{{\bf PLG:}~#1}}
\newcommand{\mhnote}[1]{\todo[backgroundcolor=blue!20]{{\bf MH:}~#1}}
\newcommand{\lrnoteoffline}[1]{\todo[backgroundcolor=cyan!20]{{\bf LR:}~#1}}
\newcommand{\plgnoteoffline}[1]{\todo[backgroundcolor=orange!20]{{\bf PLG:}~#1}}
\newcommand{\mhnoteoffline}[1]{\todo[backgroundcolor=blue!20]{{\bf MH:}~#1}}
\newcommand{\lrnoteinline}[1]{\todo[inline,backgroundcolor=cyan!20]{{\bf LR:}~#1}}
\newcommand{\plgnoteinline}[1]{\todo[inline,backgroundcolor=orange!20]{{\bf PLG:}~#1}}
\newcommand{\mhnoteinline}[1]{\todo[inline,backgroundcolor=blue!20]{{\bf MH:}~#1}}
\renewcommand{\lrnote}[1]{}
\renewcommand{\plgnote}[1]{}
\renewcommand{\mhnote}[1]{}
\renewcommand{\lrnoteinline}[1]{}
\renewcommand{\plgnoteinline}[1]{}
\renewcommand{\mhnoteinline}[1]{}
\renewcommand{\lrnoteoffline}[1]{}
\renewcommand{\plgnoteoffline}[1]{}
\renewcommand{\mhnoteoffline}[1]{}
\begin{document}

\journalname{STTT}

\title{Abstractions of Sequences, Functions and Operators}

\author{
Louis Rustenholz$^{1,3,(\textrm{\Letter})}$\orcidlink{0000-0002-1599-2431},
Pedro Lopez-Garcia$^{2,3}$\orcidlink{0000-0002-1092-2071}, and
Manuel V. Hermenegildo$^{1,3}$\orcidlink{0000-0002-7583-323X}}

%% \affil{
%% $^{1}$ Universidad Polit\'{e}cnica de Madrid (UPM), Madrid, Spain \\
%% $^{2}$ Spanish Council for Scientific Research (CSIC), Madrid, Spain\\
%% $^{3}$ IMDEA Software Institute, Pozuelo de Alarc\'{o}n, Spain}
\institute{
$^{1}$ Universidad Polit\'{e}cnica de Madrid (UPM), Madrid, Spain \\
$^{2}$ Spanish Council for Scientific Research (CSIC), Madrid, Spain\\
$^{3}$ IMDEA Software Institute, Pozuelo de Alarc\'{o}n, Spain}

\authorrunning{L. Rustenholz et al.}

\thanks{The authors acknowledge funding XXX}

\maketitle

\begin{abstract}
%\mhnoteoffline{An attempt to shorten a bit the abstract. Please do check, I
%  may have messed up something.}
% LR: thanks! Looks good to me (226 words).
% LR: I tried a minor pass for clarity.
%     Currently 216 words, keywords fit in first half page.
%
We present theoretical and practical results on the order
theory of lattices of functions, focusing on Galois connections that
%% abstract functions and sets of functions -- also known as
%% \emph{higher-order abstract interpretation}.
abstract (sets of) functions -- a topic known as
\emph{higher-order abstract interpretation}.

We are motivated by the challenge of inferring closed-form bounds on
functions which are defined recursively, i.e. as the fixed point of an
operator or, equivalently, as the solution to a functional equation.
This has multiple applications in program analysis (e.g. cost analysis,
loop acceleration, declarative language analysis) and in hybrid
systems governed by differential equations.

%% We are motivated by the challenge of inferring closed-form bounds for
%% recursively defined functions, i.e., as the fixed point of an operator
%% or, equivalently, the solution to a functional equation. This has
%% multiple applications in static program analysis (e.g., cost analysis,
%% loop summarisation, declarative language analysis) and in hybrid
%% systems governed by differential equations.
%
Our main contribution is a new family of constraint-based abstract
domains for abstracting numerical functions, % , which we call
\emph{\boundariesbound domains}, which abstract a function $f$ by a
conjunction of bounds from a preselected set of boundary functions.
%(e.g. bounded-degree polynomials or products with exponentials). This
%(e.g. bounded-degree polynomials or products of polynomials with
%exponentials). This enables
They allow inferring
% enable the inference of
\emph{highly non-linear numerical invariants}, which
classical numerical abstract domains struggle with.
We uncover a \emph{convexity property} in the constraint space that
simplifies, and, in some cases, fully \emph{automates}, transfer
function design.

% Thanks to
%% A \emph{convexity property} in the constraint space simplifies and, in
%% some cases, fully \emph{automates} transfer function design.
%

We also introduce \emph{domain abstraction}, a
%construction
functor
that lifts
% can lift
%method to lift
arbitrary mappings in value space to Galois connections in function
space.
% into function space Galois connections.
This supports abstraction
% abstracting
from symbolic to numerical
functions (i.e. \emph{size abstraction}), and enables
%\emph{dimensionality reduction} in complex systems of equations.
dimensionality reduction of equations.

% Finally,
We base our constructions of transfer functions
% we compute transfer functions based
% We base our constructions % of concrete and abstract semantics
on a simple \emph{operator language}, starting with \emph{sequences},
%(monovariate functions on discrete domains)
and extending to more general \emph{functions}, including multivariate,
piecewise, and non-discrete domains.

\keywords{
  Static Analysis
  \and
  Abstract Interpretation
  \and
  Functional Equations
  \and
  Higher-order Analysis
  \and
  Non-linear Invariants
  \and
  Cost Analysis
  %\and
  %Recurrence Equations
% Relations
}

\end{abstract}

% \lrnote{The current version of the abstract contains many of the key
%   ideas, but is too long for an abstract (357 words). Should be
%   shortened, and the rest can go to the introduction (which may repeat
%   a lot of the abstract, but better linked with
%   background/context/related work, references, problem to solved,
%   before moving to list of contributions and then structure of the
%   paper.)  MH: Done.}

% TOC included only for discussion on paper's contents and structure
%\vspace{-3em}
\setcounter{tocdepth}{2}
%\tableofcontents

% Use only for hyperdata toc, not printed toc
%\setcounter{tocdepth}{2}

\section{Introduction}
\label{sec:intro}

Functions are fundamental  objects in program semantics and formal
descriptions of dynamical systems, and are thus of prime interest to
static analysis. However, classical abstract interpretation has often
focused on state-oriented perspectives, emphasising ``zeroth-order''
abstract domains, which describe sets of values.
As a result, properties and abstractions of lattices of
\emph{functions} -- a topic sometimes called \emph{higher-order
abstract interpretation}~\cite{cousot1994higher-short} -- have received
comparatively little attention.
% relatively little attention in comparison.
%
In this paper, we explore a perspective treating functions as
first-class objects, and we build abstract domains of abstract functions,
using functions as the basic building block.
They allow us to use abstract interpretation as a solving strategy to
infer closed-form bounds on the solutions to functional equations,
viewed as abstractions of programs and systems.

%% Cost analysis motivation, intro
% PLG. Shortest:
Our original motivation stems from \emph{static cost analysis}: the
automated inference of bounds on a program's resource consumption
%% Our original motivation for studying this question stems from
%% \emph{static cost analysis}, the problem of automated inference of
%% bounds on the resource consumption of a program
(e.g. execution time, memory, energy consumption).
%, given its source code and a cost model.
%In static cost analysis, unlike most works on worst-case execution
%time (WCET),
In static cost analysis, unlike classical worst-case execution time
(WCET) analysis, we wish to infer \emph{cost functions} that are
\emph{parametric} in some abstraction of program inputs (e.g.
numerical \emph{sizes}), and are correct both asymptotically and for
small inputs.
%
%% Hence, the central objects in static cost analysis are numerical
%% functions.
%
One established approach to static cost analysis, used in the static analyser
% reusing refs from SAS24
\ciaopp~\cite{granularity-shortest,caslog-shortest,low-bounds-ilps97-short,ciaopp-sas03-journal-scp-shortest,resource-iclp07-extrashortest,plai-resources-iclp14-shortest,gen-staticprofiling-iclp16-extrashortest,resource-verification-tplp18-extrashortest,mlrec-tplp2024-nourl,order-recsolv-sas24-nourl}
%~\cite{debray:sigplan90,low-bou-sas94,ciaopp-sas03-journal-scp,resource-iclp07,gen-staticprofiling-iclp16,resource-verification-tplp18}
%% and other similar
%% tools~\cite{Wegbreit75,DBLP:journals/toplas/0002AH12-short,AlbertAGP11a-short,montoya-phdthesis-short,loat-descr-22,Sinn17-loopus,kincaid2018,GieslKoat22},
% reusing refs from SAS24 -- under constraint recbased.
and other related tools,~\cite{Wegbreit75-short-plus,Le88-shortest,Rosendahl89-shortest,AlbertAGP11a-short,montoya-phdthesis-short,kincaid2018-shortest,LommenGiesl23-shortest}
is based on
automated extraction of recurrence equations, which may be viewed as
an \emph{abstraction} of the program itself.
%, from which all information irrelevant to cost have been abstracted
% away.
Solutions or bounds on the solutions to these equations then
provide information on the cost of the program.
%% Solving these equations, or at least finding bounds on their
%% solutions, %then
%% provides information on the cost of the program.

%% Functional equations are everywhere, intro
Of course, functional equations are ubiquitous in static analysis and
computer science in general, from symbolic semantic equations, to
numerical recurrence equations for cost analysis, and even
differential equations in dynamical systems.
%'
Thus, the work presented in this paper is applicable
% may be applied
in many contexts beyond cost analysis.

%% Higher-order abstract interpretation, rel.work., limits, motiv, contrib
In particular, higher-order abstract interpretation has recently
received a surge of attention in the static analysis of higher-order
declarative
languages~\cite{montagu-icfp2020-short,montagu-salto-prelim-ml23-short,valnet-ECOOP25-short},
where information on functions defined recursively must be inferred
and represented with abstract values.
%
%\plgnoteoffline{Shorter alternative: where recursively defined functions must be abstracted and
%  analysed.}
% LR: Ok if necessary, but I want to insist on the idea of creating
% abstract values of abstract functions, as a relevant kind of object
% to consider.
%
However, most recent works on this question build their abstract
domains of functions by falling back on classical \emph{relational}
numerical domains to encode abstract functions, such as the relational
liftings of~\cite{Bautista-FMSD2024-short} or the disjunctive relational
summaries of~\cite{valnet-ECOOP25-short}.
We argue that it is worthwhile to explore the design of
\emph{domain-specific} abstract domains of functions.
Our intuition is that functions $X\to Y$ may be simpler objects than
arbitrary %relations
sets
$\pp(X \times Y)$, at least for the families of
numerical functions arising from the programs we wish to approximate.
In particular, we show that it is possible to exploit the \emph{local
regularity} of such functions, and more
% most
importantly the \emph{simple recursive structure} used to define them
via \emph{operators} $\Phi\in\big((X\to Y)\to(X\to Y)\big)$.
In fact, while classical numerical domains struggle to express
% have difficulty expressing
more than affine properties (e.g., those of polyhedra),
% the affine properties of polyhedra,
our investigation leads
to \emph{non-linear} abstract domains of functions, called
\boundariesbounds.

%% Receq solvers (e.g. in cost analysis), rel.work., limits, motiv, contrib.
%% Position w.r.t. SAS24
The ideas presented in this paper build upon investigations
performed in the context of the cost analysis pipeline of the
\ciaopp system.
% PLG. Add: previously discussed?
%~\cite{ciaopp-sas03-journal-scp-shortest,mlrec-tplp2024-nourl}.
Most
explicitly, we build on the framework of~\cite{order-recsolv-sas24-nourl}, which uses
an equation-as-operator viewpoint (see Section~\ref{subsec:eq-as-op}).
The motivation of~\cite{rustenholz-aars-msc,mlrec-tplp2024-nourl,order-recsolv-sas24-nourl}
% the series of work~\cite{...,...,..}
originates from the limitations of state-of-the-art recurrence
solvers and cost analysis approaches.
%%%%%%%%%%%%% Copied from SAS24 %%%%%%%%%%%%%%%%%%%
%% to be reworked %%
%%%%%%%%%%%%%%%%%%%%%%%%%%%%%%%%%%%%%%%%%%%%%%%%%%%
%% First, general-purpose techniques are insufficient.
%% %
%% Since the numerical properties relevant to static cost analysis are
%% highly non-linear, we cannot simply use a general-purpose
%% abstract domain such as polyhedra, which would not directly support
%% programs with even only quadratic complexity.
%% %
%% Similarly, Computer Algebra Systems (CAS) typically focus on
%% monovariate \emph{difference} equations~\cite{Karr81-short} with very simple
%% recursive calls but complex %hypergeometric
%% coefficients, whereas the
%% opposite occurs in cost analysis, where coefficients are generally
%% simple but we are faced with multivariate equations, complex recursive
%% calls, conditional/non-linear/non-deterministic behaviour, etc.
%% %
%% % MH: give some credit to our potential reviewers :-)
%% Second, despite significant progress on specialised solvers, they
%% too do not support the breadth of
%% behaviours arising from programs.
%%%%%%%%%%%%%%%%%%%%%%%%%%%%%%%%%%%%%%%%%%%%%%%%%%%
%% Reworked version
%%%%%%%%%%%%%%%%%%%%%%%%%%%%%%%%%%%%%%%%%%%%%%%%%%%
First, general purpose abstract domains are insufficient, as the
numerical properties relevant to static cost analysis are highly
non-linear.
In contrast, symbolic computations techniques
% (e.g., those used in Computer Algebra Systems) ?
implemented in Computer Algebra Systems (CAS)
%may solve \emph{exactly} some particular classes of functional
%equations, and
can
% easily
infer extremely non-linear solutions, but only for very restricted
% for some special
classes of equations.
%
%However, the equations most studied in symbolic computation have very
Unfortunately, such equations
% However, the equations they support
have very different features from those that naturally arise in
abstractions of semantic equations.  Despite significant progress in
specialised solvers, they still
% too
do not support the breadth of behaviours encountered in real programs
% arising from programs
(see~\cite{mlrec-tplp2024-nourl,order-recsolv-sas24-nourl} for a
detailed discussion of these limitations).
%%%%%%%%%%%%%%%%%%%%%%%%%%%%%%%%%%%%%%%%%%%%%%%%%%%%

%%% Position w.r.t. SAS24
The clearest mismatch between (classical) symbolic computation and
static analysis is perhaps the former's focus on \emph{exact}
resolution of \emph{well-behaved} equations, instead of sound
\emph{approximate} resolution of \emph{general} equations.
%%%%%%%%%%%%%%%% Copied from 2.2, reworked %%%%%%%%%
The work of~\cite{order-recsolv-sas24-nourl} proposes to tackle
% attack
the latter problem with the tools of order theory.
It reframes
% rephrases
the bound inference problem into pre/postfixpoint search, suggests
various strategies (see Section~\ref{subsec:eq-as-op}), and sketches early
% mentions preliminary
ideas on abstract interpretation and \boundariesbounds domains.
However, the experimental evaluation
in~\cite{order-recsolv-sas24-nourl} focused on a \emph{dynamic}
instantiation of the proposed approach, based on constrained
optimisation, which requires input/output samples.
A key motivation for this choice was practical.
% This choice was primarily motivated by practicality:
Optimisation-based methods are easier to prototype than abstract
interpretation techniques, which demand
% require
more design effort: for each template family $\boundaries$, one must 
% it is necessary to
design a fully-fledged abstract domain, including abstract operators,
transfer functions, and widening strategies.
% operators.
%
For these reasons, \boundariesbound domains were left as future work
in~\cite{order-recsolv-sas24-nourl}.
Nevertheless, abstract interpretation
% PLG. Shorter, undertood from the context?
% the abstract interpretation approach
has several advantages, including greater generalisability (e.g. to
continuous systems for which input/output samples may not be
available),
% it may not be possible to obtain input/output samples),
and the ability to provide
% fact that it provides
\emph{guarantees}, \emph{by
construction}, of always obtaining a postfixpoint (though possibly
an imprecise one).
% a very imprecise one).

%
%% Since then, a convexity property in constraint
%% space (Theorem~\ref{th:convexity-constraints}) has been discovered,
%% which facilitates abstract domain design by enabling some transfer
%% function synthesis.
%
The present paper reports new results in this direction, serving as a
stepping stone towards further
% broader?
applications and generalisations.

\vspace{2pt}
Motivated by these challenges, we provide the following contributions.
\vspace{-2pt}
\begin{itemize}[leftmargin=*] %,topsep=0pt]
  \item In Section~\ref{sec:domain-abstraction}, we
    present
    % LR: high repetition of 'provide'
    % provide
    \emph{theoretical results} on lattices of functions:
    %, focusing on
    %constructors of Galois connections between them.
    %
    technical tools applicable to general higher-order
    abstract interpretation.
    We use them to provide formal foundations to constructions
    known in cost analysis, laying the groundwork for practical
    construction of sound and accurate %cost
    analysers.
  \item In particular, we introduce \textbf{domain abstraction}
    (Theorem~\ref{th:domain-abstraction}), a functor that lifts
    mappings in value space to Galois connections in function space.
    %can lift
    %%arbitrary mapping in value space to Galois connections in
    %%functions space.
    %
    We exemplify it for \emph{size abstraction} (i.e. symbolic to
    numerical abstraction) and \emph{dimensionality reduction}.
  \item In Section~\ref{sec:boundariesbound-domains}, we introduce
    \emph{\boundariesbound domains}, the main contribution of this paper,
    which are a class of \textbf{non-linear constraint-based abstract domains}
    parameterised by a set $\boundaries$.
  \item  \boundariesbounds are put in practice in
    Section~\ref{sec:abstr-dom-sequences}, where we build
    %The theoretical notion of \boundariesbounds is then put in
    %pratice in Section~\ref{sec:abstr-dom-sequences}, where we build
    %% fully-fledged
    %% LR: fully-fledged is perhaps saying too much, for the sketch
    %%     and lack of implementation detail we give.
    abstract domains and their transfer functions, using polynomial bounds
    (Section~\ref{subsec:poly-bounds}) and product of polynomials with
    exponentials (Section~\ref{subsec:exp-poly-bounds}).
    These examples are given in the simple case of \emph{sequences},
    %i.e. monovariate functions with discrete domain,
    using a minimalistic operator language. % introduced in
    % Section~\ref{subsec:seq-language}.
  \item We uncover a \textbf{convexity property in constraint space}
    (Theorem~\ref{th:convexity-constraints}), that facilitates the
    design of abstract transfer functions, in some cases enabling
    %allowing for fully
    automated \textbf{transfer function synthesis} (see
    Section~\ref{sec:convexity-and-synthesis}).
  \item Finally, we discuss how \boundariesbound domains can be
    extended
    %to extend the notion of \boundariesbounds
    to multivariate, piecewise functions on less
    minimalistic languages in Section~\ref{sec:abstr-dom-functions}.
\end{itemize}
\lrnoteoffline{Should be p.2}

\section{Preliminaries}
\label{sec:preliminaries}

\subsection{Notation}

$\nn$, $\rr_+$ and $\bb=\{\bot,\top\}$ represent the sets
of non-negative integers, non-negative real numbers, and Booleans,
respectively.
For convenience, we use a function $\ite$ defined by
$\ite(\top,x,y)=x$ and $\ite(\bot,x,y)=y$.
We extend the real numbers equipped with their usual order structure
$(\rr,\leq)$ into a complete lattice $(\rri,\leq_{\rri})$, where
$\rri:=\{-\infty\}\,\cup\,\rr\,\cup\,\{+\infty\}$.
$(\ii(\rr),\sqsubseteq)$ is the lattice of intervals of real numbers.
$\End(A):= A \to A$ is the set of functions from $A$ to itself, and
$\End_\leq(L)$ is the set of monotone (i.e. order-preserving)
functions from $L$ to itself.
Given a monotone operator $\Phi\in\End_\leq(L)$ on a lattice,
$\lfp \Phi$ and $\gfp \Phi$ denote the least and greatest fixpoints of
$\Phi$, respectively.
$\Fix(\Phi)$, $\PreFix(\Phi)$, and $\PostFix(\Phi)$
respectively
represent its set of fixpoints, %($\Phi f = f$),
prefixpoints
% ($\Phi f \geq f$),
($f \leq \Phi f$), and postfixpoints ($\Phi f \leq f$). %, respectively.

\subsection{Equations as Operators Viewpoint}
\label{subsec:eq-as-op}

The work presented in this paper builds upon the equation as operator
viewpoint developed in~\cite{order-recsolv-sas24-nourl}, as well as
further insights developed in the study of this order theory framework
of functional equations.

%and schematically illustrated in Fig.~\ref{fig:eq-as-op} and Example~\ref{ex:eq-as-op}.

The basic idea of this approach is to recall that a functional
equation with unknown $f:\dd\to L$ (for arbitrary sets $\dd$ and $L$),
may be equivalently described as a fixpoint problem
$$\forall \vn\in\dd,\,f(\vn)=\Phi(f)(\vn),$$
for some operator in function space $\Phi\in\End(\dd\to L)$,
so that $\Phi f = f$ if and only if $f$ is a solution to the equation.
The same observation applies to recursive definitions of a function of
type $\dd\to L$, where the operator $\Phi$ is used to define $f(\vn)$
in terms of other values of $f$ (the recursive calls).

\begin{example}
  \label{ex:eq-as-op}

  Consider the functional equation~(\ref{eq:eq-as-op-ex1}) below, on
  the unknown $f:\nn^2\to\rr$.
  \begin{equation}\footnotesize
    \label{eq:eq-as-op-ex1}
    f(n, c) = %\dboxed{\!
    \begin{cases}
        f(n-1, 0)   + n + 300 \!& \!\!\text{if } n>0 \text{ and } c\geq100,\\%[-2pt]
        f(n-1, c+1) + n       \!& \!\!\text{if } n>0 \text{ and } c<100,\\%[-2pt]
        c                     \!& \!\!\text{if } n=0,
    \end{cases}%}
  \end{equation}

  For any function $f:\nn^2\to\rr$ (possibly not a solution), the right
  hand side of~(\ref{eq:eq-as-op-ex1}) is also a function in
  $\nn^2\to\rr$: this defines an operator
  \begin{equation*}
  \begin{aligned}\footnotesize
    \Phi &\,: (\nn^2 \to \rr) \to (\nn^2 \to \rr)\\
           &\,f \mapsto \!\left(\!(n,c) \mapsto \!{\scriptsize\begin{cases}
         f(n-1, 0)   + n + 300 \!& \!\!\text{if } n>0 \text{ and } c\geq100\\%[-2pt]
         f(n-1, c+1) + n       \!& \!\!\text{if } n>0 \text{ and } c<100\\%[-2pt]
         c                     \!& \!\!\text{if } n=0
        \end{cases}}\!\right)\!.
  \end{aligned}
  \end{equation*}
  Searching for a solution $f$ of~(\ref{eq:eq-as-op-ex1}) is now
  equivalent to searching for an $f$ such that $f=\Phi f$, i.e.
  $f\in\Fix(\Phi)$.
\end{example}

\plgnote{Give a concrete example, e.g. given $g(n,c) = n^2 + c$,
  $\Phi(g)(n, c) = \ldots$ ? (if time/space allows).}

%\phantom{.}
%It is currently a bit hard to see where the xample finish.

The next step in the philosophy of this approach is to consider that
the operator $\Phi$ can provide valuable information on the solutions
$\fsol\in\Fix(\Phi)$ even when it is applied \emph{outside} of these
fixpoints.
A fundamental instance of this idea
% occurs
arises
% whenever we can introduce some
when we employ order theory to define pre- and postfixpoints of
$\Phi$, which then yield
% provide
lower and upper bounds on its solutions.
%of $\Phi$.

%% Indeed, assume that
%% %
%% In particular, whenever we can construct an order $\leq$ in our
%% function space which makes $(\dd\to L,\leq)$ a complete lattice and
%% for which $\Phi$ is a \emph{monotone} operator, we can apply the
%% Knaster-Tarski theorem, and obtain the following corollaries~\cite{order-recsolv-sas24-nourl}.

\begin{definition}[Monotone Equation~\cite{order-recsolv-sas24-nourl}]\-
  Let $(\dd\to L,\leq)$ be a complete lattice of functions.
  A \emph{monotone equation} on $(\dd\to L,\leq)$ is simply a monotone
  operator $\Phi\in\End_\leq(\dd\to L)$.
  We respectively call $\lfp \Phi$ and $\gfp \Phi$
% $\lfp f$ and $\gfp f$
  the \emph{least} and \emph{greatest solutions} to the equation.
  If $\lfp \Phi = \gfp \Phi$,
% $\lfp f = \gfp f$,
  the unique fixpoint of $\Phi$ is simply called \emph{the} solution
  to the equation.
\end{definition}

For such equations, we can apply the Knaster-Tarski theorem and
obtain the following corollaries~\cite{order-recsolv-sas24-nourl}.

\begin{proposition} %[\cite{order-recsolv-sas24-nourl}]
   Monotone equations admit a solution.
\end{proposition}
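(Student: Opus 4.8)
The plan is to recognise this statement as an immediate corollary of the Knaster--Tarski fixpoint theorem, exactly as the surrounding text anticipates. By the definition of a monotone equation we are handed precisely the two hypotheses that theorem requires: $(\dd\to L,\leq)$ is a complete lattice, and $\Phi\in\End_\leq(\dd\to L)$ is monotone. So the bulk of the argument consists of invoking Knaster--Tarski and extracting the existence of a fixpoint, any of which serves as a solution.

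First I would recall the relevant form of the theorem: a monotone self-map on a complete lattice has a nonempty set of fixpoints, which itself forms a complete lattice. In particular the two extremal fixpoints exist and admit the explicit descriptions $\lfp\Phi=\bigwedge\PostFix(\Phi)$ and $\gfp\Phi=\bigvee\PreFix(\Phi)$, using the paper's convention $\PostFix(\Phi)=\{f:\Phi f\le f\}$ and $\PreFix(\Phi)=\{f:f\le\Phi f\}$. Either extremal fixpoint is a solution of the monotone equation, so $\Fix(\Phi)\neq\varnothing$ and the claim follows.

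If one preferred not to cite the theorem as a black box, I would verify its single nontrivial ingredient, namely that the candidate $\lfp\Phi:=\bigwedge\PostFix(\Phi)$ is genuinely a fixpoint. The meet exists because the lattice is complete, and $\PostFix(\Phi)$ is nonempty since it contains the top element $\top$. For any $f\in\PostFix(\Phi)$ we have $\lfp\Phi\le f$, hence by monotonicity $\Phi(\lfp\Phi)\le\Phi f\le f$; taking the meet over all such $f$ gives $\Phi(\lfp\Phi)\le\lfp\Phi$, so $\lfp\Phi$ is itself a postfixpoint. Applying $\Phi$ once more and using monotonicity shows $\Phi(\lfp\Phi)$ is again a postfixpoint, whence $\lfp\Phi\le\Phi(\lfp\Phi)$; the two inequalities combine to $\Phi(\lfp\Phi)=\lfp\Phi$.

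There is essentially no obstacle here: the result is a textbook corollary, and the only point demanding any care is bookkeeping with the paper's pre/postfixpoint naming (which is the reverse of some references). The reason the statement is worth isolating is not its difficulty but its role, as it guarantees that the fixpoint-search reframing of functional equations is always well posed, giving the later bound-inference machinery a genuine target to approximate.
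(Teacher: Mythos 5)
Your proof is correct and matches the paper's approach exactly: the paper derives this proposition by directly invoking the Knaster--Tarski theorem (with no further written proof), which is precisely your first paragraph, and your optional unfolding of the $\bigwedge\PostFix(\Phi)$ argument is the standard verification of that theorem's key step, carried out correctly under the paper's pre/postfixpoint conventions.
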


\begin{theorem}
  \label{th:proof-principle}
  Let $\Phi\in\End_\leq(\dd\to L)$ be a monotone equation.
  For any $f:\dd\to L$,
  \begin{itemize}
    \item if $\Phi f \leq f$, i.e. if $f\in\PostFix(\Phi)$,
          then $\lfp \Phi \leq f$, i.e. $f$ is an upper bound on the least
          solution of $\Phi$,
    \item if $f \leq \Phi f$, i.e. if $f\in\PreFix(\Phi)$,
          then $f \leq \gfp \Phi$, i.e. $f$ is a lower bound on the greatest
          solution of $\Phi$.
  \end{itemize}
\end{theorem}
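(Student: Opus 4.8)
The plan is to derive both statements as direct corollaries of the Knaster--Tarski theorem, which is available since $\Phi\in\End_\leq(\dd\to L)$ is a monotone operator on a complete lattice. Recall that Knaster--Tarski gives $\lfp\Phi = \bigwedge\PostFix(\Phi)$ and $\gfp\Phi = \bigvee\PreFix(\Phi)$; that is, the least fixpoint is the meet of all postfixpoints and the greatest fixpoint is the join of all prefixpoints. Both claims of the theorem then follow almost immediately.

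For the first bullet, suppose $f\in\PostFix(\Phi)$, i.e. $\Phi f \leq f$. First I would note that $f$ is an element of the set $\PostFix(\Phi)$ over which we take the meet. Since the meet of a set is a lower bound on each of its members, we have $\bigwedge\PostFix(\Phi) \leq f$, and by the Knaster--Tarski characterisation $\lfp\Phi = \bigwedge\PostFix(\Phi)$, so $\lfp\Phi \leq f$. This is exactly the assertion that every postfixpoint is an upper bound on the least solution. The second bullet is the order-dual: if $f\in\PreFix(\Phi)$, i.e. $f\leq\Phi f$, then $f$ is a member of $\PreFix(\Phi)$, and since the join is an upper bound on each member, $f \leq \bigvee\PreFix(\Phi) = \gfp\Phi$, giving $f\leq\gfp\Phi$.

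\emph{Remark on a more self-contained route.} If one prefers not to invoke the full fixpoint-as-meet/join characterisation, each bullet can be proved directly. For the first, I would verify that the set $P = \{g : \Phi g \leq g\}$ of postfixpoints contains $\lfp\Phi$ (indeed $\lfp\Phi$ is a fixpoint, hence a postfixpoint) and that $\lfp\Phi$ is in fact the \emph{least} element of $P$: for any $g\in P$, monotonicity of $\Phi$ together with the Knaster--Tarski construction of $\lfp\Phi$ as $\bigwedge P$ forces $\lfp\Phi \leq g$. Applying this to $g = f$ yields the claim. The second bullet follows dually, working with $\PreFix(\Phi)$ and $\gfp\Phi = \bigvee\PreFix(\Phi)$.

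The argument is short and the only genuine prerequisite is the complete-lattice structure of $(\dd\to L,\leq)$, which is stated as a hypothesis in the definition of a monotone equation and guarantees the existence of the relevant meets and joins (and hence of $\lfp\Phi$ and $\gfp\Phi$ themselves). There is no real obstacle here: the main subtlety worth flagging is simply that $\leq$ denotes the pointwise order on the function lattice, so each inequality such as $\Phi f \leq f$ is to be read as $\Phi f(\vn) \leq f(\vn)$ for all $\vn\in\dd$; the order-theoretic reasoning above is oblivious to this and applies verbatim.
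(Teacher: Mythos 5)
Your proof is correct and matches the paper's approach: the paper presents this theorem precisely as a corollary of the Knaster--Tarski theorem, and your argument via $\lfp\Phi = \bigwedge\PostFix(\Phi)$ and $\gfp\Phi = \bigvee\PreFix(\Phi)$ is exactly that derivation. (One small note: your ``more self-contained'' remark still invokes the Knaster--Tarski construction of $\lfp\Phi$ as a meet, so it is not genuinely an alternative route, but this does not affect correctness.)
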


\begin{remark}
  In particular, notice that if $\Phi$ admits a unique solution
  $\fsol$, then any postfixpoint is an upper bound on $\fsol$, and any
  prefixpoint is a lower bound on $\fsol$.
  This is always the case for \emph{terminating equations} (see~\cite{order-recsolv-preprint-extended}).
\end{remark}

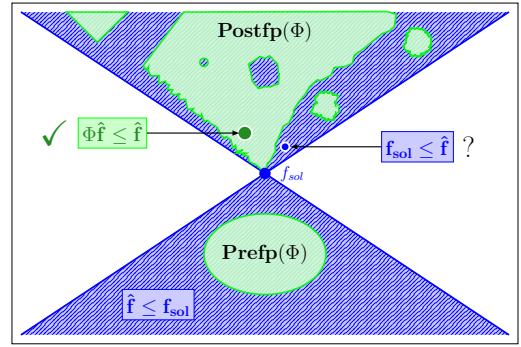
\begin{figure}\centering

  \begin{center}\footnotesize
    Equations $\leftrightarrow$ Operators
    $\quad\!\!$
    Sol. $\leftrightarrow$ Fixp.
    $\quad\!\!$
    Bounds $\leftarrow$ Pre/Postfp
  \end{center}
  \vspace{-.5em}
  \resizebox{.8\linewidth}{!}{%
    \tikzstyle{arrow} = [draw,thick,->]
    \begin{tikzpicture}[framed, every node/.style={align=center}, >=latex, scale=1,
                        x=1.5cm]

      \node (fsol) at (0,0) {};

      % Ub cone, lb cone
      \begin{scope}[transparency group]
        \begin{scope}[blend mode=multiply]
            \draw[-, very thick, blue] (fsol) -- ++(4,4)   node (ub-r) {};
            \draw[-, very thick, blue] (fsol) -- ++(-4,4)  node (ub-l) {};
            \draw[-, very thick, blue] (fsol) -- ++(4,-4)  node (lb-r) {};
            \draw[-, very thick, blue] (fsol) -- ++(-4,-4) node (lb-l) {};
            \fill[blue!20] (fsol.center)--(ub-r.center)--(ub-l.center);
            \fill[pattern=north east lines, pattern color=blue] (fsol.center)--(ub-r.center)--(ub-l.center);
            \fill[blue!20] (fsol.center)--(lb-r.center)--(lb-l.center);
            \fill[pattern=north east lines, pattern color=blue] (fsol.center)--(lb-r.center)--(lb-l.center);
        \end{scope}
      %\end{scope}
      % Postfixp cone, prefixp cone
      %\begin{scope}[transparency group]
        \begin{scope}[blend mode=normal,opacity=.9]
            %\draw plot[domain=0:350, smooth cycle, very thick, green] (\x:1+rnd*0.5);

            %% Upper
            %\clip (0.5,1.5) circle[radius=1.5];
            \node (pofp-r) at ($(fsol)+(2,4)$) {};
            \node (pofp-midr) at ($(fsol)!.66!(pofp-r)$) {};
            \node (pofp-l) at ($(fsol)+(-3,4)$) {};
            \node (pofp-midl) at ($(fsol)!0.666!(pofp-l)$) {}; %(-2,2.666)
            \node (pofp-midl2) at ($(fsol)!.75!(pofp-l)$) {}; %(-2.25, 3)
            \node (up-midl) at (-1.333,4) {};
            \node (up-midl2) at (-1.75,4) {};

            \draw [green,rounded corners=1mm, fill=green!20] (1,1.66) \irregularcircle{.33cm}{1mm};
            \draw [green,rounded corners=1mm, fill=green!20] (2.5,3.25) \irregularcircle{.33cm}{.5mm};
            \draw[-, very thick, green, fill=green!20, even odd rule]
               (fsol.center)
               decorate [ decoration={random steps, segment length=2mm} ] {
                 to[out=70,in=-160] (pofp-midr.center)
                 to[out=50,in=-40] (pofp-r.center)
               }
               -- (up-midl.center)
               -- (pofp-midl.center)
               decorate [ decoration={random steps, segment length=.75mm} ] {
                 -- (fsol.center)
               }
               % Negative
               decorate [ decoration={random steps, segment length=2mm} ] {
                 (0,2.5) ellipse (.33cm and .33cm)
               }
               (-1,2.75) ellipse (.1cm and .1cm)
               % Extra
               ($(pofp-midl2.center)+(-.5,.25)$) -- ($(up-midl2.center)+(-.5,0)$) -- ($(pofp-l.center)+(-.25,0)$) -- ($(pofp-midl2.center)+(-.5,.25)$)
               ;

            %% Lower
            %\node (prfp-r) at ($(fsol)+(3,-4)$) {};
            %\node (prfp-l) at ($(fsol)+(-3,-4)$) {};
            \draw[-, very thick, green, fill=green!20, even odd rule]
               (0,-2) ellipse (1.5cm and 1cm);

            %% \draw[-, very thick, green, fill=green!20, even odd rule]
            %%    (prfp-l.center)
            %%    %decorate [ decoration={random steps, segment length=2mm} ] {
            %%      to[out=50,in=130] (prfp-r.center)
            %%    %}
            %%    -- (prfp-l.center)
            %%    ;

        \end{scope}
      \end{scope}

      \node[shape=circle, fill=blue, label={[text=blue]right:{\large $\;\fsol$}}] (fsol-circ) at (fsol) {};
      %\node[draw, shape=circle, fill=blue, label={[text=blue]right:$\fsol$}] (fsol) at (0,0) {};
      \node[draw=blue, fill=blue!20, text=blue] at (-1.75,-3.25) {\Large $\mathbf{\candf \leq \fsol}$};
      %\node[text=black] at (0,-2) {\Large $\mathbf{f \leq \Phi f}$};
      %\node[text=black] at (0,-2) {\Large $\mathbf{\PreFix(\Phi)}$};
      \node[text=black] at (0,-2) {\Large $\mathbf{Prefp(\Phi)}$};

      \node[draw=blue, fill=blue!20, text=blue] (ubs) at (2.5,.66) {\Large $\mathbf{\fsol \leq \candf}$};
      \node[right= 1mm of ubs] {\huge ?};
      \draw[draw=white, fill=blue, very thick] (.33,.66) circle (3pt) node (ub) {};
      \draw[arrow] (ubs) -- (ub);

      \node[draw=green, fill=green!20, text=ForestGreen] (postfixps) at (-2.5,1) {\Large $\mathbf{\Phi \candf \leq \candf}$};
      \node[left= 1mm of postfixps] {\textcolor{ForestGreen}{\Huge \checkmark}};
      \draw[draw=white, fill=ForestGreen, very thick] (-.33,1) circle (5pt) node (postfixp) {};
      \draw[arrow] (postfixps) -- (postfixp);

      %\node[text=black] at (1,3.5) {\Large $\mathbf{\PostFix(\Phi)}$};
      \node[text=black] at (0,3.5) {\Large $\mathbf{Postfp(\Phi)}$};

    \end{tikzpicture}%
  }

  \small
  \caption{Illustration of the function space explored under the
    equation-as-operator viewpoint.
    While exploring the space, it is not known
%   During space exploration, it is unknown how to check
    whether a point lies within the set of bounds on $\fsol$ (blue),
%   we are in the set of bounds on $\fsol$ (blue),
    but it is possible to check whether it is a pre- or postfixpoint
    (green).
%   but we know how to check whether we are a pre/postfixpoint
%   (green).
%
    %
    %% As shown in Example~\ref{ex:eq-as-op}, an equation may be viewed
%%     as an operator $\Phi$ in function space.
%%     %
%%     While checking that a candidate bound
%%     is a bound of the unknown solution to an equation is a priori very
%%     difficult, it is easier to check whether a candidate is a
%%     pre/postfixpoint.
%
%% Schematic representation of the equation as operator
%%     viewpoint: to check that a candidate $f$ is a bound on the unknown
%%     solution to an equation, it is sufficient to check that it is a
%%     pre/postfixpoint of the corresponding operator $\Phi$. This
%%     reduces the problem of sound approximate equation solving to a
%%     pre/postfixpoint search problem, for which many strategies are
%%     known.%
%%     Under termination assumption for the equation, there is a single
%%     solution, $\fsol=\lfp\Phi=\gfp\Phi$.
    }
  \label{fig:eq-as-op}
\end{figure}

A priori, given an equation $\Phi$, automatically checking whether an
arbitrary \emph{candidate function} $\candf$ is a bound on the unknown
solution is a difficult problem.
When Theorem~\ref{th:proof-principle} applies, it provides a
significantly simpler sufficient condition for
verifying whether a guessed bound is correct:
% checking we have guessed a bound on the solution:
we only\footnote{Function comparison is
still a non-trivial problem, but complete methods exist for several
classes of functions, and efficient incomplete methods are available
for many others.} need to perform function comparison
$\Phi\candf \leq^{(?)} \candf$.

To \emph{discover} bounds on the solution to the equation, it is
therefore
% then
sufficient to perform a pre-/post-fixpoint search -- a task for which
a wide variety of powerful methods can be designed.
New opportunities arise when dealing with \emph{numerical} functional
equations, %with $f:\dd\to\rri$,
some of which are discussed in~\cite{order-recsolv-sas24-nourl}. That
work explores several pre-/post-fixpoint search strategies, discusses
trade-offs, and proposes viewing
% to think of
the search \emph{geometrically},
as an exploration
% problem
in function space (see Fig.~\ref{fig:eq-as-op}).
%% Results are given on the geometry
%% of $\PostFix(\Phi)$ and $\PreFix(\Phi)$ for some classes of $\Phi$.
%% %
For example, it is possible to exploit the total order structure of
$\rr$, the vector space structure of $\dd\to\rr$, and the availability
of procedures for quantifier elimination or numerical optimisation.
%% , and
%% some results are given on the geometry of $\PostFix(\Phi)$ and
%% $\PreFix(\Phi)$ for some classes of $\Phi$.

For its experimental evaluation,~\cite{order-recsolv-sas24-nourl}
focused on an instantiation of the proposed order-theoretical
% search strategies
framework
%
%\plgnote{This is the first appearance of ``instantiation,'' so that a
%  short explanation is added.}
% LR -> OK
%
based on constrained optimisation.
%that is relatively easy to prototype.
%
This
%dynamic
method is only applicable when we can produce \emph{input/output
samples} of the form $\{(\vn,\fsol(\vn))\,|\,\vn\in\inputset\}$, for
some finite set of inputs $\inputset\subseteq\dd$.  This is feasible,
for instance, for \emph{recurrence} equations where $\dd=\nn^k$ is
discrete and the equation can be executed like
% as if it was
a program.

%% to \emph{recurrence} equations,
%% %This dynamic method is only possible
%% %here because \cite{order-recsolv-sas24-nourl} focuses on \emph{recurrence}
%% %equations,
%% where unknowns are of shape $f:\nn^k\to\rri$, so that it is
%% possible to produce finite \emph{input/output samples}
%% $\{(\vn,\fsol(\vn))\,|\,\vn\in\inputset\}$
%% for $\inputset\subseteq\dd$ a finite set of inputs, by
%% \emph{executing} the equation as if it was a program.

%\phantom{.}

From the point of view of abstract interpretation, the most
%natural methodology for pre/postfixpoint search is to use instead
natural pre/postfixpoint search methodology is to use instead
\emph{abstract Kleene iteration}, with \emph{abstract functions},
\emph{abstract equations} and \emph{abstract operators}.
Tools enabling this alternative strategy are discussed in the current
paper, which complements~\cite{order-recsolv-sas24-nourl}.

\subsection{Orders in function space}

Given an equation $\Phi$, what order can we impose on our set of
functions to make Theorem~\ref{th:proof-principle} usable and useful?

In this paper, we always use the most natural order structure on
sets of functions: the \emph{pointwise order}. It gives us a
notion of upper and lower bounds that is useful for program analysis,
as well as an intuitive notion of function comparison.
% Although we may overapproximate it when we cannot compute it.

\begin{definition}[Pointwise lattice structure]
   For $X$ a set and $(L,\leq,\vee,\wedge)$ a complete lattice, the
   set of functions $X\to L$
   %has a
   can be equipped with a
% very
   natural lattice structure $(X\to L, \pleq, \dot{\vee}, \dot{\wedge})$, defined
%  given
   by

\begingroup
  \abovedisplayskip=0pt
  \belowdisplayskip=2pt
   \begin{center}
   \begin{minipage}[c]{.8\linewidth}
   \begin{equation*}\centering
     \begin{gathered}
     f \pleq g \overset{\Delta}{\iff} \forall x\in X,\, f(x) \leq g(x)\\
     f \,\dot{\vee}\,   g \overset{\Delta}{=}  \big(x \mapsto f(x) \vee g(x)\big),
     %\hspace{5mm}
     \quad\,\dot{\bot}\,     \overset{\Delta}{=}  \big(x \mapsto \bot\big)\\
     f \,\dot{\wedge}\,   g \overset{\Delta}{=}  \big(x \mapsto f(x) \wedge g(x)\big),
     %\hspace{5mm}
     \quad\,\dot{\top}\,     \overset{\Delta}{=}  \big(x \mapsto \top\big)\\
     \end{gathered}
   \end{equation*}
   \end{minipage}
   \end{center}
\endgroup
\end{definition}

This order structure, however, appears in several
% a few
different flavours, depending on the order used in
% coming from
the codomain
$(L,\leq)$.
We are interested in functions arising in numerical contexts
% coming from numerical settings,
where three main types
% examples
of order appear
% occur
in the literature and this work:
the usual order on the real numbers, set inclusion orders
%set-inclusion derivatives
%\plgnote{set-inclusion orders?}
(e.g. interval inclusion), and the
flat order.

\begin{example}[Flat order, denotational semantics]
  It is common to see pointwise orders in classical denotational
  semantics, e.g. to assign meaning to recursively defined functions in
  the Programming language for Computable Functions (PCF)~\cite{Plotkin77-LCF-short,Sco82,curien-pcf-chapter-98}.
  In such cases, to give a semantic interpretation to
  %to give meaning \plgnote{to give meaning -> to
  %interpret?}
  a function recursively defined by an operator
  $\Phi\in\End(\dd\to X)$, we start from the discrete order $(X,=)$
  (where all distinct elements are incomparable), which is then extended to a
  Scott domain $(X_\bot,\leq_{\text{flat}}$) by adjoining a new bottom
  element below all others, $X_\bot=\{\bot\}\cup X$.
  Elements of $\dd\to X_\bot$ can then be thought of
  %\plgnote{thought of -> interpreted?}
  as \emph{partial
  functions}, where the value $\bot$ represents
  \texttt{undefined}. $\Phi$ is then extended into a monotone
  operator $\widetilde{\Phi}$ on the pointwise-ordered
  $(\dd\to X_\bot,\dot{\leq}_{\text{flat}})$, and the recursive
  definition is interpreted via the function $\lfp \widetilde{\Phi}$,
  which may e.g. be computed by iterating from $\dot{\bot}$, the
  nowhere-defined function.
  If a complete lattice is desired,
% one wishes to use a complete lattice,
  a top element can be added to obtain the \emph{flat lattice}
   $\{\bot\}\cup X\cup\{\top\}$, e.g.  to represent conflicting
   definitions. The flat lattice is also used in the context of
   constant propagation~\cite{RedDragon-short}.
\end{example}

% TODO: replace order-recsolv-preprint by something that says
% "extended versions". DONE.
\begin{remark}[Termination]
  We use an abstraction of the pointwise flat lattice to the pointwise
  Boolean lattice (on $\bb=\{\bot,\top\}$) to define termination of
  equations, via
  $\alpha:\{\bot\}\cup X\cup\{\top\}\to \bb$, where $\alpha(x)=\top$ for
  all $x\in X$. In this abstraction, $\bot$ represents
  \texttt{undefined}, and $\top$ represents \texttt{defined}.
  If $\lfp \Phi^\sharp=\dot{\top}$ for the abstract $\Phi^\sharp\in\End_{\dot{\leq}_{\bb}}(\dd\to\bb)$,
  this guarantees that there is a unique
% single
  solution $\fsol = \lfp \Phi = \gfp \Phi$
  for the concrete equation $\Phi\in\End_{\dot{\leq}_L}(\dd\to L)$ (see~\cite{order-recsolv-preprint-extended}).
\end{remark}

\begin{example}[Pointwise inclusion, non-relational abstract domains]
  Another classical flavour of pointwise orders comes from
  non-relational domains in abstract
  interpretation~\cite{MineTutorialBook2017-short}.
  %(see~\cite{Cousot21-book}, chap.~$28$).
  % Chapter "cartesian abstractions".
  % I do not find a better / more standard ref right now.
  %~\cite{somethingfornonrel??}.
  %
  In the analysis of an imperative program, with a finite set of
  real-valued program variables $\mathcal{V}$,
  \emph{value} abstract domains are defined via a Galois connection
  $(D^\sharp,\sqsubseteq^\sharp)\galois{\alpha}{\gamma}(\pp(\rr),\subseteq)$,
  where $\sqsubseteq^\sharp$ is thought of
  %\plgnote{thought of -> interpreted?}
  as (an abstraction of) the
  \emph{inclusion order}.
  This abstract domain is then lifted \emph{pointwise} to a non-relational
  \emph{memory state} abstract domain $(\mathcal{V}\to D^\sharp,\dot{\sqsubseteq}^\sharp)$.
\end{example}

A defining feature of our approach is the use of pointwise orders on
\emph{numerical functions}, i.e. constructed from numerical orders
$(\rr,\leq)$ and on an infinite domain $\dd$ thought of as classical
input values (not merely program points, labels, or variable names in $\mathcal{V}$).

Some simple examples are given below. In each case,
% In the following examples,
we also provide the value of $\dot{\bot}$, which will serve as the
starting point for abstract iteration.

\begin{example}[Numerical functions]
    %\begin{minipage}[c][3em]{\textwidth}
      In $(\dd\to\rri, \pleq)$,
      the join and meet operations are respectively the pointwise
      $\max$ and pointwise $\min$. %\\
      We have $\dot{\bot} = x\mapsto -\infty$.
    %\end{minipage}
\end{example}

\begin{example}[Numerical functions (2)]
For simplicity, we sometimes use $(\dd\to\rri_+, \pleq)$,
   where $\rri_+:=\rr_+\cup\{+\infty\}$ and $\dot{\bot} = x\mapsto 0$.
\end{example}

Equations $\Phi$ are not always monotone with respect to this
pointwise order on the real numbers, so Theorem~\ref{th:proof-principle} may
not be directly applicable. However, as we will see in the next
section, it is possible to recover monotonicity by moving to
\emph{interval-valued} functions, or more generally \emph{set-valued}
functions.

\begin{example}[Set-valued functions]
    % \begin{minipage}[c][2.5em]{\textwidth}
    In $(\dd\to\ii(\rr), \dot{\sqsubseteq}_\ii)$,
    join/meet are pointwise union/intersection,
    and $\dot{\bot} = x\mapsto \varnothing$.
    Similar examples can be given for $\dd\to\pp(\rr)$,
    or intermediate cases, using lifted classical numerical abstract
    domains (zones, polyhedra, etc.) to abstract $\dd\to\pp(\rr^k)$.
    % \end{minipage}
    %LRNOTE: orally, we can mention that non-standard orders could be used on ii,
    % and get closer to the numerical case
\end{example}

\begin{example}[With Kaucher intervals]
  \label{ex:kaucher}
  In practice, to improve reasoning over interval-valued functions and
  simplify their representation as pairs of real-valued functions, it
  can also be useful to use \emph{Kaucher intervals} $\ii_K(\rr)$. These form
  an ``algebraic completion'' of classical intervals
  $\ii(\rr)$, extended with \emph{improper intervals} $[a,b]$ where
  $a>b$~\cite{Kaucher80,ModalItvBook2014-short,Shary-nontraditv-2023-short}.
  Kaucher intervals replace the empty interval $\varnothing$ and improve algebraic
  properties with respect to addition and multiplication.
  More precisely, $\ii_K(\rr):=\,\rri\times\rri$, and
  $[a,b]\sqsubseteq_{\ii_K(\rr)} [c,d]$
  %\iffdef
  whenever
  $(a\geq_\rri c) \wedge (b \leq_\rri d)$, i.e.
  $\sqsubseteq_{\ii_K(\rr)} \,=\, \geq_\rri\times\leq_\rri$.
  The pointwise extension becomes
  $(\dd\to\ii_K(\rr), \dot{\sqsubseteq}_{\ii_K})$,
  with $\dot{\bot} = x\mapsto [+\infty,-\infty]$.
\end{example}

\section{Some Galois Connections in Function Space: the Case of Domain Abstraction}
\label{sec:domain-abstraction}

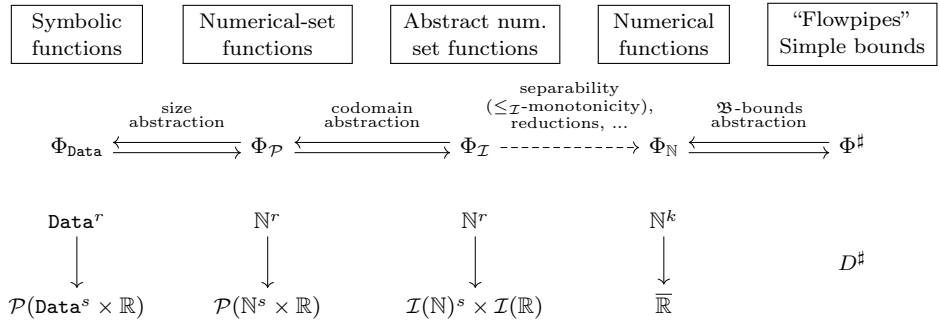
\begin{figure*}[ht]\centering
    \begin{tikzcd}[scale=1,ampersand replacement=\&,row sep=3mm,column sep=2.5mm]
           \fbox{\parbox{1.5cm}{\centering \footnotesize Symbolic functions}}
        \& \fbox{\parbox{2cm}{\centering \footnotesize Numerical-set functions}}
        \& \fbox{\parbox{2cm}{\centering \footnotesize Abstract num. set functions}}
        \& \color{black}{\fbox{\parbox{1.5cm}{\centering \footnotesize Numerical functions}}}
        \& \fbox{\parbox{2cm}{\centering \footnotesize ``Flowpipes''\\Simple bounds}}\\
        %%%
        %\&\&\&\&\\
        %%%
        %\&\&\&\&\\
        %%%
        \&\&\&\&\\[1mm]
        %%%
           \Phi_{\Data} \ar[r,shift right] %
        \& \Phi_{\pp}   \ar[r,shift right]
           \ar[l,shift right, "\substack{\text{size}\\\text{abstraction}\\[-.2em]\text{}}" above] %\\ \text{($\sim$domain abstraction)}}" above]
        \& \Phi_{\ii}   \ar[r,black,dashed, "\substack{\text{separability}\\\text{($\leq_\ii$-monotonicity),}\\\text{reductions, ...}\\\text{\phantom{.}}}" above]
           \ar[l,shift right, "\substack{\text{codomain}\\\text{abstraction}\\[-.2em]\text{}}" above]
        \&  \color{black}{\Phi_{\nn}}  \ar[r,shift right]
        \& \Phi^\sharp
           \ar[l,shift right, "\substack{\text{\boundariesbounds}\\\text{abstraction}\\[-.2em]\text{}}" above]
           %\ar[ll, dashed, shift right = -1, bend left=20]
           \\
        %%%
        \&\&\&\&\\[-1mm]
        %%%
           \Data^r \ar[dd] % We choose to say that Data are values,
                           % not memory states, for simplicity, since
                           % this paper does not discusses the
                           % subtleties of size abstractions with
                           % multiple measures.
        \& \nn^r \ar[dd]
        \& \nn^r \ar[dd]
        \& \color{black}{\nn^k}   \ar[black,dd] % gray?
        \& \\[-2mm]
        %%%
           %
        \&
        \&
        \&
        \& %\parbox{3cm}{\centering $D^\sharp=$\\$\pp_\downarrow(\boundaries)\times\pp_\uparrow(\boundaries)$} % product with downarrow?
           D^\sharp
           %\parbox{3.5cm}{\centering \phantom{.}\\$D^\sharp=\pp_\uparrow(\boundaries)$\\[-1pt]\phantom{.}}
        \\[-2mm]
        %%%
           \pp(\Data^s\times\rr)
        \& \pp(\nn^s\times\rr)
        \& \ii(\nn)^s\times\ii(\rr)
        \& \color{black}{\rri} % gray? Use \nni instead?
        \& \\
        %%%
      \end{tikzcd}
    \caption{From symbolic functions to manageable numerical functions.}
    \label{fig:abstr-landscape}
  \end{figure*}

Now that we have given a few flavours of the order structures that can
be imposed on sets of functions, we provide some convenient Galois
constructors and useful
% nice
properties on these lattices of functions.

All of these Galois connections have proven useful in our research on
cost analysis. Fig.~\ref{fig:abstr-landscape} provides an overview of
the abstraction landscape used in our cost analysis pipeline,
implemented in the \ciaopp system, as presented in this paper. In this
section, our goal is to provide technical tools that may assist
readers interested in conducting their own investigations in
higher-order abstract interpretation and the order theory of function
spaces. Additionally, we aim to offer a formal foundation for the
practical construction of sound and accurate cost analysers.

For this reason, we include in Section~\ref{subsec:classic-galois-fun}
some classical constructions to build Galois connections between lattices
of functions.
In Section~\ref{subsec:domain-abstr-galois}, we present a Galois
constructor for which we have not been able to find references in the
static analysis literature, and which is thus novel to the best of the
authors' knowledge: \textbf{domain abstraction}, which transforms a
simple arbitrary mapping in value space into a Galois connection in
function space.
We discuss the case of interval-valued functions in
Section~\ref{subsec:interval-fun}, and set the stage
in~\ref{subsec:B-bound-intro} for the main Galois connection of this
paper: the \emph{\boundariesbound abstraction}, which is developed
in
% more
greater
detail in the following sections.

%% Here, we present some convenient/nice results on order theoretical
%% results about function space, in particular ways to build Galois
%% connections.

%% We pitch the theorem on \textbf{Domain Abstraction}, one main
%% observation of the paper, and comment extensively.

%% ($\sim$~slides 10-12)

%% \phantom{.}

%% It can be noted that multiple Galois connections and abstraction
%% constructors/functors are already known for lattices of functions,
%% set-valued of functions, sets of functions, and their connection with
%% relations. Some of them are very classical and have been taught in
%% several courses (e.g. \emph{codomain abstraction}, also called the
%% \emph{pointwise connector}), and more can be found in the literature
%% (see e.g. \cite{cousot1994higher,cousotpopl14galoiscalculus} for
%% condensed presentations), \emph{using Galois connections as the initial
%% building block}.

%% However, the authors have been unable to locate any reference on the
%% abstraction constructor that we call below
%% \textbf{domain abstraction}, and lifts an arbitrary function between
%% sets, without specific structure, to a full Galois connection.

\subsection{Classical Manipulations} %: codomain abstraction and $\End$-lifting}
\label{subsec:classic-galois-fun}

In this section, we assume
% given
a Galois connection
$$(D,\leq)\galois{\alpha}{\gamma}(D^\sharp,\sqsubseteq)$$ and recall
some well-known
% known
constructors/functors that lift this connection to lattices
of functions.
See courses in abstract interpretation
or~\cite{cousot1994higher-short,cousotpopl14galoiscalculus-short} for condensed
presentations and additional results of this kind.

%\newpage

  \begin{proposition}[Codomain abstraction]
    \label{prop:codomain-abstraction}
    Let $X$ be a set. The Galois connection between $D$ and $D^\sharp$ lifts to
    %
    %Let $(D,\leq)\galois{\alpha}{\gamma}(D^\sharp,\sqsubseteq)$ and $X$ be a set.
    %
    %This lifts to
    $(X\to D,\pleq) \galois{\dot{\alpha}}{\dot{\gamma}} (X\to D^\sharp,\dot{\sqsubseteq})$, %\\[-2pt]
    with $\dot{\alpha}(f) = \alpha\circ f$ and $\dot{\gamma}(f^\sharp) = \gamma\circ f^\sharp$.
    %% {\footnotesize
    %% \vspace{-1em}
    %% \begin{align*}
    %%   (X\to D,\pleq) &\galois{\dot{\alpha}}{\dot{\gamma}} (X\to D^\sharp,\dot{\sqsubseteq})\\[-3pt]
    %%   f &\longmapsto \alpha\circ f\\[-3pt]
    %%   \gamma\circ f^\sharp &\longmapsfrom f^\sharp,\\[-2em]
    %% \end{align*}}
  \end{proposition}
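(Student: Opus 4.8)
The plan is to establish the defining adjunction property of a Galois connection directly at the pointwise level, exploiting the fact that the orders on $X\to D$ and $X\to D^\sharp$ are themselves defined pointwise. Recall that a pair $(\dot{\alpha},\dot{\gamma})$ forms a Galois connection precisely when, for every $f\in X\to D$ and every $g\in X\to D^\sharp$, one has $\dot{\alpha}(f)\mathrel{\dot{\sqsubseteq}} g \iff f\pleq\dot{\gamma}(g)$. It suffices to verify this single biconditional: monotonicity of $\dot{\alpha}$ and $\dot{\gamma}$, as well as the extensivity/reductivity of the composites, then follow automatically from the adjunction.

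First I would unfold both sides using the definitions. By the definition of the pointwise order $\dot{\sqsubseteq}$ and of $\dot{\alpha}=\alpha\circ(\cdot)$, the left-hand side $\dot{\alpha}(f)\mathrel{\dot{\sqsubseteq}} g$ unfolds to $\forall x\in X,\ \alpha(f(x))\sqsubseteq g(x)$. Symmetrically, using $\dot{\gamma}(g)=\gamma\circ g$ and the definition of $\pleq$, the right-hand side $f\pleq\dot{\gamma}(g)$ unfolds to $\forall x\in X,\ f(x)\leq\gamma(g(x))$.

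The crux is then to move the equivalence inside the universal quantifier. For each \emph{fixed} $x\in X$, applying the adjunction of the underlying connection $(D,\leq)\galois{\alpha}{\gamma}(D^\sharp,\sqsubseteq)$ with the values $a=f(x)\in D$ and $b=g(x)\in D^\sharp$ yields the pointwise equivalence $\alpha(f(x))\sqsubseteq g(x) \iff f(x)\leq\gamma(g(x))$. Since the two predicates of $x$ are equivalent for every $x$, their universal closures over $X$ are equivalent as well, which is exactly the desired biconditional. This completes the argument.

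I do not expect any genuine obstacle here: the statement is a standard pointwise lifting and the proof reduces to a one-line application of the adjunction at each $x$. The only point requiring a modicum of care is the final step — that a biconditional holding for every $x$ licenses exchanging $\forall$ with $\iff$ — which is immediate, since $\forall x.\,(P(x)\iff Q(x))$ entails $(\forall x.\,P(x))\iff(\forall x.\,Q(x))$. Alternatively, one could verify the four-part characterisation (monotonicity of both maps together with $\id\pleq\dot{\gamma}\circ\dot{\alpha}$ and $\dot{\alpha}\circ\dot{\gamma}\mathrel{\dot{\sqsubseteq}}\id$) pointwise, but the adjunction route is shorter and avoids the redundant checks.
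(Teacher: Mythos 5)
Your proof is correct. The paper states this proposition without proof, treating it as a classical result and pointing to standard references, and your argument — unfolding $\dot{\alpha}(f)\mathrel{\dot{\sqsubseteq}}g$ and $f\pleq\dot{\gamma}(g)$ pointwise and applying the underlying adjunction $\alpha(f(x))\sqsubseteq g(x)\iff f(x)\leq\gamma(g(x))$ at each $x\in X$ — is precisely the standard verification those references would give, so there is nothing to flag.
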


  %% begin{definition}.
  %%   Endomorphisms are monotone endofunctions.

  %%   In other words, let $(L,\sqsubseteq)$ be a partial order. %\\
  %%   $\End_\sqsubseteq(L):=\{f: L\to L\,|\,\forall x\sqsubseteq y,\,f(x)\sqsubseteq f(y)\}$.
  %%   %
  %%   %We define $\End_\sqsubseteq(L) \subseteq (L\to L)$ as the set of \textbf{monotone} endofunctions in L.
  %% \end{definition}

  \begin{proposition}[$\End$-lifting] \label{prop:end-lifting}
    The Galois connection between $D$ and $D^\sharp$ lifts to
    %$(D,\leq)\galois{\alpha}{\gamma}(D^\sharp,\sqsubseteq)$
    %Let $(D,\leq)\galois{\alpha}{\gamma}(D^\sharp,\sqsubseteq)$.
    %
    %This lifts to a Galois connection
    %\vspace{-1em}
    %\begin{columns}[onlytextwidth,c]
    %\begin{column}{0.6\textwidth}\centering
    \begin{equation*}
    \begin{aligned}
      (\End_\leq(D),\pleq) &\galois{\vec{\alpha}}{\vec{\gamma}} (\End_\sqsubseteq(D^\sharp),\dot{\sqsubseteq})\\%[-3pt]
      f &\longmapsto \alpha\circ f\circ\gamma,\\%[-3pt]
      \gamma\circ f^\sharp\circ\alpha &\longmapsfrom f^\sharp.\\%[-1em]
    \end{aligned}
    \end{equation*}
    For example, the construction of $\vec{\alpha}(f)$ may be visualised in
    the following commutative diagram.
    \begin{center}
    %\end{column}
    %{\vrule depth 8mm height 8mm}
    %\begin{column}{0.37\textwidth}\centering
      \begin{tikzcd}[ampersand replacement=\&,column sep=large]
        D \ar[r, "f"] \& D \ar[d,"\alpha"] \\
        D^\sharp \ar[u, "\gamma"] \ar[r, dashed, "f^\sharp=\vec{\alpha}(f)"] \& D^\sharp
      \end{tikzcd}
    %\end{column}
    %\vspace{-.5em}
    %\end{columns}
    \end{center}
  \end{proposition}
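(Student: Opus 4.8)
The plan is to verify that $(\vec\alpha,\vec\gamma)$ is a Galois connection via the standard ``semi-inverse'' characterization: a pair of monotone maps between posets is a Galois connection iff the upper adjoint composed after the lower adjoint is extensive and the lower adjoint composed after the upper adjoint is reductive. Concretely, I would establish three facts: (i) $\vec\alpha$ and $\vec\gamma$ are well defined and monotone, (ii) $\id \pleq \vec\gamma\circ\vec\alpha$ on $\End_\leq(D)$, and (iii) $\vec\alpha\circ\vec\gamma \mathrel{\dot{\sqsubseteq}} \id$ on $\End_\sqsubseteq(D^\sharp)$.

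For (i), since $\alpha$, $\gamma$ and any $f\in\End_\leq(D)$ are monotone, the composite $\vec\alpha(f)=\alpha\circ f\circ\gamma$ is monotone and hence lands in $\End_\sqsubseteq(D^\sharp)$; symmetrically $\vec\gamma(f^\sharp)=\gamma\circ f^\sharp\circ\alpha\in\End_\leq(D)$. Monotonicity of the functors themselves is immediate: if $f\pleq g$ then $\alpha\circ f\circ\gamma \mathrel{\dot{\sqsubseteq}} \alpha\circ g\circ\gamma$, since $\alpha$ is monotone and pre/post-composition preserves the pointwise order. I would also remark in passing that $\End_\leq(D)$ and $\End_\sqsubseteq(D^\sharp)$ are complete lattices under the pointwise order, because pointwise joins and meets of monotone maps remain monotone; this makes the phrase ``Galois connection'' meaningful.

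For (ii) and (iii), the engine is the base connection, encoded as $\id_D \leq \gamma\circ\alpha$ (extensivity) and $\alpha\circ\gamma \sqsubseteq \id_{D^\sharp}$ (reductivity), together with monotonicity. To get (ii), starting from monotone $f$, I use $\id\leq\gamma\circ\alpha$ on the domain side: since $f$ is monotone, $f \pleq f\circ\gamma\circ\alpha$, and then post-composing and using $\id\pleq\gamma\circ\alpha$ once more yields the chain $f \pleq \gamma\circ\alpha\circ f \pleq \gamma\circ\alpha\circ f\circ\gamma\circ\alpha = \vec\gamma(\vec\alpha(f))$. Symmetrically, (iii) follows by applying $\alpha\circ\gamma \sqsubseteq \id$ on the inner right of $f^\sharp$ and then on the outer left, giving $\vec\alpha(\vec\gamma(f^\sharp)) = \alpha\circ\gamma\circ f^\sharp\circ\alpha\circ\gamma \mathrel{\dot{\sqsubseteq}} \alpha\circ\gamma\circ f^\sharp \mathrel{\dot{\sqsubseteq}} f^\sharp$.

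I expect no deep obstacle; the one thing to watch is the bookkeeping of composition order, because $\vec\alpha$ inserts $\gamma$ on the domain side while keeping $\alpha$ on the codomain side. This placement looks ``contravariant'' and invites sign errors, but since the pointwise order on $\End$ is covariant and both $\alpha$ and $\gamma$ are monotone, pre- and post-composition stay order-preserving and the semi-inverse inequalities go through uniformly. An alternative, equally short route is a direct pointwise proof of the adjunction $\vec\alpha(f)\mathrel{\dot{\sqsubseteq}} f^\sharp \iff f\pleq\vec\gamma(f^\sharp)$, evaluating the left-hand inequality at $\alpha(x)$ and the right-hand one at $\gamma(y)$ and collapsing the redundant $\gamma\circ\alpha$ and $\alpha\circ\gamma$ via extensivity and reductivity; I would keep the composition-level argument as the main presentation, as it avoids introducing dummy points.
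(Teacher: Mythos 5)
Your proof is correct. The paper states this proposition without proof, presenting it as a classical construction recalled from the literature (with pointers to Cousot's work on higher-order abstract interpretation), and your verification via the unit/counit characterisation -- monotonicity of $\vec{\alpha}$ and $\vec{\gamma}$, extensivity $\id \pleq \vec{\gamma}\circ\vec{\alpha}$, and reductivity $\vec{\alpha}\circ\vec{\gamma} \mathrel{\dot{\sqsubseteq}} \id$ -- is exactly the standard argument one would find there, with the monotonicity of $f$ (resp.\ $f^\sharp$) correctly invoked at the one step of each chain where it is genuinely needed.
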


  \begin{corollary}
    The construction of Proposition~\ref{prop:end-lifting} can be
    iterated: from values $D\galois{}{} D^\sharp$, to monotone
    functions $\End_\leq(D)\galois{}{} \End_{\sqsubseteq}(D^\sharp)$,
    to \textbf{monotone operators}
    (built on $\End^2\approx\big((\cdot\to\cdot)\to(\cdot\to\cdot)\big)$),
    and beyond to monotone higher-order operators.
    %This can be iterated to \textbf{operators} $\big(\End_{\pleq}(\End_{\leq}(D)),\ddot{\leq}\big)$, and beyond.
    %This can be iterated to \textbf{operators} in $\End^2\approx\big((\cdot\to\cdot)\to(\cdot\to\cdot)\big)$, and beyond.
  \end{corollary}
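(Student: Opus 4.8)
The plan is to read Proposition~\ref{prop:end-lifting} not as a single statement but as a \emph{closed construction} on the class of Galois connections between complete lattices, and then to iterate it by induction. Concretely, the proposition takes as input a Galois connection $(D,\leq)\galois{\alpha}{\gamma}(D^\sharp,\sqsubseteq)$ between complete lattices and returns $(\End_\leq(D),\pleq)\galois{\vec{\alpha}}{\vec{\gamma}}(\End_\sqsubseteq(D^\sharp),\dot{\sqsubseteq})$, with $\vec{\alpha}(f)=\alpha\circ f\circ\gamma$ and $\vec{\gamma}(f^\sharp)=\gamma\circ f^\sharp\circ\alpha$. To be allowed to apply the construction a second time, it suffices to verify that its output is again a valid \emph{input}, i.e. a Galois connection between \emph{complete} lattices; the iteration then follows formally.

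First I would isolate the one genuine verification, a closure lemma: if $(L,\leq)$ is a complete lattice, then $(\End_\leq(L),\pleq)$ is again a complete lattice. This follows because, by the pointwise lattice structure recalled above, $(L\to L,\pleq)$ is a complete lattice, and $\End_\leq(L)$ is closed under arbitrary pointwise joins and meets, since a pointwise supremum (resp. infimum) of monotone maps is again monotone. Hence $\End_\leq(L)$ is a complete sublattice. Applying this to $D$ and to $D^\sharp$ shows that the two posets produced by Proposition~\ref{prop:end-lifting} are complete lattices, and the proposition itself already certifies that $\vec{\alpha}\dashv\vec{\gamma}$ is a Galois connection between them.

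With this closure established, the inductive step is immediate: Proposition~\ref{prop:end-lifting} applies \emph{verbatim} to the connection $(\End_\leq(D),\pleq)\galois{\vec{\alpha}}{\vec{\gamma}}(\End_\sqsubseteq(D^\sharp),\dot{\sqsubseteq})$, substituting $\End_\leq(D)$ for $D$, $\End_\sqsubseteq(D^\sharp)$ for $D^\sharp$, and the adjoints $\vec{\alpha},\vec{\gamma}$ for $\alpha,\gamma$. This yields a Galois connection between the monotone endofunctions of these lattices, i.e. between the \textbf{monotone operators}, $\End_{\pleq}(\End_\leq(D))\galois{}{}\End_{\dot{\sqsubseteq}}(\End_\sqsubseteq(D^\sharp))$, whose abstraction map is $F\mapsto\vec{\alpha}\circ F\circ\vec{\gamma}$ and concretisation $F^\sharp\mapsto\vec{\gamma}\circ F^\sharp\circ\vec{\alpha}$. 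Since each application lands back in the class of Galois connections between complete lattices, an $n$-fold iteration produces a connection between the $n$-th order monotone (higher-order) operators, which is exactly the claim.

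I expect the main obstacle to be bookkeeping rather than any real mathematical difficulty. One must track carefully \emph{which} pointwise order is in force at each level, since the order $\pleq$ on $\End_\leq(D)$ is itself lifted pointwise again at the next stage, and one must confirm that the identification $\End^2\approx\big((\cdot\to\cdot)\to(\cdot\to\cdot)\big)$ genuinely matches ``monotone operators'' in the equation-as-operator sense used in the paper. The only substantive content is the closure lemma; everything else reduces to a repeated application of the already-established Proposition~\ref{prop:end-lifting}.
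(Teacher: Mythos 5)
Your proposal is correct and takes essentially the same route as the paper, which states the corollary without proof precisely because it follows by re-applying Proposition~\ref{prop:end-lifting}; your inductive bookkeeping and the closure lemma (that $\End_\leq(L)$ with the pointwise order is again a complete lattice, since monotone maps are closed under pointwise joins and meets) supply exactly the verification the paper leaves implicit. One minor observation: completeness is not strictly needed for the iteration itself -- the adjunction $\vec{\alpha}\dashv\vec{\gamma}$ holds for Galois connections between mere posets, using only monotonicity of $f$, $f^\sharp$, $\alpha$, $\gamma$ -- so your closure lemma serves only to keep each stage within the paper's complete-lattice setting (where, e.g., Knaster--Tarski is later invoked), which is a reasonable but optional strengthening.
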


  Thanks to this result, the notion of \emph{abstraction of an
  equation} is automatically defined (both for equations on monotone
  functions and on arbitrary functions, if we start from an
  appropriate notion of function abstraction), and no
  %further checks are needed to talk about Galois connections.
  additional checks are needed to ensure that the resulting structure
  forms a Galois connection.
%
%% \plgnote{Alternative: and no further ... -> No additional checks are
%%   needed to ensure that the resulting structure forms a Galois
%%   connection.}

\subsection{Domain Abstraction}
\label{subsec:domain-abstr-galois}

More surprisingly, it is also possible to take a simple, unstructured
function on values, and lift it into a full Galois connection on
functions, if we instead transform the \emph{domain} of the functions.

  \begin{theorem}[Domain abstraction]
    \label{th:domain-abstraction}
    Let $m:X\to A$ be an arbitrary mapping, and let $(L,\sqsubseteq)$ be a complete lattice.
    Then, there is a Galois connection %\\
    %\textcolor{gray}{``$(\sqcup f \circ m^{-1},\,f^\sharp\circ m)$''}
    {%\scriptsize
    %\vspace{-.5em}
    \begin{align*}
      (X\to L,\dot{\sqsubseteq}) &\galois{\alpha}{\gamma} (A\to L,\dot{\sqsubseteq})\\%[-2pt]%& \\
      %f &\longmapsto \Big( a \mapsto \bigsqcup \{f(x)\,|\,m(x) = a\} \Big)\\[-2pt]%&\textcolor{gray}{\text{``}\sqcup f \circ m^{-1}\text{''}}\\
      f &\longmapsto \Big( a \mapsto \bigsqcup_{x\in m^{-1}(a)} f(x)\Big)\\[-5pt]%[-12pt]%&\textcolor{gray}{\text{``}\sqcup f \circ m^{-1}\text{''}}\\
      \Big(x \mapsto f^\sharp\big( m(x)\big) \Big) &\longmapsfrom f^\sharp.
    \end{align*}}
    It is a Galois insertion whenever $m$ is surjective.
    %% \begin{equation*}
    %%   (X\to L,\pleq) \galois{\alpha}{\gamma} (A\to L,\pleq),
    %%   \vspace{-.5em}
    %% \end{equation*}
    %which is a Galois insertion whenever $m$ is surjective.}
    %noindent which is an insertion for $m:X\twoheadrightarrow A$.\\%[-1em]
    %\begin{center}
    %(Schematically:
    %``$\alpha:f\mapsto \sqcup\, f \circ m^{-1}$ and $\gamma:f^\sharp\mapsto f^\sharp\circ m$''.)
    %\end{center}

    In other words, precomposition by $m$ admits a left adjoint, which
    may be interpreted as a constrained maximisation operation.
    Indeed, in the sense of the order on $L$,
    $\alpha(f)(a) = \sup\,\{ f(x)\,|\,m(x)=a)\}$.
  \end{theorem}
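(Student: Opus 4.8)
The plan is to verify directly the defining adjunction property of a Galois connection, namely that for every $f \in X\to L$ and every $f^\sharp \in A \to L$,
\[
  \alpha(f) \dot{\sqsubseteq} f^\sharp
  \iff
  f \dot{\sqsubseteq} \gamma(f^\sharp),
\]
rather than going through monotonicity together with the extensive/reductive inequalities (although that route works equally well, in the spirit of Proposition~\ref{prop:codomain-abstraction}). The advantage of the adjunction route here is that completeness of $L$ reduces the central step to a one-line application of the universal property of suprema.

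First I would unfold the left-hand side. By definition of the pointwise order and of $\alpha$, the statement $\alpha(f) \dot{\sqsubseteq} f^\sharp$ means that for every $a\in A$ we have $\bigsqcup_{x\in m^{-1}(a)} f(x) \sqsubseteq f^\sharp(a)$. By the universal property of the least upper bound in the complete lattice $L$, this is equivalent to requiring $f(x) \sqsubseteq f^\sharp(a)$ for every $a \in A$ and every $x\in m^{-1}(a)$. Since the fibres $\{m^{-1}(a)\}_{a\in A}$ partition $X$, and each $x$ lies in exactly the fibre $m^{-1}(m(x))$, this quantification over pairs $(a,x)$ collapses to a single quantification over $x\in X$, giving the condition $f(x)\sqsubseteq f^\sharp(m(x))$ for all $x$. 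But $f^\sharp(m(x)) = \gamma(f^\sharp)(x)$ by definition of $\gamma$, so this is exactly $f \dot{\sqsubseteq} \gamma(f^\sharp)$. This chain of equivalences establishes the adjunction, hence the Galois connection.

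The one place demanding care -- and the main (minor) obstacle -- is the treatment of empty fibres: if $a$ is not in the image of $m$, then $m^{-1}(a)=\varnothing$ and $\alpha(f)(a) = \bigsqcup \varnothing = \bot$. This is where completeness of $L$ is essential, and it is harmless for the equivalence above, since the constraint ``$f(x)\sqsubseteq f^\sharp(a)$ for all $x\in m^{-1}(a)$'' is vacuously true for such $a$, matching the fact that $\bot \sqsubseteq f^\sharp(a)$ always holds. Finally, for the Galois insertion claim I would compute $\alpha(\gamma(f^\sharp))(a) = \bigsqcup_{x\in m^{-1}(a)} f^\sharp(m(x)) = \bigsqcup_{x\in m^{-1}(a)} f^\sharp(a)$. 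When $m$ is surjective every fibre is nonempty, so this is the supremum of a nonempty constant family and equals $f^\sharp(a)$; thus $\alpha\circ\gamma = \id$, which is precisely the condition characterising a Galois insertion.
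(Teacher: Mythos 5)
Your proof is correct, but it takes a genuinely different route from the paper's. The paper argues abstractly: it observes that $\gamma=(f^\sharp\mapsto f^\sharp\circ m)$ preserves arbitrary meets (immediate from the pointwise structure), then invokes the general fact that a meet-preserving map between complete lattices admits a unique left adjoint, so the join formula for $\alpha$ is \emph{derived} rather than posited and verified; the insertion claim is then settled by computing $\alpha\circ\gamma(f^\sharp)(a)$, exactly as you do. You instead verify the adjunction equivalence $\alpha(f)\mathbin{\dot{\sqsubseteq}}f^\sharp \iff f\mathbin{\dot{\sqsubseteq}}\gamma(f^\sharp)$ directly, unfolding the supremum via its universal property and using the fact that the fibres of $m$ partition $X$ to collapse the double quantification over pairs $(a,x)$ into one over $x\in X$ --- this is essentially the ``more explicit proof \ldots{} by checking the classical inequations'' that the paper delegates to the interested reader, except that you check the single adjunction equivalence rather than the monotonicity-plus-extensive/reductive package (the two formulations are equivalent, as you note). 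What each approach buys: the paper's route is shorter, gives uniqueness of the adjoint for free, and works even when no closed formula for $\alpha$ has been guessed in advance; yours is self-contained, avoids appeal to the adjoint existence theorem, and makes explicit exactly where completeness of $L$ is used --- namely suprema over possibly empty fibres, your $\bigsqcup\varnothing=\bot$ observation, which corresponds precisely to the ``$\bot$ otherwise'' case in the paper's insertion computation. The surjectivity/insertion argument is identical in both.
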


  We opt for an abstract proof, beginning by considering the operation
  of precomposition
  $\gamma:=(f^\sharp \mapsto f^\sharp\circ m)$,
  checking the existence of its unique possible left adjoint $\alpha$,
  before observing its properties (see~\cite{Cousot21-book}). A more
  explicit proof can also be performed by the interested reader, by
  checking the classical inequations.

  \begin{proof}
    The fact that $\gamma$ preserves meets is a direct consequence of
    the pointwise structure: for all $F^\sharp\subseteq (A\to L)$ and
    $x\in X$,
    $ \gamma\big(\dot{\bigsqcap}_{f^\sharp\in F^\sharp} f^\sharp\big)(x)
      = \big(\dot{\bigsqcap}_{f^\sharp\in F^\sharp} f^\sharp\big)(m(x))
      = \big(\bigsqcap_{f^\sharp\in F^\sharp} f^\sharp(m(x))\big)
      %= \Big(\dot{\bigsqcap}_{f^\sharp\in F^\sharp} f^\sharp\circ m\Big)(x)
      = \big(\dot{\bigsqcap}_{f^\sharp\in F^\sharp} \gamma(f^\sharp) \big)(x).$
   This gives us the existence of $\alpha$ forming a Galois connection
   with $\gamma$, and its formula given above by a join.
    Finally, for the characterisation of the Galois insertion case,
    we simply compute
    $\alpha\circ\gamma(f^\sharp)(a)=\sqcup\{f^\sharp(m(x))\,|\,m(x)=a\}$
    \begingroup
    \abovedisplayskip=2pt
    \belowdisplayskip=6pt
    \begin{equation*}
      = \begin{cases}
        f^\sharp(a) & \text{if } \exists x\in X,\,m(x)=a,\\
        \bot       & \text{otherwise}.
      \end{cases}
    \end{equation*}
    %which justifies the characterisation of the Galois insertion case.
    \endgroup
  \end{proof}

  The construction of the abstraction in
  Theorem~\ref{th:domain-abstraction} is illustrated in
  Fig.~\ref{fig:dom-abstr}.
  To give some intuition, we present a first example coming from the
  context of cost analysis.

  \begin{figure}[t]
    \pgfdeclarelayer{back}
    \pgfdeclarelayer{middle}
    \pgfdeclarelayer{front}
    \pgfsetlayers{back,middle,main,front}
  \begin{tikzpicture}[scale=.5, every node/.style={font=\small},
       ->, >=Stealth, thick, node distance=1cm and 2cm]

    % Sets
    \begin{pgfonlayer}{back}
    \node[draw, ellipse, minimum width=2.8cm, minimum height=4.5cm, fill=red!10] (X) at (0,0) {};
    \node[draw, ellipse, minimum width=1.2cm, minimum height=3cm, fill=blue!10] (A) at (7,0) {};
    \end{pgfonlayer}
    % Set labels
    \node[align=center,anchor=south east,font=\normalsize] (Xlab) at ($(X.north west)+(-.2,-.2)$) {$X$};
    \node[align=center,anchor=south west,font=\normalsize] (Alab) at (A.north east) {$A$};
    %\onslide<3->{
    \node[align=left,font=\large] (Xlab2) at ($(Xlab.north)+(-.5,0.5)$) {\textcolor{orange}{$f:X\to L$}};
    \node[align=right,font=\large] (Alab2) at ($(Alab.north)+(0,1)$) {\textcolor{ForestGreen}{$f^\sharp:A\to L$}};
    %}

    % Elements in X (many)
    \node at ($(X.center)+(0,3.75)$) {\(\vdots\)};
    \node (x1) at ($(X.center)+(0,2.5)+(0,-0.6)$) {$\bullet$};
    \node (x2) at ($(X.center)+(-0.5,0.5)+(0,-0.4)$) {$\bullet$};
    \node (x3) at ($(X.center)+(0.3,-0.3)+(0,-0.4)$) {$\bullet$};
    \node (x4) at ($(X.center)+(-0.6,-1.1)+(0,-0.4)$) {$\bullet$};
    \node (x5) at ($(X.center)+(-0.7,-2.7)+(0,-0.6)$) {$\bullet$};
    \node (x6) at ($(X.center)+(0.4,-2.7)+(0,-0.6)$) {$\bullet$};
    \node (xhid5) at ($(X.center)+(0,-2.4)+(0,-0.6)$) {};
    \node (xhid6) at ($(X.center)+(0,-3)+(0,-0.6)$) {};

    % Elements in A (fewer)
    \node at ($(A.center)+(0,2.25)$) {\(\vdots\)};
    \node (a1) at ($(A.center)+(0,1)$) {$\bullet$};
    \node (a2) at ($(A.center)+(0,0)$) {$\bullet$};
    \node (a3) at ($(A.center)+(0,-1)$) {$\bullet$};
    \node (a4) at ($(A.center)+(0,-2)$) {$\bullet$};

    % Arrows f: X -> A
    \draw[->] (x1) -- (a1);
    \draw[->] (x2) -- (a2);
    \draw[->] (x3) -- (a2);
    \draw[->] (x4) -- (a2);
    \draw[->] (x5) -- (a4);
    \draw[->] (x6) -- (a4);
    % x6 does not map to anything (to emphasize optionality or undefined case)

    % Group ellipse around preimage of a1
    %% \begin{pgfonlayer}{middle}
    %% \node[draw=red, fill=red!20, dashed, ellipse, inner sep=0pt,
    %%       fit=(x3)(x4)(x5), label=north west:{\scriptsize $f^{-1}({a_2})$}] (preimage) {};
    %% \end{pgfonlayer}
    \begin{pgfonlayer}{middle}
    \node[draw=red, fill=red!20, dashed, ellipse, inner sep=2pt,
          fit=(x1)] (preimage1) {};
    \node[draw=red, fill=red!20, dashed, ellipse, inner sep=-2pt,
          fit=(x2)(x3)(x4)] (preimage2) {};
    \node[draw=red, fill=red!20, dashed, ellipse, inner sep=-1pt,
          fit=(x5)(x6)(xhid5)(xhid6)] (preimage4) {};
    \end{pgfonlayer}

    % Backward arrow a1 --> preimage group
    \draw[->, red!70!black, thick]
      (a4.center) to[out=-150, in=-10] node[below=0.05cm, near start] {\(m^{-1}\)} (preimage4);

    % Label f
    \node[above=0.1cm of $(x1)!0.5!(a1)$] {\(m\)};

    %\onslide<3->{
    %% Lattice L
    \node[draw=black,rectangle] (f5) at ($(x5.center)+(3,-3.5)$) {\textcolor{orange}{$u$}};
    \draw[->, orange, thick] (x6.center) to[out=-90,in=160] node[right,midway] {} (f5);
    \node[draw=black,rectangle] (f6) at ($(x6.center)+(3.5,-3.5)$) {\textcolor{orange}{$v$}};
    \draw[->, orange, thick] (x5.center) to[out=-100,in=-150] node[below left,midway] {\textcolor{orange}{$f$}} (f6);
    %}

    %% Join
    %\onslide<4->{
    \node (f56mid) at ($(f5)!0.5!(f6)$) {};
    \node[draw=black,rectangle] (fuv) at ($(f56mid)+(0,1.5)$) {\textcolor{ForestGreen}{$u \sqcup v$}};

    \draw[-,black,dashed] (f5) -- (fuv);
    \draw[-,black,dashed] (f6) -- (fuv);
    \draw[->, ForestGreen, very thick] (a4.center) to[out=-90,in=0] node[below right,midway] {\textcolor{ForestGreen}{$f^\sharp=\alpha(f)$}} (fuv);
    %}

    %% Etc.
    %\onslide<5->{
    \draw[->, orange, dashed] (x1.center) to[out=-180,in=90] ($(x1.center)+(-3.5,-.3)$);
    \draw[->, orange, dashed] (x2.center) to[out=-180,in=90] ($(x2.center)+(-3.5,-.3)$);
    \draw[->, orange, dashed] (x3.center) to[out=-180,in=90] ($(x3.center)+(-3.5,-.3)$);
    \draw[->, orange, dashed] (x4.center) to[out=-180,in=90] ($(x4.center)+(-3.5,-.3)$);

    \node[draw=black,rectangle] (ima3) at ($(a3.center)+(2.5,-1)$) {\textcolor{ForestGreen}{$\bot$}};
    \draw[->, ForestGreen, dashed] (a1.center) to[out=0,in=90] ($(a1.center)+(2.5,-.3)$);
    \draw[->, ForestGreen, dashed] (a2.center) to[out=0,in=90] ($(a2.center)+(2.5,-.3)$);
    \draw[->, ForestGreen, dashed] (a3.center) to[out=0,in=90] (ima3);
    %}

    %% %%% Lattice L -- Removed. Adds to much info.
    %% \coordinate (latbot) at (3.5,-9);
    %% % Level 0
    %% \node (bot) at (latbot) {$\bullet$};
    %% % Level 1
    %% \node (a) at ($(bot)+( -.5, 1)$) {$\bullet$};
    %% \node (b) at ($(bot)+(  .5, 1)$) {$\bullet$};
    %% % Level 2
    %% \node (1) at ($(bot)+( -1, 2)$) {$\bullet$};
    %% \node (2) at ($(bot)+(  0, 2)$) {$\bullet$};
    %% \node (3) at ($(bot)+(  1, 2)$) {$\bullet$};
    %% % Level 2
    %% \node (u) at ($(bot)+( -.5, 3)$) {$\bullet$};
    %% \node (v) at ($(bot)+( .5, 3)$) {$\bullet$};
    %% % Level 4
    %% \node (top) at ($(bot)+(0,4)$) {$\bullet$};
    %% %% Edges
    %% \draw[-] (bot) -- (a);

    %% \begin{pgfonlayer}{back}
    %% \node[draw, ellipse, fill=green!10, label=west:$L$, inner sep=0pt, fit=(bot)(1)(3)(top)] (L) {};
    %% \end{pgfonlayer}

    \end{tikzpicture}
  \vspace{-.5em}
  \caption{Illustration of domain abstraction.}
  \label{fig:dom-abstr}
  \end{figure}
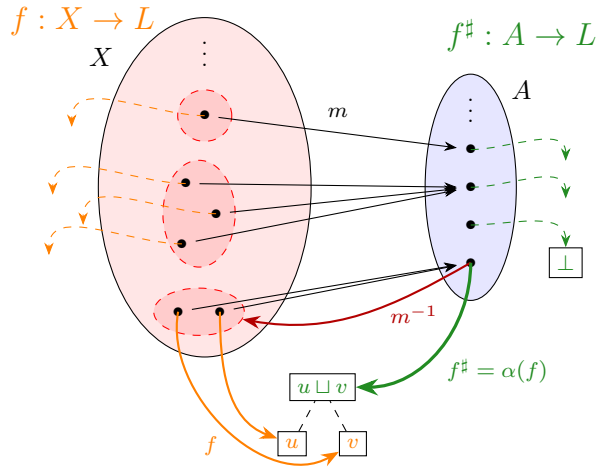

  \begin{example}[Size abstraction]
    Suppose we are interested in the cost (a real number, e.g. the
    execution time)
%   time consumption)
    of a deterministic procedure as a function of
    its inputs, which are elements of a large set $\Data$ of symbolic
    data structures (e.g. lists of numbers, trees with nodes
    containing integer values, or integer numbers themselves).

    The cost function we are interested in is an element
    $f:\Data\to\rri$, but this problem is often too difficult to be
    solved exactly.
    Thus, we perform an abstraction step, by introducing a \emph{size
    metric} $m:\Data\to\texttt{Sizes}$ from the large set of data
    structures to a simpler set of \emph{data sizes}, e.g.
    $\texttt{Sizes}=\nn$. For example, $m$ might compute the length of
    lists, the number of nodes in a tree, or the absolute value of
    integer numbers.

    We are thus left with a problem about an abstract cost function
    $f^\sharp:\texttt{Sizes}\to\rri$, which gives the cost of the
    procedure as a function of data \emph{sizes}.
    If we can infer the best possible $f^\sharp$, we do not have
    access to the exact cost of the procedure called on an arbitrary
    $x\in\Data$: only to the worst case on all other inputs of equal
    size, $f^\sharp\circ m (x)$.
    This explains both the intuition behind the precomposition step in
    $\gamma$, and the usage of join in $\alpha$.
  \end{example}

  Before discussing additional intuition coming from cost analysis,
  we must mention a couple of constructions of Galois connections
  derived from Theorem~\ref{th:domain-abstraction}.

  \begin{corollary}%[Powerset lifting]
    \label{cor:powerset-lifting}
    \textbf{Powerset lifting}, the classical trick to turn a relation
    $m:X\to A$ that fails to be a Galois connection into a Galois
    connection $(\pp(X),\subseteq)\galois{\alpha}{\gamma}(\pp(A),\subseteq)$
    by adding a few powersets functors $\pp$, is an instance of domain
    abstraction, with $L = \bb$.
  \end{corollary}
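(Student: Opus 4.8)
The plan is to exhibit the classical powerset lifting as the special case $L=\bb$ of Theorem~\ref{th:domain-abstraction}, after identifying subsets with their characteristic (indicator) functions. The key observation is that for any set $Y$, the complete lattice $(\pp(Y),\subseteq)$ is order-isomorphic to $(Y\to\bb,\dot{\sqsubseteq})$ via the bijection $S\mapsto\chi_S$, where $\chi_S(y)=\top$ iff $y\in S$. Indeed, $S\subseteq S'$ holds precisely when $\chi_S\dot{\sqsubseteq}\chi_{S'}$ pointwise, so this bijection both preserves and reflects the order. I would apply this isomorphism on both sides, to $\pp(X)\cong(X\to\bb)$ and to $\pp(A)\cong(A\to\bb)$.

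Next I would instantiate Theorem~\ref{th:domain-abstraction} with the complete lattice $L=\bb$ (whose join $\bigsqcup$ is Boolean disjunction) and the same mapping $m:X\to A$. This immediately yields a Galois connection $(X\to\bb,\dot{\sqsubseteq})\galois{\alpha}{\gamma}(A\to\bb,\dot{\sqsubseteq})$, which the isomorphism transports to a Galois connection between $(\pp(X),\subseteq)$ and $(\pp(A),\subseteq)$. Since Galois connections compose with order isomorphisms, no separate verification of the adjunction property is required; it is inherited wholesale from the theorem.

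The remaining --- and essentially only --- task is to check that, under the characteristic-function identification, the adjoints $\alpha,\gamma$ of Theorem~\ref{th:domain-abstraction} reduce to the familiar preimage and direct-image maps. For $\gamma$: a function $f^\sharp:A\to\bb$ corresponds to the set $T=\{a\mid f^\sharp(a)=\top\}$, and $\gamma(f^\sharp)=(x\mapsto f^\sharp(m(x)))$ is the characteristic function of $\{x\mid m(x)\in T\}=m^{-1}(T)$; hence $\gamma$ is the preimage $m^{-1}$. For $\alpha$: a function $f:X\to\bb$ corresponds to $S=\{x\mid f(x)=\top\}$, and since in $\bb$ the join $\bigsqcup_{x\in m^{-1}(a)}f(x)$ equals $\top$ exactly when some $x\in S$ satisfies $m(x)=a$, the function $\alpha(f)$ is the characteristic function of $\{a\mid \exists x\in S,\,m(x)=a\}=m(S)$; hence $\alpha$ is the direct image.

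I do not anticipate a genuine obstacle here, since the statement is fundamentally a dictionary translation rather than a new argument. The one point requiring minor care is tracking which join appears in $\alpha$: it is the Boolean disjunction inherited from $(\bb,\sqsubseteq)$, and one must confirm that the constrained-supremum formula $\alpha(f)(a)=\sup\{f(x)\mid m(x)=a\}$ of Theorem~\ref{th:domain-abstraction} collapses to existential quantification over the fibre $m^{-1}(a)$, thereby recovering exactly the usual direct image.
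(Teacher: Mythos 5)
Your proposal is correct and is exactly the intended argument: the paper states Corollary~\ref{cor:powerset-lifting} as an immediate consequence of Theorem~\ref{th:domain-abstraction} without spelling out a proof, and the implicit reasoning is precisely your dictionary $(\pp(Y),\subseteq)\cong(Y\to\bb,\dot{\sqsubseteq})$ via characteristic functions, with the join in $\bb$ collapsing to existential quantification over fibres so that $\alpha$ and $\gamma$ become direct image and preimage. Your care about the empty fibre (where the empty join yields $\bot$, matching $a\notin m(S)$) is the one detail worth checking, and you handle it correctly.
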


  \begin{corollary}[For non-deterministic functions]
    \label{cor:nondet-fun}
    By combining the connections of
    Theorem~\ref{th:domain-abstraction} and
    Corollary~\ref{cor:powerset-lifting}, we obtain a Galois
    connection
    $(X\to\pp(X),\dot{\subseteq})$\\$\galois{\alpha}{\gamma}(A\to\pp(A),\dot{\subseteq})$,
    given by $\alpha:f\mapsto m\circ f \circ m^{-1}$ and
    $\gamma:f^\sharp\mapsto m^{-1}\circ f^\sharp \circ m$ (using monadic composition as a shorthand).

    This can be interpreted as a Galois connection between the sets of
    non-deterministic (endo)functions on the ``big'' set $X$ and of
    non-deterministic (endo)functions on the ``small'' set $A$.
  \end{corollary}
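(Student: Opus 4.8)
The plan is to obtain this connection as a \emph{composite} of two Galois connections built from the excerpt's results: one abstracting the \emph{codomain} of our functions from $\pp(X)$ to $\pp(A)$, and one abstracting their \emph{domain} from $X$ to $A$. This works because Galois connections compose: if $P\galois{\alpha_c}{\gamma_c}Q$ and $Q\galois{\alpha_d}{\gamma_d}R$, then $P\galois{\alpha_d\circ\alpha_c}{\gamma_c\circ\gamma_d}R$. So it suffices to factor the desired connection through the intermediate lattice $(X\to\pp(A),\dot{\subseteq})$ and then unfold.

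First I would set up the codomain factor. Corollary~\ref{cor:powerset-lifting} supplies the powerset lifting $(\pp(X),\subseteq)\galois{\alpha_0}{\gamma_0}(\pp(A),\subseteq)$, with $\alpha_0(S)=m(S)=\{m(y)\mid y\in S\}$ and $\gamma_0(T)=m^{-1}(T)$. Applying codomain abstraction (Proposition~\ref{prop:codomain-abstraction}) with fixed domain $X$ lifts this pointwise to $(X\to\pp(X),\dot{\subseteq})\galois{\alpha_c}{\gamma_c}(X\to\pp(A),\dot{\subseteq})$, where $\alpha_c(f)=\big(x\mapsto m(f(x))\big)$ and $\gamma_c(g)=\big(x\mapsto m^{-1}(g(x))\big)$. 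Next I would set up the domain factor by instantiating Theorem~\ref{th:domain-abstraction} with the same $m$ and the complete lattice $L=(\pp(A),\subseteq)$, whose join is union. This gives $(X\to\pp(A),\dot{\subseteq})\galois{\alpha_d}{\gamma_d}(A\to\pp(A),\dot{\subseteq})$ with $\alpha_d(g)(a)=\bigcup_{x\in m^{-1}(a)} g(x)$ and $\gamma_d(f^\sharp)=\big(x\mapsto f^\sharp(m(x))\big)$.

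Finally I would compose and read off the formulas. For $\alpha=\alpha_d\circ\alpha_c$ and $f:X\to\pp(X)$,
$$\alpha(f)(a)=\bigcup_{x\in m^{-1}(a)} m\big(f(x)\big) = m\Big(\bigcup_{x\in m^{-1}(a)} f(x)\Big),$$
which is precisely $(m\circ f\circ m^{-1})(a)$ read in the Kleisli category of the powerset monad, where $m^{-1}$ is the arrow $a\mapsto\{x\mid m(x)=a\}$ and $m$ is viewed as $y\mapsto\{m(y)\}$. Dually, for $\gamma=\gamma_c\circ\gamma_d$ one gets $\gamma(f^\sharp)(x)=m^{-1}\big(f^\sharp(m(x))\big)=(m^{-1}\circ f^\sharp\circ m)(x)$. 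Since a composite of Galois connections is again a Galois connection, this establishes the statement, and the interpretation as non-deterministic endofunctions is then immediate: $X\to\pp(X)$ and $A\to\pp(A)$ are exactly the endo-arrows of that Kleisli category.

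I do not expect a deep obstacle here; the argument is essentially bookkeeping. The only genuinely delicate points are (i) taking union as the join of $\pp(A)$ in the domain-abstraction step, so that the $\bigcup$ in $\alpha_d$ is correct, and (ii) consistently reading $m$ and $m^{-1}$ as the relational (Kleisli) arrows rather than ordinary functions, so that the unfolded composites match the compact monadic expressions $m\circ f\circ m^{-1}$ and $m^{-1}\circ f^\sharp\circ m$ claimed in the statement.
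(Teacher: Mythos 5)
Your proposal is correct and takes essentially the same route the paper intends: the corollary carries no separate proof precisely because it is the composite of the powerset lifting of Corollary~\ref{cor:powerset-lifting} (lifted pointwise over $X$ via Proposition~\ref{prop:codomain-abstraction}) with Theorem~\ref{th:domain-abstraction} instantiated at $L=(\pp(A),\subseteq)$, which is exactly the factorisation through $(X\to\pp(A),\dot{\subseteq})$ that you construct. Your unfolding of the composite $\alpha$ and $\gamma$ into the Kleisli-style expressions $m\circ f\circ m^{-1}$ and $m^{-1}\circ f^\sharp\circ m$ (using that direct images preserve unions, and that join in $\pp(A)$ is union) is also accurate.
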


  \begin{example}[Size abstraction, continued]
    Corollary~\ref{cor:nondet-fun} formalises the size
    abstraction step of \ciaopp's cost analysis pipeline,
% :
    in which the semantics of predicates operating on concrete values
    (i.e. data structures in general), are viewed as non-deterministic
    functions for given calling patterns (or modes). These concrete
    predicates are then replaced by purely numerical predicates on
    data \emph{sizes}.
%% the semantics of concrete predicates (non-deterministic functions)
%% operating on concrete values are replaced by purely numerical
%% predicates on data \emph{sizes}.

    More exactly, if we additionally apply
    Proposition~\ref{prop:end-lifting} to raise the discussion at the
    level of operators instead of functions, we obtain the full notion
    of size abstraction: \emph{programs} (i.e. their source code,
    recursively defining functions that operate on data structures),
    \emph{abstracted} into purely numerical programs, \emph{i.e. into
    numerical recurrence equations, viewed as operators}.
  \end{example}

  \begin{remark}
    In the interest of future discovery of connections between topics,
    it may be interesting to observe that, like many generic
    constructions in abstract interpretation,
    Theorem~\ref{th:domain-abstraction} may be viewed as an instance
    of categorical ideas.
    From this viewpoint, Galois connections are simply
    \emph{adjunctions} in $\{0,1\}$-enriched categories, the latter
    being preorders. Monotone functions are functors, and sets of
    functions ordered pointwise are exponential objects (hence the
    claim of pointwise ordering as ``the most natural order structure on sets
    of functions'').

    With this vocabulary, Theorem~\ref{th:domain-abstraction} is a
    \emph{left Kan extension}, being the left adjoint of a pullback,
    and is structurally similar to the definition of existential
    quantifiers in categorical logic~\cite{lawvere1970quantifiers,EnrichedCateogories-Kelly1982}. The
    case of right Kan extensions / universal quantifiers is just the
    dual of Theorem~\ref{th:domain-abstraction}: a Galois connection
    from $(A\to L,\dot{\sqsubseteq})$ to $(X\to L,\dot{\sqsubseteq})$
    (or, equivalently, from $(X\to L,\dot{\sqsupseteq})$ to $(A\to
    L,\dot{\sqsupseteq})$, featuring the solution of a
    \emph{minimisation} instead of maximisation.
  \end{remark}

  \begin{example}[Dimensionality reduction]
    Another application of domain abstraction is the interpretation of
    \emph{dimensionality reduction} (for functional equations) as a
    Galois connection.

    For example, consider the (monotone) recurrence equation
    $\Phi\in\End_{\pleq_\rri}(\nn^2\to\rri)$ below, which may be viewed
    as the (monotone) size equation for a \texttt{merge} predicate (as
    part of a \texttt{merge sort} algorithm).
    % with solution $\lfp\Phi=(x,y)\mapsto x+y$.
    %
    \begin{equation*}\footnotesize
    \begin{aligned}\footnotesize
      \Phi(f)(x,y) = {\scriptsize\begin{cases*}
      \max(f(x-1,y), f(x,y-1)) + 1 & if $x > 0 \land y > 0$ \\
      x                            & if $x > 0 \land y = 0$ \\
      y                            & if $x = 0$
      \end{cases*}}
    \end{aligned}
    \end{equation*}
    Some solvers do not support such multivariate
    equations, so we may try to simplify by introducing
    $m:\nn^2\to\nn$, $m(x,y)=x+y$. %, which exploits $(x-1)+y=x+(y-1)$.
    Using domain abstraction and $\End$-lifting, after some computation,
    we obtain the abstracted single-variable equation
    $\Phi^\sharp\in\End_{\pleq_\rri}(\nn\to\rri)$
    defined by $\Phi^\sharp(f^\sharp)(0)=0$, $\Phi^\sharp(f^\sharp)(1)=1$,
    and $\Phi^\sharp(f^\sharp)(n) = \max(n,1+f^\sharp(n-1))$
    for $n\geq 2$,
    which has solution $(\lfp\Phi^\sharp)(n) = n$.

    Concretising back to $\nn^2\to\rri$, this gives us the upper bound
    $f_{ub}(x,y)=x+y$ on the solution of the original equation, which
    is actually the exact solution in this case.
  \end{example}

  \begin{remark}[The problems of inference]
    Of course, in this section, we have only stated that a Galois
    connection parameterised by $m$ \emph{can be defined}, but we have
    not discussed how to \emph{choose a good $m$}, nor
    how to \emph{compute the transfer functions/abstractions
    $\Phi^\sharp$ in practice}, given such an $m$.

    In the case of size abstraction, these problems have been studied
    by related
    work~\cite{granularity-shortest,caslog-shortest,low-bounds-ilps97-short,sized-types-iclp2013-shortest,plai-resources-iclp14-shortest}. The first
    question concerns \emph{automated size metric inference}
    given a program to be analysed, and the second concerns
    \emph{size abstraction} (recurrence extraction). While techniques
    exist, and powerful results have been known for decades for some
    classes of metrics $m$, many questions remain open when considering
    extensions to larger classes of metrics. The reader is invited to
    consult the related work for further insights.
% more insights on these questions.

    However, very little is known about the case of dimensionality
    reduction, viewed from this domain abstraction perspective, to the
    best of the authors' knowledge: these remain fresh and open
    questions.

    Further discussion on the practical considerations of
    implementing domain abstractions is beyond  the scope of this
    paper; for now, we simply observe that these phenomena
%   only note that these phenomenons
    can be interpreted as Galois connections.
  \end{remark}

\subsection{Interval functions}
\label{subsec:interval-fun}

Domain abstraction and its liftings describe the first step of
Fig.~\ref{fig:abstr-landscape}, while classical codomain abstraction
directly provides the second step, leading to interval functional
equations $\Phi_\ii\in\End(\nn^r\to\ii(\nn)^s\times\ii(\rr))$.

We now discuss in more detail these interval-valued functions, the
nuances of their different representations, the monotonicity
properties of such equations, and how they can (sometimes) be
transformed into classical numerical functional equations (the third
step of Fig.~\ref{fig:abstr-landscape}).

\phantom{.}

We will always consider \emph{monotone} interval equation.
This is justified by the paragraph and proposition below, but may be
skipped in a first read.

This comes from the fact that, for systems/programs/equations that are
interesting to us, we use the powerset construct $\pp$ to interpret
non-determinism, and \emph{assume} that the corresponding $\Phi_\Data$
is $\dot{\subseteq}$-monotone.
This is a very natural property: \emph{more possible behaviours can
only enable more behaviours, not forbid some}.
By unfolding the definition, and considering the resolution of a
recursive call within an expression, one may see that this corresponds
to the idea of \emph{monadic} resolution of non-determinism.

%% by unfolding the definition, and
%% considering the resolution of a recursive call within an expression,
%% one may see that this intuitively corresponds to the idea that the
%% non-determinism represented by $\pp$ is treated \emph{monadically} in
%% our program definition: \emph{a larger set of possible behaviours can
%% only enable more behaviours, not forbid some.}

%% We first discuss monotonicity for interval equations, as a consequence
%% of monotonicity assumptions for the original system. In the cases that
%% are interesting to us, we will always \emph{assume} that our original
%% system/program/equation $\Phi_\Data$ are $\dot{subseteq}$-monotone.
%% This is a very natural property: by unfolding the definition, and
%% considering the resolution of a recursive call within an expression,
%% one may see that this intuitively corresponds to the idea that the
%% non-determinism represented by $\pp$ is treated \emph{monadically} in
%% our program definition: \emph{a larger set of possible behaviours can
%% only enable more behaviours, not forbid some.} The more general case
%% of non-$\dot{subseteq}$-monotone programs would correspond to the
%% treatment of sets as first class objects instead of simply a
%% representation of non-determinism.

%% We are able to work only with \emph{monotone} interval function
%% equation, thanks to this assumption and to the following property (and
%% generalisations of it, e.g. to systems of equations, to multiple
%% metrics, and to non-optimal abstractions).

\begin{proposition}
  Consider a monotone equation
  $\Phi_\Data\in\End_{\dot{\subseteq}}(\Data^r\to\pp(\Data^s\times \rr))$
  (corresponding to the interpretation of a
  non-deterministic procedure with $r$ inputs, $s$ outputs, with cost
  semantics included), %to make its cost in $\rr$ visible),
  and select a size metric $m:\Data\to\nn$.

  %Then,
  The best abstraction
  $\Phi_\ii\in\End(\nn^r\to\ii(\nn)^s\times \ii(\rr))$ constructed by
  using the Galois connections in Fig.~\ref{fig:abstr-landscape} is
  then
  a
  \emph{monotone} interval functional equation.
\end{proposition}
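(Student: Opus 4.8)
The plan is to recognise that $\Phi_\ii$ is, by construction, the $\End$-lifting (Proposition~\ref{prop:end-lifting}) of a single \emph{composite} Galois connection obtained by chaining the size abstraction and the codomain (interval) abstraction of Fig.~\ref{fig:abstr-landscape}. Since the $\End$-lifting of a monotone operator is literally the composite $\alpha\circ\Phi\circ\gamma$ of that operator with the two adjoints, and since both adjoints are monotone by virtue of forming a Galois connection, monotonicity of $\Phi_\ii$ will drop out as a composition of monotone maps. The whole argument therefore reduces to checking that every arrow in the chain is a Galois connection between lattices ordered by the \emph{same flavour} of order, namely inclusion, so that the hypothesis ``$\Phi_\Data$ is $\dot{\subseteq}$-monotone'' is exactly the input required and ``$\Phi_\ii$ is $\dot{\sqsubseteq}_\ii$-monotone'' is exactly the output produced.

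First I would assemble the composite concretisation $\Gamma:(\nn^r\to\ii(\nn)^s\times\ii(\rr))\to(\Data^r\to\pp(\Data^s\times\rr))$ together with its left adjoint $A$. The size-abstraction half is an instance of domain abstraction (Theorem~\ref{th:domain-abstraction}) applied to $m^r:\Data^r\to\nn^r$ on the domain side, combined with powerset lifting (Corollary~\ref{cor:powerset-lifting}, indeed Corollary~\ref{cor:nondet-fun}) of $m^s\times\id_\rr$ on the codomain side; both are Galois connections whose concrete and abstract orders are the respective set-inclusion orders. The codomain-abstraction half is the pointwise (Proposition~\ref{prop:codomain-abstraction}) interval-hull abstraction $(\pp(\nn^s\times\rr),\subseteq)\galois{}{}(\ii(\nn)^s\times\ii(\rr),\sqsubseteq_\ii)$, again a connection between inclusion orders. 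Composing Galois connections yields a Galois connection, and composing monotone adjoints yields monotone adjoints, so $A$ and $\Gamma$ are both monotone for the inclusion orders at the two ends of the chain.

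Then I would invoke Proposition~\ref{prop:end-lifting}: because $\Phi_\Data\in\End_{\dot{\subseteq}}(\Data^r\to\pp(\Data^s\times\rr))$ is monotone, its best abstraction is $\Phi_\ii=\vec{A}(\Phi_\Data)=A\circ\Phi_\Data\circ\Gamma$, which, being a composite of three monotone maps, is monotone for $\dot{\sqsubseteq}_\ii$; the $\End$-lifting functor of Proposition~\ref{prop:end-lifting} already certifies that it lands in $\End_{\dot{\sqsubseteq}_\ii}$. Functoriality of the $\End$-lifting also guarantees that performing the two abstraction steps of the figure successively gives the same operator as applying the $\End$-lifting to the composite connection, so no ambiguity remains in what ``the best abstraction $\Phi_\ii$'' denotes.

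The routine but essential obstacle is the order bookkeeping just described: one must confirm that the order making $\Phi_\Data$ monotone is exactly the concrete order $\dot{\subseteq}$ carried unchanged through the whole chain, and in particular that the interval-hull codomain abstraction is genuinely a Galois connection \emph{for inclusion} -- which is where the treatment of $\varnothing$ (or the passage to the Kaucher intervals of Example~\ref{ex:kaucher}) matters, and where the product-of-intervals abstraction must be checked to be the best abstraction of the componentwise inclusion order. Once this alignment is in place, the conceptual content -- that interval equations inherit monotonicity even when the underlying real-valued recurrence fails to be $\pleq_\rri$-monotone -- is explained precisely by the fact that the assumed monotonicity of $\Phi_\Data$ lives in the inclusion world, which the abstraction respects step by step.
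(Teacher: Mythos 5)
Your proposal is correct and follows essentially the same route as the paper's proof: express $\Phi_\ii$ as $\alpha\circ\Phi_\Data\circ\gamma$ via the $\End$-lifting of the size abstraction (Corollary~\ref{cor:nondet-fun}) and the pointwise interval codomain abstraction (Proposition~\ref{prop:codomain-abstraction}), and conclude monotonicity as a composition of monotone maps. The only (inessential) difference is that you lift the composite Galois connection once, whereas the paper lifts in two stages through the intermediate monotone $\Phi_\pp$ -- an equivalence you yourself justify by functoriality of the $\End$-lifting.
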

\begin{proof}
  We first need to explicit how $\Phi_\ii$ is constructed.

  First, the ideal abstraction $\Phi_\pp$ is obtained as the
  $\End$-lifting of the size abstraction for $\Phi_\Data$ (via
  Corollary~\ref{cor:nondet-fun}).
  Explicitly, if $(\alpha_{ndet},\gamma_{ndet})$ is the connection of
  Corollary~\ref{cor:nondet-fun}, then
  $\Phi_\pp:=\alpha_{ndet}\circ\Phi_\Data\circ\gamma_{ndet}$ by
  $\End$-lifting. Since $\gamma_{ndet}$, $\Phi_\Data$ and
  $\alpha_{ndet}$ are all monotone, this makes $\Phi_\pp$ a monotone
  equation.

  Now, the best abstraction $\Phi_\ii$ is obtained as the
  $\End$-lifting of the codomain abstraction for $\Phi_\pp$, via the
  (non-relational)
  powerset-to-hyperboxes abstraction, i.e. the product
  $\pp(\nn^s\times\rr)\galois{\alpha}{\gamma}\ii(\nn)^s\times\ii(\rr)$
  of interval abstractions.
  Hence, if $(\dot{\alpha},\dot{\gamma})$ is the corresponding
  connection in Proposition~\ref{prop:codomain-abstraction},
  $\Phi_\ii:=\dot{\alpha}\circ\Phi_\pp\circ\dot{\gamma}$. Like above,
  $\Phi_\ii$ is thus monotone as a composition of monotones.
\end{proof}

We now explain how to treat an interval valued function
$\dd\to\ii(\rr)$ as a pair of lower and upper bound functions in
$\dd\to\rri$.

\begin{definition}[Bounds of interval-valued function]
  Consider an interval-valued function $f:\dd\to\ii(\rr)$, which never
  outputs the empty interval. Then, we can uniquely define the lower
  bound and upper bound of each $f(x)$, which can be gathered into two
  functions $f_{lb},f_{ub}:\dd\to\rri$, so that
  $\forall x\in\dd,\,f(x)=[f_{lb}(x),f_{ub}(x)]$.

  To simplify notations, we may use the notion of \emph{an interval of
  functions}, and suggestively write this relationship as
  $f=[f_{lb},\,f_{ub}]$.

  If we allow $f$ to output the empty interval, a convention must be
  chosen, e.g. that $f_{lb}(x)=+\infty$ and $f_{ub}(x)=-\infty$
  whenever $f(x)=\varnothing$.
\end{definition}

\begin{remark}
  If we use classical intervals $\ii(\rr)$, the notions of
  interval-valued functions, pairs of functions, and interval of
  functions are not exactly equivalent.

  This comes from the issue of the empty interval, and its multiple
  representations via a pair of bounds. $f:\dd\to\ii(\rr)$ admits
  several valid representations as $(f_{lb},f_{ub})\in(\dd\to\rri)^2$,
  while the notion of interval of functions $\ii(\dd\to\rr)$ is weaker
  than that of interval-valued function, because the notion of
  $\varnothing$/$\bot$ is more granular in the latter case.

  These uninteresting complications disappear if we accept to work
  with Kaucher intervals (see Ex.~\ref{ex:kaucher}),
  where $(\dd\to\ii_K(\rr),\dot{\sqsubseteq}_{\ii_K})
   \cong \big(\dd\to(\rri\times\rri),\overset{.}{\overline{\geq_\rri\times\leq_\rri}}\big)
   \cong (\dd\to\rri,\dot{\geq}_\rri) \times (\dd\to\rri,\dot{\leq}_\rri)$,
  which may also be suggestively written $\ii_K(\dd\to\rri)$.
\end{remark}

\begin{remark}[From interval equations to classical equations]
  Thanks to these isomorphisms, an interval functional equation on
  $f:\dd\to\ii_K(\rr)$ can always be viewed as a \emph{system} of
  classical numerical function equations on the pair of unknowns
  $(f_{lb},f_{ub})\in(\dd\to\rri)^2$.

  Importantly, this transformation \emph{preserves monotonicity}, from
  $\dot{\sqsubseteq}_{\ii_K}$ to
  $\dot{\geq}_\rri\times\dot{\leq}_\rri$.

  Because of this, techniques to bound the solutions of numerical
  systems of equations (as in~\cite{order-recsolv-sas24-nourl} or this
  paper) can also be used to bound the solutions of interval
  functional equations.
\end{remark}

Note that with this construction, the equation $\Phi_\ii$ is
transformed into a \emph{coupled system of equations}.

In some favourable cases this system can be \emph{separated}, i.e.
decoupled, into two independent equations $\Phi_{\nn,lb}$ and
$\Phi_{\nn,ub}$.
It can be shown that this occurs when, in addition to being
$\dot{\sqsubseteq}_\ii$-monotone (where
$\sqsubseteq_\ii=\geq_\rri\times\leq_\rri$), the operator
$\Phi_\ii$ is also \emph{$\dot{\leq}_\ii$-monotone}, for the new order
$\leq_\ii\,:=\,\leq_\rri\times\leq_\rri$, defined as in
\cite{ModalItvBook2014-short}.

Further favourable properties can be exploited to transform
numerical-set equations $\Phi_\pp$ or $\Phi_\ii$ into purely numerical
functions equations, trying to make these equations as precise as
possible. However, these subtle analyses are outside the scope of this
paper, and will be described in future work.

\subsection{Towards \boundariesbounds domains: reminders on \emph{constraint domains}, and the case of functions}
\label{subsec:B-bound-intro}

We are now ready to set the stage for \emph{\boundariesbounds}, the
key family of abstract domains discussed in this paper, and the last
Galois connection of Fig.~\ref{fig:abstr-landscape}.

Our \boundariesbounds are an instance of the idea of
\emph{constraint domains}: complex concrete objects are abstracted by
conjunctions of constraints taken in some preselected set
$\mathfrak{C}$ of simpler, manageable constraints.
This is an ubiquitous concept in abstract interpretation, and we apply
it to the case of functions, leading to abstractions of the shape
$(\dd\to\rri,\pleq) \galois{\alphaB}{\gammaB} (\pp(\boundaries),\supseteq),$
where $\boundaries$ is a preselected set of \emph{boundary functions},
as will be detailed in the next section (abstracting pointwise-ordered
functions, or interval-valued functions, or sets of functions).
The main appeal of these \boundariesbound domains is that,
for functions, we will be able to work relatively easily with
\emph{classes} of highly non-linear numerical invariants, which is a
challenge for classical numerical abstract domains.

In this section, we simply wish to draw the reader's attention to this
reversed inclusion order $\supseteq$, to avoid confusion, and to
quickly recall a few classical examples of constraint domains to build
intuition.

The reversed inclusion is simply a consequence of the duality between
objects and constraints (or between generators and constraints), which
may be schematised as
\begin{center}
  Bigger object ($\subseteq$) $\longrightarrow$ less constraints  ($\supseteq$),

  Smaller object ($\supseteq$) $\longrightarrow$ more constraints  ($\subseteq$).
\end{center}

\begin{example}[Classical numerical abstract domains]
  Most classical numerical abstract domains follow this pattern:
  the domain of polyhedra~\cite{CousotH78-short} abstracts sets of
  values by conjunctions of \emph{affine inequalities}, so that the
  inclusion of a set $E$ in the concretisation of a polyhedron can be
  written as
  $\forall x\in E,\,\bigwedge_{i} \sum_j a_{i,j}x_j \leq b_i$.

  Similarly, abstract values in the domains of hyperboxes (intervals),
  zones and octagons may be seen as conjunctions of inequalities taken
  in classes of constraints that are smaller than the full
  class of affine inequalities~\cite{CousotCousot76-short,Mine2001-zones-short,Mine2001-octagons-short}.
\end{example}

\begin{example}[Template domains]
  Instead of considering an infinite family of functions to generate
  the set of constraints $\mathfrak{C}$, it is also possible to start
  from finitely many functions, for efficiency or expressivity
  reasons.
  Domains constructed in this way are often called \emph{template
  domains}, e.g.
  %
  % One example of this are
  the \emph{$P$-level sets}
  of~\cite{AdjeGoubaultPlevels-LMCS11-short}, which generalise the affine
  templates of~\cite{SriramAffineTemplatesVMCAI05-short}, and are an example
  of non-linear numerical abstract domain.

  For $P$-level sets, a finite family $P\subseteq(\rr^d\to\rr)$ of
  possibly non-linear functions is selected, and $\pp(\rr^d)$ is then
  abstracted by conjunctions on inequalities of the form
  $\bigwedge_{p\in P} p(x)\leq v(p)$, where $v:P\to\rri$.
\end{example}

\begin{example}[Hilbert's Nullstellensatz~\cite{Hilbert1893-Nullstellensatz}]
  \label{ex:hilbert-nullstellensatz}
  The reversion of the inclusion order is very visible in the case of
  Hilbert's Nullstellensatz, one version of which may be stated by
  giving properties of a Galois connection
  %saying that there is a Galois connection
  $(\pp(\mathbb{C}^k),\subseteq)\galois{\alpha}{\gamma}(\pp(\mathbb{C}[x_1,\ldots,x_k]),\supseteq)$,
  with $\mathbb{C}$ the complex numbers and $\mathbb{C}[x_1,\ldots,x_k]$
  polynomials in $k$ variables and complex coefficients.
  %
  %% where $\mathbb{C}$ are the complex numbers (or another algebraically
  %% closed field) amd $\mathbb{C}[x_1,...x_k]$ is the set of polynomials
  %% in k variables with coefficients in $\mathbb{C}$, and by giving its
  %% properties.
  %
  The connection is given by
  $\alpha(E) = \{p\in\mathbb{C}[x_1,\ldots,x_k]\,|\,\forall \vec{x}\in E,p(\vec{x})=0\}$
  and
  $\gamma(P) = \{\vec{x} \in \mathbb{C}^k\,|\, \bigwedge_{p\in P}p(x)=0\}$.
  % and
  The images of the closure operators
  $\gamma\circ\alpha$ and $\alpha\circ\gamma$ are, respectively,
  the set of \emph{algebraic varieties} and the set of \emph{ideals of
  polynomials}.

  This fact has been used to construct numerical abstract domains of
  \emph{polynomial equalities}~\cite{RodriguezKapurSas04-short,MullerOlm04-short}
  Once again, sets of values are abstracted by conjunctions of
  constraints taken in a preselected set (the full class of
  polynomials), which leads to a reversed inclusion order. % $\supseteq$.

  The reader may note that, for any $E\in\pp(\mathbb{C}^k)$, the set
  of polynomials $\alpha(E)$ is an \emph{infinite} set of satisfied
  constraints. However, despite being infinite, it has a lot of
  redundancies: it is an \emph{ideal}, and by Hilbert's finite basis
  theorem~\cite{Hilbert1890-Basistheorem}, it even admits a \emph{finite
  representation}.
  %, one proof of which uses Dickson's lemma~\cite{...}, a statement
  % about the possibility to find a finite number of generators for
  % upper-closed subsets of $\pp(\nn^k), where

  We will be looking for similar phenomena with \boundariesbound
  domains, to recover finite representability. In this prospect, it
  may be enlightening to note that one proof of Hilbert's basis
  theorem~\cite{gordan1899-hilbertbasisbydickson} uses Dickson's
  lemma~\cite{AichingerDicksonSurvey2020-short}, which states that
  up-closed subsets of $\pp(\nn^k)$ are finitely generated (where
  $\nn^k$ is given the product order).
\end{example}

Many other constructions in abstract interpretation, computer science
and logic follow this pattern: e.g. types viewed as constraints on
sets of values~\cite{Cousot97-types-as-abstract-interpretation-short} (weaker types
allow more values), or in a sense the duality between syntax and
semantics, via the relation between models and theories~\cite{SmithNotesGaloisSyntaxSemantics}
(more axioms are satisfied by fewer models).
We now specialise these ideas to create \boundariesbound domains.

\section{The \boundariesbound Domains: Abstracting Functions with Simpler Functions}
\label{sec:boundariesbound-domains}

We are now ready to present our \boundariesbound domains.
In this section, we define them abstractly via the corresponding
Galois connections. In the next sections, we will implement them as
abstract domains, using specific choices of $\boundaries$ (with
computable abstract operators, transfer functions, etc.), such as
polynomials and products of polynomials with exponentials.

Throughout this section, we select a subset
$\boundaries\subseteq(\dd\to\rri)$ of functions, which we call
\emph{boundary functions}. The domains we define are parametric in
this choice of $\boundaries$.
We abstract concrete %, complicated
functions $f:\dd\to\rri$ by
conjunctions of bounds $b\in\boundaries$.

For simplicity, we begin by considering only upper bounds: this yields the
domain of \emph{\boundariesubs}
% \emph{$\boundaries$-upper bounds}
$D^\sharp :=
(\pp(\boundaries),\supseteq)$, related to concrete functions by a
Galois connection $(\gammaB,\alphaB)$ defined below, such that we have
$f \pleq \gammaB(F^\sharp_{ub})\!\!\iff\!\!\bigwedge_{f^\sharp_i \in F^\sharp_{ub}} f\pleq f^\sharp_i,$
as illustrated in Fig.~\ref{fig:B-bounds-gamma-plot-1}.
\begin{definition}[\boundariesubs]
  The domain of \boundariesubs is the complete lattice
  $(\pp(\boundaries),\supseteq)$. Here,
% Notice that
  join is given by
  \emph{intersection} and meet by \emph{union}.
\end{definition}
To define
% obtain
our desired Galois connection, we must consider \emph{all}
upper bounds $b\in\boundaries$ of $f$, since
% to circumvent the fact that
there may be no best one ($\boundaries$ is not necessarily
closed under infinite meets, i.e. pointwise minima).
\begin{proposition}
  The following is a Galois connection.
  \begin{align*}
        (\dd\to\rri,\pleq)
        &\galois{\alphaB}{\gammaB}
         (\pp(\boundaries),\supseteq)
        \\
        f
        &\longmapsto
        \big\{f_{ub}^\sharp\in\boundaries\,\big|\,f\pleq f_{ub}^\sharp\big\}
        \\
        \!\!\Big(\vn\mapsto
             \min_{f_{ub}^\sharp\in F_{ub}^\sharp}f_{ub}^\sharp(\vn)\Big)
        &\longmapsfrom
        F_{ub}^\sharp
      \end{align*}
\end{proposition}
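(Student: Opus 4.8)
The plan is to verify the Galois connection directly from the definition, namely by establishing the adjunction property
\[
  \alphaB(f) \supseteq F_{ub}^\sharp
  \iff
  f \pleq \gammaB(F_{ub}^\sharp),
\]
for every $f\in(\dd\to\rri)$ and every $F_{ub}^\sharp\in\pp(\boundaries)$. Here I must keep in mind that the order on the abstract side is the \emph{reversed} inclusion $\supseteq$, as stressed in Section~\ref{subsec:B-bound-intro}, so that being ``more abstract'' (larger in $\supseteq$) means having \emph{fewer} constraints. Before proving the adjunction, I would first confirm that $\gammaB(F_{ub}^\sharp) = \big(\vn\mapsto\min_{f_{ub}^\sharp\in F_{ub}^\sharp}f_{ub}^\sharp(\vn)\big)$ is well-defined as a map into $(\dd\to\rri,\pleq)$, which it is since $\rri$ is a complete lattice and the pointwise infimum of any family always exists (with the convention that the empty infimum is $\dot\top = \vn\mapsto+\infty$, consistent with $F_{ub}^\sharp=\varnothing$ being the top element for $\supseteq$).

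For the main equivalence I would argue by unfolding both sides into the pointwise order. On the right, $f\pleq\gammaB(F_{ub}^\sharp)$ means $\forall\vn,\, f(\vn)\leq\min_{f_{ub}^\sharp\in F_{ub}^\sharp}f_{ub}^\sharp(\vn)$, which by the universal property of the infimum is equivalent to $\forall f_{ub}^\sharp\in F_{ub}^\sharp,\,\forall\vn,\, f(\vn)\leq f_{ub}^\sharp(\vn)$, i.e.\ $\forall f_{ub}^\sharp\in F_{ub}^\sharp,\, f\pleq f_{ub}^\sharp$. On the left, $\alphaB(f)\supseteq F_{ub}^\sharp$ means $F_{ub}^\sharp\subseteq\{f_{ub}^\sharp\in\boundaries\mid f\pleq f_{ub}^\sharp\}$, i.e.\ again $\forall f_{ub}^\sharp\in F_{ub}^\sharp,\, f\pleq f_{ub}^\sharp$. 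The two unfoldings coincide, which gives the adjunction. The one technical point to handle carefully is the interchange of the two universal quantifiers and the passage between ``$f(\vn)$ below the pointwise min'' and ``$f(\vn)$ below each member'', but this is exactly the defining property of a greatest lower bound and requires no hypothesis on $\boundaries$ beyond $\boundaries\subseteq(\dd\to\rri)$.

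Finally I would note that $(\pp(\boundaries),\supseteq)$ is indeed a complete lattice (it is the powerset lattice with order reversed, with join $=$ intersection and meet $=$ union, as recorded in Definition), so that the notion of Galois connection applies, and that $\alphaB$ is monotone: if $f\pleq g$ then any upper bound of $g$ is an upper bound of $f$, giving $\alphaB(g)\subseteq\alphaB(f)$, i.e.\ $\alphaB(f)\supseteq\alphaB(g)$, as required for $\pleq$-to-$\supseteq$ monotonicity. I do not expect a serious obstacle here: the whole statement is a clean instance of the standard fact that ``satisfies every constraint in a set'' defines a Galois connection between objects and constraint-sets, and the only place where care is genuinely needed is bookkeeping the reversed order and the empty-set boundary case. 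As an alternative, one could cite the general principle that $\gammaB$ preserves arbitrary meets of $(\pp(\boundaries),\supseteq)$ (meets there are unions, and $\gammaB(\bigcup_i F_i)=\dot{\bigwedge}_i\gammaB(F_i)$ since an infimum over a union is the infimum of the infima), which by the adjoint functor theorem for complete lattices guarantees the existence of the left adjoint $\alphaB$ and forces its formula, mirroring the abstract argument already used for Theorem~\ref{th:domain-abstraction}.
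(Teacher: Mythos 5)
Your proposal is correct, and it is in fact more detailed than what the paper offers: the paper states this proposition without proof, treating it as an immediate instance of the standard ``objects versus constraint-sets'' adjunction recalled in Section~\ref{subsec:B-bound-intro}. Your direct verification is exactly the intended argument -- both sides of the adjunction $\alphaB(f)\supseteq F_{ub}^\sharp \iff f\pleq\gammaB(F_{ub}^\sharp)$ unfold to $\forall f_{ub}^\sharp\in F_{ub}^\sharp,\,f\pleq f_{ub}^\sharp$, via the greatest-lower-bound property of the pointwise infimum in the complete lattice $\rri$ -- and you correctly identify the only genuine bookkeeping points: the reversed order $\supseteq$ on the abstract side, the empty-set case $\gammaB(\varnothing)=\dot\top$, and the fact that the ``$\min$'' in the statement is really an infimum (attainment is never used, only the universal property). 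Your closing alternative, deriving $\alphaB$ from the fact that $\gammaB$ turns meets of $(\pp(\boundaries),\supseteq)$ (i.e.\ unions) into pointwise infima, is precisely the abstract method the paper does spell out for Theorem~\ref{th:domain-abstraction}, so it is a natural uniformisation: one could prove both that theorem and this proposition by the same adjoint-existence argument, at the cost of being less explicit than your elementary unfolding. The redundant monotonicity check of $\alphaB$ at the end is harmless (it already follows from the adjunction), and no hypothesis on $\boundaries$ beyond $\boundaries\subseteq(\dd\to\rri)$ is needed, as you say.
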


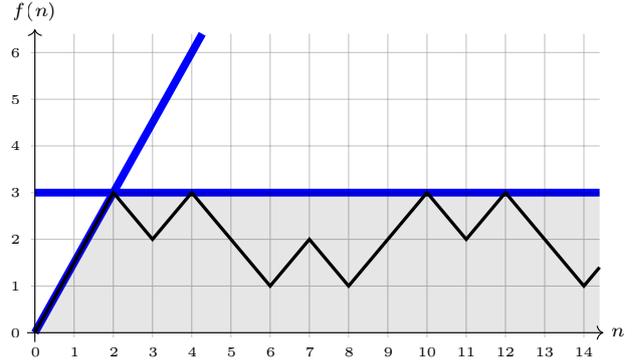
\begin{figure}
  \resizebox{\linewidth}{!}{%
      %\resizebox{!}{{\dimexpr.3\textheight+1.25em}}{%
      {\begin{tikzpicture}[scale=0.5,yscale=1.2]
          \begin{scope}[transparency group]
              \begin{scope}[blend mode=multiply]
        \draw[very thin,color=gray!50!white] (-0.1,-0.1) grid (14.4,6.4);
        \draw[->] (-0.2,0) -- (14.5,0) node[right] {\scriptsize $n$};
        \draw[->] (0,-0.2) -- (0,6.5) node[above] {\scriptsize $f(n)$};
        \foreach \x in {0,...,14} {
              \node [anchor=north] at (\x,-0.5-1ex) {\tiny $\x$};
          }
          \foreach \y in {0,...,6} {
              \node [anchor=east] at (-0.5-0.5em,\y) {\tiny $\y$};
          }

        %% f
        \draw[-, very thick, black]
           (0,0)--(2,3)--(3,2)--(4,3)--(6,1)--(7,2)--(8,1)--(10,3)--(11,2)--(12,3)--(14,1)--(14.4,1.4);

        %\only<2->{
        %% extremal bounds
        \draw[-, line width=1mm, blue] (0, 0) -- (2,3) -- (4.267,6.4);
        \draw[-, line width=1mm, blue] (0, 3) -- (14.4,3);
        %}
        %\only<2-3>{
        %% Possible functions
        \fill[fill=black!10] (0,0)--(2,3)--(14.4,3)--(14.4,0);

        %}
        %% \only<4->{
        %% %% bonus bounds
        %% \draw[-, very thick, blue, dashed] (0,2)--(2,3)--(12,6)--(12.8,6.4);
        %% \draw[-, blue!50, dashed] (0, 5.5) -- (14.4,5.5);
        %% \draw[-, blue!50, dashed] (0, 3) -- (3,6.4);
        %% \draw[-, blue!50, dashed] (0, 3.5) -- (14.4,4.5);
        %% \draw[-, blue!50, dashed] (0, 4.5) -- (14.4,5.5);
        %% %% overall fog
        %% \fill[fill=blue!20] (0,0)--(2,3)--(14.4,3)--(14.4,6.4)--(0,6.4);
        %% }
            \end{scope}
        \end{scope}
      \end{tikzpicture}}
      }
  \caption{\small Concrete functions with image in the grey fog (like
    the function in black) are bounded by the conjunction of the two
\boundariesubs
% $\boundaries$-upper bounds
    (blue). Here, $\boundaries$ is simply the set of affine bounds.}
  \label{fig:B-bounds-gamma-plot-1}
\end{figure}

\noindent This can be generalised to support both upper and lower bounds:
%(concretisations of)
elements of this new domain may then be pictured as
``flowpipes'', with relatively simple boundaries (pointwise
max/min of elements of $\boundaries$),

\begin{definition}[\boundariesbounds]
  Our full domain of \boundariesbounds is the lattice
  $(\pp(\boundaries),\supseteq)\times(\pp(\boundaries),\supseteq)$.
  It is related to concrete interval-valued functions by the following
  Galois connection (or similarly for Kaucher intervals).\\
  \begin{minipage}{\linewidth}
  \begin{equation*}\small\centering
    \begin{gathered}
        (\dd\to\ii(\rr),\dot{\sqsubseteq}_\ii)
        %\galois{\alphaB}{\gammaB} % Notation hidden to avoid conflict
        \galois{\alphaB}{\gammaB}
         (\pp(\boundaries),\supseteq)\times(\pp(\boundaries),\supseteq)
        \\
        \left(\vn\mapsto\left[\max_{f_{lb}^\sharp\in F_{lb}^\sharp}f_{lb}^\sharp(\vn),\;
             \min_{f_{ub}^\sharp\in F_{ub}^\sharp}f_{ub}^\sharp(\vn)\right]\right)
        \longmapsfrom
        (F_{lb}^\sharp,\,F_{ub}^\sharp)
        \\
        (f_{lb},\,f_{ub})
        \longmapsto
        \left(
          \begin{array}{l}
            \big\{f_{lb}^\sharp\in\boundaries\,\big|\,
                    \forall\vn\in\dd,\,f_{lb}^\sharp(\vn)\leq f_{lb}(\vn)\big\},\\[2pt]
            \big\{f_{ub}^\sharp\in\boundaries\,\big|\,
                     \forall\vn\in\dd,\,f_{ub}^\sharp(\vn)\geq f_{ub}(\vn)\}.
          \end{array}
        \right)
  \end{gathered}
  \end{equation*}
  \end{minipage}
\end{definition}

The full domain of \boundariesbounds is useful even when we are
only interested in obtaining upper bound information (e.g. a cost upper
bound), such as when abstracting operations like subtraction or
multiplication with negatives.
However, in this paper, for the sake of presentation, we often
simplify and focus on \boundariesubs,
% $\boundaries$-ub,
with positive-valued functions $\dd\to\rri_+$.

\begin{remark}[Abstracting functions or sets of functions?]
  For a function $f:\dd\to\rri$, there is no difference between
  providing an upper bound $f_{ub}$ in $(\dd\to\rri,\pleq_\rri)$
  (like the upper boundary $f_{ub}$ of the grey area in
  Fig.~\ref{fig:B-bounds-gamma-plot-1}),
  or saying that it belongs to a set $F\in(\pp(\dd\to\rri),\subseteq)$
  that it is upper bounded (for inclusion) by a \emph{down-closed}
%
  % (LR): I agree that downward-closed is clearer for a first read,
  % but I think that "up-closed" (of which down-closed is the
  % symmetric version) is easier to use (orally, to think,
  % etc.) and is shorter (we should think of space). Both are used in
  % the literature. I propose to stay with down-closed for now, for
  % consistency.
  % \plgnote{\emph{downward-closed} instead?}
%
  set $F_{ub}\in\pp(\dd\to\rri)$ (like the set of functions whose
  graph lies within the grey area in
  Fig.~\ref{fig:B-bounds-gamma-plot-1}).

  Hence, the only difference lies in the intuitive interpretation if we
  choose to switch of vocabulary and say that \boundariesbounds
  are particular abstractions of \emph{sets of functions} (with
  concretisation in $(\pp(\dd\to\rri),\subseteq)$), rather than abstractions, i.e.
  bounds, of \emph{functions} (with concretisation in
  $(\dd\to\rri,\pleq)$ for \boundariesubs,
% $\boundaries$-ub,
  or in $(\dd\to\ii(\rr),\dot{\sqsubseteq}_\ii)$ for full
  \boundariesbounds). These nuances are only a matter of
  perspective.
% question of point of view.
  In this paper, we predominantly adopt the latter,
  as it allows us to directly consider iterating an operator-equation
  in function space (we can visualise a rising function).
\end{remark}

We observe that, a priori, $\alphaB$ may introduce infinitely many
elements. This might seem problematic, as we are replacing a single
non-representable function
% since we  have replaced non-representable functions
with an infinite collection of representable ones, potentially resulting
in a collection that is itself non-representable.
% so that the collection itself may still not be representable.

However, there is in fact significant redundancy in $\alphaB(f)$, as
illustrated in Fig.~\ref{fig:B-bounds-alpha-plot-2} and formalised in
the propositions below.  Thanks to this redundancy, finite
representations are often still possible -- much like in the case of
% we can still hope for finite representations in many cases,
% similarly to what happened with
algebraic varieties in Example~\ref{ex:hilbert-nullstellensatz}.

Nonetheless, we are not fortunate enough to have a general version of
Hilbert's basis theorem at our disposal: in some cases, a best
\emph{finite} representation does not exist. In such scenarios, we
must discard some generators, sacrificing precision (an issue familiar
from, e.g. polyhedral abstractions of disks). This can be addressed
heuristically via a \texttt{normalise}/\texttt{widen} operator.
%% Nevertheless, we are not lucky enough to have access to a general
%% version of Hilbert's basis theorem: in some cases, there is no best
%% \emph{finite} representation, and we must accept to discard some
%% generators and lose precision (similarly to the well-known case of a
%% polyhedral abstraction of a disk). This can be done by introducing
%% some heuristic \texttt{normalise}/\texttt{widen} operator.

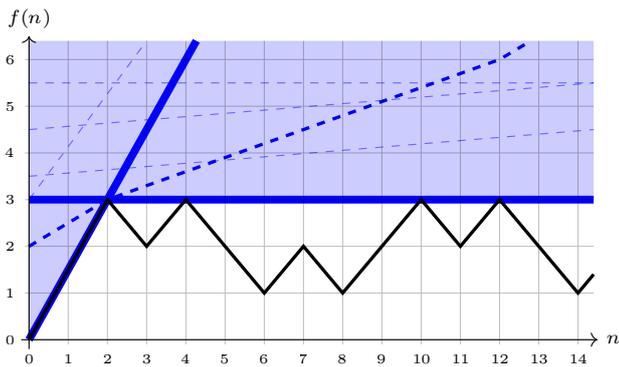
\begin{figure}
  \resizebox{\linewidth}{!}{%
      %\resizebox{!}{{\dimexpr.3\textheight+1.25em}}{%
      {\begin{tikzpicture}[scale=0.5,yscale=1.2]
          \begin{scope}[transparency group]
              \begin{scope}[blend mode=multiply]
        \draw[very thin,color=gray!50!white] (-0.1,-0.1) grid (14.4,6.4);
        \draw[->] (-0.2,0) -- (14.5,0) node[right] {\scriptsize $n$};
        \draw[->] (0,-0.2) -- (0,6.5) node[above] {\scriptsize $f(n)$};
        \foreach \x in {0,...,14} {
              \node [anchor=north] at (\x,-0.5-1ex) {\tiny $\x$};
          }
          \foreach \y in {0,...,6} {
              \node [anchor=east] at (-0.5-0.5em,\y) {\tiny $\y$};
          }

        %% f
        \draw[-, very thick, black]
           (0,0)--(2,3)--(3,2)--(4,3)--(6,1)--(7,2)--(8,1)--(10,3)--(11,2)--(12,3)--(14,1)--(14.4,1.4);

        %\only<2->{
        %% extremal bounds
        \draw[-, line width=1mm, blue] (0, 0) -- (2,3) -- (4.267,6.4);
        \draw[-, line width=1mm, blue] (0, 3) -- (14.4,3);
        %}
        %\only<2-3>{
        %% Possible functions
        %%\fill[fill=black!10] (0,0)--(2,3)--(14.4,3)--(14.4,0);

        %}
        %\only<4->{
        %% bonus bounds
        \draw[-, very thick, blue, dashed] (0,2)--(2,3)--(12,6)--(12.8,6.4);
        \draw[-, blue!50, dashed] (0, 5.5) -- (14.4,5.5);
        \draw[-, blue!50, dashed] (0, 3) -- (3,6.4);
        \draw[-, blue!50, dashed] (0, 3.5) -- (14.4,4.5);
        \draw[-, blue!50, dashed] (0, 4.5) -- (14.4,5.5);
        %% overall fog
        \fill[fill=blue!20] (0,0)--(2,3)--(14.4,3)--(14.4,6.4)--(0,6.4);
        %}
            \end{scope}
        \end{scope}
      \end{tikzpicture}}
      }
  \caption{\small Abstraction of a concrete function $f$ (black)
    by a set of bounds, where $\boundaries$ is the set of affine functions.
    The upper bounds $b\in\boundaries$ define
% trace
    the light blue shaded region. The two solid blue lines $b_1$ and
    $b_2$ are extremal elements of this set of bounds
    (generators). The smaller dashed lines are redundant: they are
    implied by the $b_i$ as they lie above them. The larger dashed
    line is also redundant, even though it is minimal, because it is a
    convex combination of $b_1$ and $b_2$.}
  \label{fig:B-bounds-alpha-plot-2}
\end{figure}

\begin{definition}[Up- and down-closed sets]
% [Up-closed sets]
  For any ordered set $(L,\leq)$, the set of \emph{up-closed sets} of
  elements of $L$ is denoted by $\pp_\uparrow(L)$ and
  defined as
\begingroup
  \abovedisplayskip=2pt
  \belowdisplayskip=2pt
  $$\{U\in\pp(L)\,|\,\forall x\in U,\,\forall y\in L,\,x\leq
  y\Rightarrow y\in U\}.$$
\endgroup
  The set of \emph{down-closed sets} $\pp_\downarrow(L)$ is defined similarly.
\end{definition}

\begin{definition}[Upward and downward closures]
% [Upper closure]
  For any ordered set $(L,\leq)$, we define the \emph{upward closure} operator
  $\mathop{\uparrow}:\pp(L)\to\pp_\uparrow(L)$ by
% PLG. Changed L -> S
$$\mathop{\uparrow}{S} = \{y\in L\,|\,\exists x \in S,\,x\leq y\}.$$
%  $$\mathop{\uparrow}{S} = \{y\in L\,|\,\exists x \in L,\,x\leq y\}.$$
  %
  The \emph{downward closure} operator $\mathop{\downarrow}$ is defined analogously.
\end{definition}

\begin{proposition}
  For the \boundariesubs
% $\boundaries$-upper bound
  abstraction, the image of $\alpha_\boundaries$ is contained in the set of up-closed sets
$\pp_\uparrow(\boundaries)$.
%% (LR) already defined
%% , where
%% $\pp_\uparrow(\boundaries)$\\
%%  $:=\{F^\sharp\!\subseteq\!\boundaries\,|\,
% PLG. Not relevant. Just changed order to be consistent with
% definitions of up- and down-closed sets
% LR: Agreed. Commented.
%  \forall f^\sharp\!\in\! F^\sharp,\forall b\!\in\!\boundaries,\,
%  f^\sharp \mathop{\dot{\leq}} b \Rightarrow b\in F^\sharp\}.$$
%  \forall b\!\in\!\boundaries,\forall f^\sharp\!\in\! F^\sharp,\,
%   b\mathop{\dot{\geq}}f^\sharp\Rightarrow b\in F^\sharp\}.$$
Moreover, $\alpha_\boundaries\circ\gamma_\boundaries=\mathop{\uparrow}$.
  %
  %% is defined as the collection of all sets
  %% $F^\sharp\in\pp(\boundaries)$ such that for all $b\in\boundaries$
  %% and $f^\sharp\in F^\sharp$, $b\dot{\geq} f^\sharp$ implies
  %% $b\inF^\sharp$.
  %
The same result holds for lower bounds, using $\pp_\downarrow(\boundaries)$.
\end{proposition}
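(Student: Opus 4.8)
The plan is to reduce the proposition to two independent verifications — up-closedness of every image $\alphaB(f)$, and the identity $\alphaB\circ\gammaB=\mathop{\uparrow}$ — and in each case to unfold the pointwise definitions of $\alphaB$ and $\gammaB$ and argue purely by transitivity of $\pleq$, much as in the kernel computation concluding the proof of Theorem~\ref{th:domain-abstraction}. Throughout I would keep in mind that $\boundaries$ carries the order $\pleq$ inherited from $(\dd\to\rri,\pleq)$, so that $\mathop{\uparrow}$ denotes the upward closure for this pointwise order.

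First I would show $\alphaB(f)\in\pp_\uparrow(\boundaries)$ for every $f:\dd\to\rri$. By definition $\alphaB(f)=\{b\in\boundaries \mid f\pleq b\}$, so if $b\in\alphaB(f)$ and $b'\in\boundaries$ satisfies $b\pleq b'$, then $f\pleq b\pleq b'$ yields $b'\in\alphaB(f)$. Hence $\alphaB(f)$ is up-closed, which is the first sentence of the proposition; the lower-bound case will be the exact order dual.

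The second claim is the substantive one. I would first compute the composite explicitly: for $F^\sharp\in\pp(\boundaries)$, writing $g:=\gammaB(F^\sharp)=\big(\vn\mapsto\min_{b\in F^\sharp}b(\vn)\big)$, we have $\alphaB(\gammaB(F^\sharp))=\{c\in\boundaries \mid g\pleq c\}$, i.e. the boundary functions lying above the lower envelope of $F^\sharp$. I would then prove the desired equality $\alphaB\gammaB(F^\sharp)=\mathop{\uparrow}F^\sharp$ by two inclusions. The inclusion $\mathop{\uparrow}F^\sharp\subseteq\alphaB\gammaB(F^\sharp)$ is straightforward: if $c\in\mathop{\uparrow}F^\sharp$ there is some $b\in F^\sharp$ with $b\pleq c$, and since $g\pleq b$ pointwise (the envelope lies below each of its constituents), transitivity gives $g\pleq c$, hence $c\in\alphaB\gammaB(F^\sharp)$.

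The reverse inclusion $\alphaB\gammaB(F^\sharp)\subseteq\mathop{\uparrow}F^\sharp$ is the step I expect to be the main obstacle. It demands that any boundary function $c$ lying above the pointwise minimum $g=\min_{b\in F^\sharp}b$ already dominate a \emph{single} generator $b\in F^\sharp$ everywhere — a quantifier swap from ``for every $\vn$ some $b$ has $b(\vn)\le c(\vn)$'' to ``some $b$ has $b(\vn)\le c(\vn)$ for every $\vn$''. This is exactly the delicate point, since when generators cross it is not obviously valid: a convex combination of two crossing bounds lies above their lower envelope while dominating neither, precisely the ``larger dashed line'' of Fig.~\ref{fig:B-bounds-alpha-plot-2}. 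I would therefore focus my effort here, checking whether the swap needs an extra hypothesis on $\boundaries$, or whether it must be reconciled with the convexity phenomenon later formalised in Theorem~\ref{th:convexity-constraints}; pinning down when the minimal bounds above $g$ coincide with $\mathop{\uparrow}F^\sharp$ is where the genuine content of the argument lies. The lower-bound statement would then follow by dualising verbatim: replace upper bounds by lower bounds, $\min$ by $\max$, the lower envelope by the upper envelope, and $\pp_\uparrow(\boundaries)$ by $\pp_\downarrow(\boundaries)$.
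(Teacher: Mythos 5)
Your treatment of the first claim coincides with the paper's: the paper's entire proof is the single line ``Immediate from the definition of $\alpha_\boundaries(f)$'', which is exactly your transitivity argument for up-closedness (and its order dual for lower bounds). Your explicit computation of the composite and your proof of the inclusion $\mathop{\uparrow}F^\sharp \subseteq \alphaB\gammaB(F^\sharp)$ are also correct; that inclusion even follows abstractly, since $\alphaB\gammaB$ is extensive for $\subseteq$ (the abstract order being $\supseteq$) and its image consists of up-closed sets, so $\alphaB\gammaB(F^\sharp)$ is an up-closed superset of $F^\sharp$ and hence contains $\mathop{\uparrow}F^\sharp$.

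The obstruction you flagged for the reverse inclusion is not a difficulty to be engineered around: it is a genuine counterexample, and the paper's one-line proof silently glosses over it. Take $\boundaries$ the affine bounds of Section~\ref{subsec:affine-bounds}, $b_1=(n\mapsto n)$, $b_2=(n\mapsto 3)$, $F^\sharp=\{b_1,b_2\}$, so $\gammaB(F^\sharp)=(n\mapsto\min(n,3))$. The convex combination $c=\tfrac{1}{2}b_1+\tfrac{1}{2}b_2=(n\mapsto n/2+3/2)$ lies in $\boundaries$ and satisfies $\gammaB(F^\sharp)\pleq c$ (check $n\leq 3$ and $n\geq 3$ separately), so $c\in\alphaB\gammaB(F^\sharp)$; yet $c(0)<b_2(0)$ and $c(5)<b_1(5)$, so $c\notin\mathop{\uparrow}F^\sharp$. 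The same $c$ also escapes $\mathop{\uparrow}\{b_1,b_2\}$ when $F^\sharp$ is replaced by its up-closure, so no restriction to up-closed arguments helps; the equality $\alphaB\circ\gammaB=\mathop{\uparrow}$ holds only on the image of $\alphaB$ (where it is the triviality $\alphaB\gammaB\alphaB=\alphaB$ combined with up-closedness of $\alphaB(f)$), or under hypotheses that rule out crossing generators, e.g. $\boundaries$ totally ordered by $\pleq$. The correct general statement is the inclusion $\alphaB\gammaB(F^\sharp)\supseteq\mathop{\uparrow}F^\sharp$ that you proved. Note that the paper itself records the phenomenon elsewhere: the caption of Fig.~\ref{fig:B-bounds-alpha-plot-2} (the ``larger dashed line'' that is redundant yet minimal, being a convex combination of $b_1$ and $b_2$) and Theorem~\ref{th:convexity-constraints} (which makes $\alphaB\gammaB(F^\sharp)$ convex for convex $\boundaries$, while $\mathop{\uparrow}F^\sharp$ typically is not) both contradict the stated equality. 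Your write-up, which proves everything that is actually true and isolates precisely the false step, is more accurate than the paper's proof.
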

\begin{proof}
  Immediate from the definition of $\alpha_\boundaries(f)$, i.e. from $\alpha_\boundaries(f)=\{f^\sharp\in\boundaries\,|\,f\pleq f^\sharp\}$.
  %$$\alpha_\boundaries(f)=\big\{f^\sharp\in\boundaries\,\big|\,f\pleq f^\sharp\big\}.$$
\end{proof}

Beyond this basic property, which allows some of the
$\alpha_\boundaries(f)$ to be represented as $\mathop{\uparrow}{S}$ for some finite set of
generators $S$, reasoning
% thinking
about the abstraction
%problem
%
%\plgnote{problem -> abstraction?}
% [LR]: Validated.
%
becomes even simpler when $\boundaries$ is \emph{convex}, in the
functional sense, i.e. convex within the natural vector space
structure on
% sense of convexity in function space for the natural vector space
$\dd \to \rr$.
As we will see in Section~\ref{sec:convexity-and-synthesis}, this
fundamental property of \emph{convexity in constraint space}
greatly simplifies the design of
% problem of designing
transfer functions and can even allow for fully automatic transfer
function synthesis via reductions
% , sometimes even enabling fully automated transfer function
% synthesis, by reduction
to convex problems.

%\begin{proposition}[Convexity of constraint set]
\begin{theorem}[Convexity of the set of constraints]
  %\label{prop:convexity-constraints}
  \label{th:convexity-constraints}
  Suppose that the class of boundary functions $\boundaries$ is
  convex, in the sense that for all
  $b_1,b_2\in\boundaries\cap(\dd\to\rr)$ and all $\lambda\in[0,1]$, we have
  $\lambda\cdot b_1 + (1-\lambda)\cdot b_2 \in \boundaries$.

  Then, for all $f:\dd\to\rri$, the set $\alpha_\boundaries(f)$ is also convex (i.e.
  $\alpha_\boundaries(f)\cap(\dd\to\rr)$ is convex in function space).
%\end{proposition}
\end{theorem}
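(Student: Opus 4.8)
The plan is to unfold the definition of $\alpha_\boundaries(f)$ and show directly that a convex combination of two of its members is again a member. Recall that $\alpha_\boundaries(f) = \{f^\sharp\in\boundaries \mid f\pleq f^\sharp\}$, i.e. the set of boundary functions lying pointwise above $f$. To show convexity of $\alpha_\boundaries(f)\cap(\dd\to\rr)$, I would take any two elements $b_1, b_2\in\alpha_\boundaries(f)\cap(\dd\to\rr)$ and any $\lambda\in[0,1]$, and verify that the combination $b_\lambda := \lambda\cdot b_1 + (1-\lambda)\cdot b_2$ again lies in $\alpha_\boundaries(f)$.

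There are exactly two things to check. First, membership in $\boundaries$: this is handed to us for free, since it is precisely the convexity hypothesis on $\boundaries$ that $b_\lambda\in\boundaries$ whenever $b_1,b_2\in\boundaries\cap(\dd\to\rr)$ and $\lambda\in[0,1]$. Second, the bounding condition $f\pleq b_\lambda$: here I would argue pointwise. Fix any $\vn\in\dd$. By assumption $f(\vn)\leq b_1(\vn)$ and $f(\vn)\leq b_2(\vn)$. Taking the convex combination of these two real inequalities with nonnegative weights $\lambda$ and $1-\lambda$ summing to $1$ gives
\begin{equation*}
  f(\vn) = \lambda\,f(\vn) + (1-\lambda)\,f(\vn)
         \leq \lambda\,b_1(\vn) + (1-\lambda)\,b_2(\vn) = b_\lambda(\vn).
\end{equation*}
Since this holds for every $\vn\in\dd$, we conclude $f\pleq b_\lambda$, and hence $b_\lambda\in\alpha_\boundaries(f)$, establishing convexity.

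The only subtlety, and the reason the statement restricts attention to $\alpha_\boundaries(f)\cap(\dd\to\rr)$ rather than all of $\alpha_\boundaries(f)$, is the handling of infinite values: when $f$ or the $b_i$ take values in $\rri$, the scalar multiplication $\lambda\cdot b_i$ and the addition are not unambiguously defined (e.g. $0\cdot(+\infty)$ or $(+\infty)+(-\infty)$), so the vector space structure on $\dd\to\rr$ genuinely requires finite-valued functions. Restricting to $\boundaries\cap(\dd\to\rr)$ sidesteps this entirely, and within that restriction the argument above is completely routine — there is no real obstacle, as the result is essentially the observation that the pointwise order inequality $f\pleq(\cdot)$ defines a convex (indeed, a half-space-like) condition in function space, which is automatically preserved under convex combinations. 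The convexity of $\boundaries$ supplies the remaining ingredient, namely that we stay inside the admissible family.
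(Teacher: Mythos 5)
Your proposal is correct and follows essentially the same route as the paper's proof: fix $b_1,b_2\in\alpha_\boundaries(f)\cap(\dd\to\rr)$ and $\lambda\in[0,1]$, then verify the bound pointwise by combining $f(\vn)\leq b_1(\vn)$ and $f(\vn)\leq b_2(\vn)$, with membership of the combination in $\boundaries$ supplied by the convexity hypothesis. Your closing remark on the restriction to $\dd\to\rr$ also matches the paper's own remark that infinities require special handling.
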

\begin{proof}
  Let $f:\dd\to\rri$, and let $b_1,b_2\in\alpha_\boundaries(f)\cap(\dd\to\rr)$.
% and let
  Take any $\lambda\in[0,1]$.
  For all $\vn\in\dd$, we have $f(\vn)\leq b_1(\vn)$ and
  $f(\vn)\leq b_2(\vn)$, so
  $f(\vn)\leq \lambda\cdot b_1(\vn) + (1-\lambda)\cdot b_2(\vn)$.
  Thus, $\big(\lambda\cdot b_1 + (1-\lambda)\cdot b_2\big)\in\alpha_\boundaries(f)$.
\end{proof}
\begin{remark}
  This theorem
  %The above proposition
  can be refined with
% made cleaner by a special treatment
  special handling of infinities, but we omit such technicalities here for simplicity.
  \end{remark}

\begin{example}
  Most of the templates $\boundaries$ discussed in this paper are convex,
  e.g.
  $\lambda\cdot\big(n\mapsto\sum a_k n^k\big) + (1-\lambda)\cdot\big(n\mapsto\sum b_k n^k\big)
   = \big(n\mapsto \sum (\lambda a_k + (1-\lambda) b_k) n^k\big).$
  %% \begin{equation*}
  %% \begin{aligned}
  %%   &\lambda\cdot\Big(n\mapsto\sum a_k n^k\Big) + (1-\lambda)\cdot\Big(n\mapsto\sum b_k n^k\Big)\\
  %%   &= \Big(n\mapsto \sum \big(\lambda a_k + (1-\lambda) b_k\big) n^k\Big).
  %% \end{aligned}
  %% \end{equation*}
\end{example}

\begin{remark}
  In practice, to work with \boundariesbounds, we may need to
  replace $(\pp(\boundaries),\supseteq,\cap,\cup)$ with a more
  computable representation
  $(\pp_{\uparrow,\text{fin}}(\boundaries),\sqsubseteq^\sharp,\sqcup^\sharp,\sqcap^\sharp)$,
  where $A \sqsubseteq^\sharp B$ is a sound approximation of
$\mathop{\uparrow}{A} \supseteq\, \mathop{\uparrow}{B}$.
% $\uparrow\!A \supseteq\, \uparrow\!B$.

  This typically involves approximating the pointwise order
  $\pleq$ on $\boundaries$ %,
  %derived from the concrete pointwise order in $\dd\to\rri$,
  using a coefficient-wise comparison, derived from a parametrisation of
  $\boundaries$ (e.g. comparing polynomials via their coefficient vectors).
% coefficient-wise comparison of polynomials).

  By parametrisation of $\boundaries$ via $\rr^k$, we simply mean a
  function $\rr^k\to\boundaries$, such as the mapping that describes a
  polynomial
  % e.g. the description of a polynomial
  by its vector of coefficients.
\end{remark}

\begin{remark}
  Note that if $\boundaries$ satisfies the descending
  chain condition (i.e. there is no infinite strictly decreasing
  chain of elements of $\boundaries$), then for every
%  this implies that for all
  $U\in\pp_\uparrow(\boundaries)$, there exists a minimal (w.r.t.\ inclusion) set
  $A\in\pp(\boundaries)$
% , minimal for inclusion,
  such that $U=\,\mathop{\uparrow}{A}$.
% $U=\,\uparrow\!{A}$.
  Moreover, $A$ forms an antichain.
  If, in addition,
% we additionally know that
  all antichains in $\boundaries$ are
  finite, it follows
% this directly gives us the property
  that all sets $\alpha_\boundaries(f)$ are finitely representable as
  the upper closure of a finite antichain.

  This phenomenon, reminiscent of Dickson's lemma, can occur when
  $\boundaries$ is parameterised by $\nn^k$, such as in the case of
  polynomial or affine bounds with non-negative integer coefficients.
  In this paper, however, we focus instead on $\boundaries$ that can
  be parameterised by vectors of \emph{real numbers}, leveraging
% and use
  \emph{convexity properties} and aiming
  % , trying to focus on
  to identify convex subsets of $\rr^k$ that admit finite
  representations.
%  descriptions.
\end{remark}

%% [... explain this and detail things ...]
%% \begin{proposition}
%%     This reduces the problem of finding a minimal set of extremal bounds\\
%%     (e.g. for \textbf{transfer function synthesis})
%%     to \textbf{finding generators of a convex set,}
%%     or more exactly of a parameterised family of convex sets.
%%     [...]
%% \end{proposition}

\section{Abstract Domains of Sequences}
\label{sec:abstr-dom-sequences}

We now present
% produce
examples of \boundariesbound domains, starting with the simple case of
the function space $\nn\to\rr$, i.e. %monovariate
univariate \emph{sequences}.
We begin
% warm up
in Section~\ref{subsec:affine-bounds} with simple affine bounds, then
proceed
% and then move on
to \textbf{non-linear abstract domains}. We first
let $\boundaries$ be
%consider $\boundaries$ as
the set of polynomials of bounded degree (parameterised initially in
the monomial basis, then in the binomial basis) in
Section~\ref{subsec:poly-bounds}. Next, we move on to exponential
bounds and products of polynomials with exponentials in
Section~\ref{subsec:exp-poly-bounds}.

As a reminder, our ultimate
% final
goal is to compute
% obtain
an overapproximation of the solution $\lfp\Phi$ of a concrete
numerical equation $\Phi\in\End_{\pleq}(\nn\to\rri)$. In this paper,
we aim to achieve this via
% we wish to do so by using
an abstract equation
$\Phi^\sharp\in\End_{\sqsubseteq^\sharp}(\pp_\uparrow(\boundaries))$,
operating on
% defined over
the abstract domain %$D^\sharp=\pp_\uparrow(\boundaries)$
of \boundariesubs.
% $\boundaries$-upper bounds.
%
Computing abstract Kleene sequences in this setting yields
% will thus produce
an overapproximation of $\lfp\Phi^\sharp$, which itself
overapproximates
% be an overapproximation of
the solution $\lfp\Phi$ to our equation.
We exemplify this using
% demonstrate this using
% exemplify this with
an \emph{ideal, best abstraction} $\Phi^\sharp$ in
Section~\ref{subsec:absiter-demo-1}.

However, in practice, we need %require
% still need
an automated method
% way
to compute the abstract operator $\Phi^\sharp$ from the definition of
$\Phi$. We follow
% choose to do so in
%the classical approach of abstract interpretation: proceeding
%compositionally.
the classical compositional approach of abstract interpretation.
% abstract interpretation fashion, compositionally: first consider
We first define the syntax of a simple language for defining
% used to define such
$\Phi$, then compute the abstract semantics (i.e. \emph{transfer
functions}) for its basic constructs, and finally compose them
% of basic constructs in this language (i.e. \emph{transfer
% functions}). Finally, compose these abstract transfer functions
to obtain a sound abstraction $\Phi^\sharp$ of $\Phi$.
For this purpose, we introduce a simple operator language in
Section~\ref{subsec:seq-language}.

%% In this section, we produce examples of $\boundaries$-bound domains,
%% for simple function space $\nn\to\rr$, i.e. monovariate
%% \emph{sequences}. We (at least initially) work with a very simple
%% operator language, where the key recursive primitives correspond to
%% only pop/push.

%% We produce \textbf{non-linear abstract domains}, one key cool point of this paper.

%% We show this at least for affine bounds, polynomials of bounded degree
%% (monomial then binomial basis), and some exponential or poly-exp
%% bounds.

%% We give example of iterations for selected equations.
%% We also give transfer functions for each of the above choices of $\boundaries$.

%% Widenings may be discussed slightly informally, or we may try to at
%% least give some examples of how an iteration with them would work.

%% ($\sim$~slides 15-21)

%% \phantom{.}

%% [... sequences, etc.]

%\subsection{What we want to get: a first example of ideal abstract iteration}
\subsection{A First Example of Ideal Abstract Iteration}
\label{subsec:absiter-demo-1}

Let us begin with a small sanity check to build some intuition about
the abstractions we aim
% need
to design, and to confirm that we can indeed hope to obtain useful
bounds using our abstract domains.

Consider the concrete domain of (non-negative valued)
% (positive valued)
sequences $D=\nn\to\rri_+$, and the simple case where $\boundaries$ is
the set of affine sequences with non-negative coefficients
% positive coefficients
(i.e. $n\mapsto an+b$, with $a,b\in\rri_+$). We study the
functional equation
\begingroup
  \abovedisplayskip=2pt
  \belowdisplayskip=2pt
$$\Phi(f)=\ite(n>0, f(f(n-1))+1, 0).$$
\endgroup
%%% Fig currently placed here
This equation is relatively challenging, as it
% This is a relatively difficult equation, which
features a nested call (a construct known to
% which can
lead to undecidability in general;
see~\cite{tanny-talk-nestedreceq-short,nested-receq-undecidable}).
Fortunately, it admits a simple solution; $\fsol:n\mapsto n$.
We check whether we can recover this solution via abstract iteration
in Fig.~\ref{fig:absiter-nested-ideal}.

We construct our ``ideal'' $\Phi^\sharp$ in a way that is slightly
less precise than the optimal
$\alpha_\boundaries\circ\Phi\circ\gamma_\boundaries$, but that we can
more realistically
% better
hope to mimic computationally, and which yields
% leads to
more interesting results.
Indeed, to store our abstract values
$F^\sharp\in\pp_\uparrow(\boundaries)$ in finite memory, we represent
them by a finite set of generators $b_1,\ldots,b_n\in\boundaries$
such that $F^\sharp=\,\uparrow\!\!\{b_1,\ldots,b_n\}$.
Then, at each iteration step, instead of directly concretising the
full abstract value $F^\sharp$ (to compute
$\alpha_\boundaries\circ\Phi\circ\gamma_\boundaries$), we
compute $\Phi$ \emph{separately on each generator}.

\begin{figure}
\begin{center}
\resizebox{.85\linewidth}{!}{%
    \begin{tikzcd}[ampersand replacement = \&, row sep=1.5em]
      D^\sharp = \pp_\uparrow(\boundaries)
        \arrow[rr, "\Phi(\circ\gamma)\text{ on generators}", dotted, bend left=5]
      \&  \& \mathcal{P}(D)
        \arrow[ll, swap, "\text{union of }\alpha\text{'s}"', above, dotted, bend left=5]
      \\
      {\bot^\sharp = \,\uparrow\!\{0n + 0\}}
        %\alt<2->
        {\arrow[rr]}%{\arrow[rr, white]}
        %\alt<3->
        {\arrow[d, "\substack{\alpha\circ \Phi\circ\gamma\\\rotatebox[origin=c]{-90}{$\sqsubseteq$}}", dotted]}%{\arrow[d,dotted,white]}
      \&  \& %\alt<2->
             {\{01^*\}}%{}
        %\alt<3->
        {\arrow[lld]}%{\arrow[lld, white]}
      \\
      %\alt<3->
      {\,\uparrow\!\{{\color{purple}{1n + 0}},\,0n+1\}}%{}
        %\alt<4->
        {\arrow[rr]}%{\arrow[rr, white]}
      \&  \& %\alt<4->
             {\{{\color{purple}{1n+0}},\,02^*\}}%{\phantom{\{1n+0,\,02^*\}}}
        %\alt<5->
        {\arrow[lld]}%{}
      \\[-1.25em]
      %\alt<5->
      {%\alt<5>
        %{\,\uparrow\!\left\{\begin{array}{l}{\color{purple}{1n + 0}}, \\ {\color{gray}{2n+0}},\,0n+2\end{array}\right\}}
        {\,\uparrow\!\left\{\begin{array}{l}{\color{purple}{1n + 0}}, \\ {\cancel{\color{gray}{2n+0}}},\,0n+2\end{array}\right\}}
       }%{\phantom{\,\uparrow\!\left\{\begin{array}{l}{\color{purple}{1n + 0}}, \\ {\color{gray}{2n+0}},\,0n+2\end{array}\right\}}}
        %\alt<6->
        {\arrow[rr]}%{}
      \&  \& %\alt<6->
             {{\{1n+0,\,03^*\}}}%{}
        %\alt<6->
        {\arrow[lld]}%{\arrow[lld,white]}
      \\[-1em]
      %\alt<6->
      {\,\uparrow\!\left\{\begin{array}{l}{\color{purple}{1n + 0}}, \\ {\cancel{\color{gray}{3n+0}}},\,0n+3\end{array}\right\}}
              %{\phantom{{\,\uparrow\!\left\{\begin{array}{l}{\color{purple}{1n + 0}}, \\ {\cancel{\color{gray}{3n+0}}},\,0n+3\end{array}\right\}}}}
        %\alt<7->
                {\arrow[d, dashed]}%{\arrow[d, dashed, white]}
      \&  \&
      \\
      %\alt<7->
              {\,\uparrow\!\{{\color{purple}{1n + 0}},\,\top\} = \,\uparrow\!\{{\color{purple}{1n + 0}}\}}
              %{\phantom{\,\uparrow\!\{{\color{purple}{1n + 0}},\,\top\} = \,\uparrow\!\{{\color{purple}{1n + 0}}\}}}
      \&  \&
      \end{tikzcd}%
  }
\end{center}
%%   \caption{Example of ideal abstract iteration for $\Phi(f) = \ite(n>0, f(f(n-1))+1, 0)$,
%%   $D=\nn\to\nni$, $\boundaries=\texttt{Affines}$. [... comment ...]}
%%   \label{fig:absiter-nested-ideal-arrows}
%% \end{figure}
%
%% \begin{figure}
  %\resizebox{\linewidth}{!}{%
  %\hfill
  \phantom{.}\\[-1em]
  \phantom{.}
  \hrulefill\\[.5em]
%
  %<1>
  \begin{subfigure}{.24\linewidth}
    \resizebox{\linewidth}{!}{%
    \begin{tikzpicture}[]
      \begin{scope}[transparency group]
      \begin{scope}[blend mode=multiply]
        \draw[very thin,color=gray!50] (-0.1,-0.1) grid (4.4,4.4);
        \draw[->] (-0.2,0) -- (4.5,0) node[right] {\scriptsize $n$};
        \draw[->] (0,-0.2) -- (0,4.5) node[above] {\scriptsize $f(n)$};
        \foreach \x in {0,...,4} {
            \node [anchor=north] at (\x,-0.5-1ex) {\tiny $\x$};
        }
        \foreach \y in {0,...,4} {
            \node [anchor=east] at (-0.5-0.5em,\y) {\tiny $\y$};
        }
        \draw[-, very thick, blue] (0, 0) -- (4.4, 0);
        \fill[fill=blue!10] (0,0)--(4.4,0)--(4.4,4.4)--(0,4.4);
      \end{scope}
      \end{scope}
      \draw[-, line width=3pt, black] (0, 0) -- (4.4,0);
      \node at (3.5,.35) {\Large $\bot$};
    \end{tikzpicture}%
    }
  \end{subfigure}
  \hfill
    %
  %<2>
  \begin{subfigure}{.24\linewidth}
    \resizebox{\linewidth}{!}{%
    \begin{tikzpicture}[]
      \begin{scope}[transparency group]
      \begin{scope}[blend mode=multiply]
        \draw[very thin,color=gray!50] (-0.1,-0.1) grid (4.4,4.4);
        \draw[->] (-0.2,0) -- (4.5,0) node[right] {\scriptsize $n$};
        \draw[->] (0,-0.2) -- (0,4.5) node[above] {\scriptsize $f(n)$};
        \foreach \x in {0,...,4} {
            \node [anchor=north] at (\x,-0.5-1ex) {\tiny $\x$};
        }
        \foreach \y in {0,...,4} {
            \node [anchor=east] at (-0.5-0.5em,\y) {\tiny $\y$};
        }
        \draw[-, very thick, blue] (0, 0) -- (4.4, 0);
        \fill[fill=blue!10] (0,0)--(4.4,0)--(4.4,4.4)--(0,4.4);
      \end{scope}
      \end{scope}
      \draw[-, line width=2pt, black, dashed] (0, 0) -- (4.4,0);
      \node at (3.5,.35) {\Large $\bot$};
      \draw[-, line width=3pt, black] (0, 0) -- (1,1) -- (4.4,1);
      \node at (3.5,1.35) {\Large $\Phi(\bot)$};
    \end{tikzpicture}%
    }
  \end{subfigure}
  \hfill
    %
  %<3>
  \begin{subfigure}{.24\linewidth}
    \resizebox{\linewidth}{!}{%
    \begin{tikzpicture}[]
      \begin{scope}[transparency group]
      \begin{scope}[blend mode=multiply]
        \draw[very thin,color=gray!50] (-0.1,-0.1) grid (4.4,4.4);
        \draw[->] (-0.2,0) -- (4.5,0) node[right] {\scriptsize $n$};
        \draw[->] (0,-0.2) -- (0,4.5) node[above] {\scriptsize $f(n)$};
        \foreach \x in {0,...,4} {
            \node [anchor=north] at (\x,-0.5-1ex) {\tiny $\x$};
        }
        \foreach \y in {0,...,4} {
            \node [anchor=east] at (-0.5-0.5em,\y) {\tiny $\y$};
        }
        \draw[-, very thick, blue] (0, 0) -- (4.4, 0);
        \fill[fill=blue!10] (0,0)--(1,1)--(4.4,1)--(4.4,4.4)--(0,4.4);
      \end{scope}
      \end{scope}
      \draw[-, line width=3pt, black] (0, 0) -- (1,1) -- (4.4,1);
      \draw[-, line width=2pt, blue!50] (0, 0) -- (4.4,4.4);
      \draw[-, line width=2pt, blue!50] (0, 1) -- (4.4,1);
      \node at (3.5,1.35) {\color{blue}{\Large $\Phi^\sharp(\bot^\sharp)$}};
    \end{tikzpicture}%
    }
  \end{subfigure}
  \hfill
    %
  %<4>
  \begin{subfigure}{.24\linewidth}
    \resizebox{\linewidth}{!}{%
    \begin{tikzpicture}[]
      \begin{scope}[transparency group]
      \begin{scope}[blend mode=multiply]
        \draw[very thin,color=gray!50] (-0.1,-0.1) grid (4.4,4.4);
        \draw[->] (-0.2,0) -- (4.5,0) node[right] {\scriptsize $n$};
        \draw[->] (0,-0.2) -- (0,4.5) node[above] {\scriptsize $f(n)$};
        \foreach \x in {0,...,4} {
            \node [anchor=north] at (\x,-0.5-1ex) {\tiny $\x$};
        }
        \foreach \y in {0,...,4} {
            \node [anchor=east] at (-0.5-0.5em,\y) {\tiny $\y$};
        }
        \draw[-, very thick, blue] (0, 0) -- (4.4, 0);
        \fill[fill=blue!10] (0,0)--(1,1)--(4.4,1)--(4.4,4.4)--(0,4.4);
      \end{scope}
      \end{scope}
      \draw[-, line width=3pt, blue!50] (0, 0) -- (4.4,4.4);
      \draw[-, line width=3pt, blue!50] (0, 1) -- (4.4,1);
      \draw[->, line width=1pt, blue]   (3.75,1.1) -- (3.75,1.9);
      \draw[->, line width=1pt, purple] (3,3.2) arc (20:220:3mm) -- ++(-20:2mm);
      \node at (2.5,3.7) {\textcolor{purple}{\large \textbf{Fixed!}}};
      \draw[-, line width=2pt, purple] (0, 0) -- (4.4,4.4);
      \draw[-, line width=2pt, black] (0, 0) -- (1,2) -- (4.4,2);
    \end{tikzpicture}%
    }
  \end{subfigure}
  %\hfill

    %
  %<5>
  \begin{subfigure}{.24\linewidth}
    \resizebox{\linewidth}{!}{%
    \begin{tikzpicture}[]
      \begin{scope}[transparency group]
      \begin{scope}[blend mode=multiply]
        \draw[very thin,color=gray!50] (-0.1,-0.1) grid (4.4,4.4);
        \draw[->] (-0.2,0) -- (4.5,0) node[right] {\scriptsize $n$};
        \draw[->] (0,-0.2) -- (0,4.5) node[above] {\scriptsize $f(n)$};
        \foreach \x in {0,...,4} {
            \node [anchor=north] at (\x,-0.5-1ex) {\tiny $\x$};
        }
        \foreach \y in {0,...,4} {
            \node [anchor=east] at (-0.5-0.5em,\y) {\tiny $\y$};
        }
        \draw[-, very thick, blue] (0, 0) -- (4.4, 0);
        \fill[fill=blue!10] (0,0)--(2,2)--(4.4,2)--(4.4,4.4)--(0,4.4);
      \end{scope}
      \end{scope}
      \draw[-, line width=2pt, purple] (0, 0) -- (4.4,4.4);
      \draw[-, line width=2pt, black] (0, 0) -- (1,2) -- (4.4,2);
      \draw[-, line width=1pt, dashed, blue!30] (0, 0) -- (2.2,4.4);
      \draw[-, line width=2pt, blue!50] (0, 2) -- (4.4,2);
    \end{tikzpicture}%
    }
  \end{subfigure}
  \hfill
    %
  %<6>
  \begin{subfigure}{.24\linewidth}
    \resizebox{\linewidth}{!}{%
    \begin{tikzpicture}[]
      \begin{scope}[transparency group]
      \begin{scope}[blend mode=multiply]
        \draw[very thin,color=gray!50] (-0.1,-0.1) grid (4.4,4.4);
        \draw[->] (-0.2,0) -- (4.5,0) node[right] {\scriptsize $n$};
        \draw[->] (0,-0.2) -- (0,4.5) node[above] {\scriptsize $f(n)$};
        \foreach \x in {0,...,4} {
            \node [anchor=north] at (\x,-0.5-1ex) {\tiny $\x$};
        }
        \foreach \y in {0,...,4} {
            \node [anchor=east] at (-0.5-0.5em,\y) {\tiny $\y$};
        }
        \draw[-, very thick, blue] (0, 0) -- (4.4, 0);
        \fill[fill=blue!10] (0,0)--(3,3)--(4.4,3)--(4.4,4.4)--(0,4.4);
      \end{scope}
      \end{scope}
      \draw[-, line width=3pt, purple] (0, 0) -- (4.4,4.4);
      \draw[-, line width=2pt, black!50] (0, 0) -- (1,3) -- (4.4,3);
      \draw[-, line width=1pt, dashed, blue!30] (0, 0) -- (1.46667,4.4);
      \draw[-, line width=1pt, blue!50] (0, 3) -- (4.4,3);
    \end{tikzpicture}%
    }
  \end{subfigure}
  \hfill
    %
  %<6> extra (for parity)
  \begin{subfigure}{.24\linewidth}
    \resizebox{\linewidth}{!}{%
    \begin{tikzpicture}[]
      \begin{scope}[transparency group]
      \begin{scope}[blend mode=multiply]
        \draw[very thin,color=gray!50] (-0.1,-0.1) grid (4.4,4.4);
        \draw[->] (-0.2,0) -- (4.5,0) node[right] {\scriptsize $n$};
        \draw[->] (0,-0.2) -- (0,4.5) node[above] {\scriptsize $f(n)$};
        \foreach \x in {0,...,4} {
            \node [anchor=north] at (\x,-0.5-1ex) {\tiny $\x$};
        }
        \foreach \y in {0,...,4} {
            \node [anchor=east] at (-0.5-0.5em,\y) {\tiny $\y$};
        }
        \draw[-, very thick, blue] (0, 0) -- (4.4, 0);
        \fill[fill=blue!10] (0,0)--(4,4)--(4.4,4)--(4.4,4.4)--(0,4.4);
      \end{scope}
      \end{scope}
      \draw[-, line width=3pt, purple] (0, 0) -- (4.4,4.4);
      \draw[-, line width=2pt, black!50] (0, 0) -- (1,4) -- (4.4,4);
      \draw[-, line width=1pt, dashed, blue!30] (0, 0) -- (1.1,4.4);
      \draw[-, line width=1pt, blue!50] (0, 4) -- (4.4,4);
    \end{tikzpicture}%
    }
  \end{subfigure}
  \hfill
    %
  %<7>
  \begin{subfigure}{.24\linewidth}
    \resizebox{\linewidth}{!}{%
    \begin{tikzpicture}[]
      \begin{scope}[transparency group]
      \begin{scope}[blend mode=multiply]
        \draw[very thin,color=gray!50] (-0.1,-0.1) grid (4.4,4.4);
        \draw[->] (-0.2,0) -- (4.5,0) node[right] {\scriptsize $n$};
        \draw[->] (0,-0.2) -- (0,4.5) node[above] {\scriptsize $f(n)$};
        \foreach \x in {0,...,4} {
            \node [anchor=north] at (\x,-0.5-1ex) {\tiny $\x$};
        }
        \foreach \y in {0,...,4} {
            \node [anchor=east] at (-0.5-0.5em,\y) {\tiny $\y$};
        }
        \draw[-, very thick, blue] (0, 0) -- (4.4, 0);
        \fill[fill=blue!10] (0,0)--(4.4,4.4)--(0,4.4);
      \end{scope}
      \end{scope}
      \draw[-, line width=3pt, purple] (0, 0) -- (4.4,4.4);
    \end{tikzpicture}%
    }
    \vspace{-.5em}
  \end{subfigure}
  %\hfill
  %
  %% %<7> (extra to fill line)
  %% \begin{subfigure}{.24\linewidth}
  %%   \resizebox{\linewidth}{!}{%
  %%   \begin{tikzpicture}[]
  %%     \begin{scope}[transparency group]
  %%     \begin{scope}[blend mode=multiply]
  %%       \draw[very thin,color=gray!50] (-0.1,-0.1) grid (4.4,4.4);
  %%       \draw[->] (-0.2,0) -- (4.5,0) node[right] {\scriptsize $n$};
  %%       \draw[->] (0,-0.2) -- (0,4.5) node[above] {\scriptsize $f(n)$};
  %%       \foreach \x in {0,...,4} {
  %%           \node [anchor=north] at (\x,-0.5-1ex) {\tiny $\x$};
  %%       }
  %%       \foreach \y in {0,...,4} {
  %%           \node [anchor=east] at (-0.5-0.5em,\y) {\tiny $\y$};
  %%       }
  %%       \draw[-, very thick, blue] (0, 0) -- (4.4, 0);
  %%       \fill[fill=blue!10] (0,0)--(4.4,4.4)--(0,4.4);
  %%     \end{scope}
  %%     \end{scope}
  %%     \draw[-, line width=3pt, purple] (0, 0) -- (4.4,4.4);
  %%   \end{tikzpicture}%
  %%   }
  %% \end{subfigure}
  \caption{Example of ideal abstract iteration. $D=\nn\to\rri_+$,
    $\Phi(f) = \ite(n>0, f(f(n-1))+1, 0)$,
    $\boundaries=\texttt{Affines}$.\\[-3em]
    \phantom{.}} %[... comment ...]}
  %\label{fig:absiter-nested-ideal-plots}
  \label{fig:absiter-nested-ideal}
\end{figure}

Start from $\bot^\sharp = \,\uparrow\!\!\{0n + 0\}$, generated by the
constant $0$ function. When we apply $\Phi$ to this function,
we obtain the sequence $\Phi(n\mapsto 0)=\ite(n=0,0,1)$, which is represented
in regular-expression
% regexp
notation by $01^*$ in Fig.~\ref{fig:absiter-nested-ideal}.
This sequence cannot be represented exactly in $\boundaries$, and is
instead represented (exactly) in $\pp_\uparrow(\boundaries)$
% LR: this is not an overapproximation -- it is an exact representation in \pp_\uparrow(\boundaries) -- instead of \boundaries where exact representation do not exist.
%overapproximated
by the pair of generators $1n+0$ and $0n+1$
%(see Fig.~\ref{fig:absiter-nested-ideal}).
We have thus performed the first step
of (ideal) abstract iteration.

Now, we perform another step, separately on each generator.
We compute $\Phi(01^*)=02^*$, which is abstracted as
% represented by
$\uparrow\!\!\{0n + 2,2n+0\}$,
% in the abstract,
while $\Phi(n\mapsto 1n+0)=(n\mapsto n)$ is a \emph{fixed point}.
% \emph{fixed}.
%
Two interesting phenomena occur here. First, observe that the
generator $2n+0$ becomes redundant in the union
$\uparrow\!\!\{1n+0\}\cup\!\uparrow\!\!\{0n + 2,2n+0\}$, and can be
\emph{discarded}. Second, by treating each generator independently
% separately,
we have \emph{already discovered an abstract postfixpoint}, namely
$\,\uparrow\!\!\{1n+0\}$.

We could stop abstract iteration here and output the upper bound
$\gamma_\boundaries(\uparrow\!\!\{1n+0\})=(n\mapsto n)$, which is
actually the \emph{optimal} bound in this case.
If we instead expect that further
% believe that more
precision might be gained, we can
continue iterating. This yields an abstract sequence
$\uparrow\!\!\{1n+0,0n+3\}$, $\uparrow\!\!\{1n+0,0n+4\}$,
$\uparrow\!\!\{1n+0,0n+5\}$, etc.
At this point, a widening operator must be applied to accelerate
convergence, eventually yielding
% and some widening operation needs to be used to speed up the convergence to
$\uparrow\!\!\{1n+0,0n+\infty\}=\,\uparrow\!\!\{1n+0,\top\}=\,\uparrow\!\!\{1n+0\}$.

This is the least fixed point
% $\lfp$
of our ideal abstract operator $\Phi^\sharp$, which, in this case,
coincides with the exact solution $\fsol=\lfp \Phi$:
% in this example,
no precision is lost, and abstract iteration succeeds in discovering
% discovers
the \emph{optimal} bound on the solution.

\subsection{\Seq: A Simple Operator Language for Push/Pop}
\label{subsec:seq-language}

As promised in the introduction to this section, we now introduce a
simple operator language with concrete semantics in
$\End(\nn\to\rri)$. It will be used to define our concrete
operator-equations syntactically, which will allow us
% thanks to which we will be able
to abstract operators compositionally via transfer functions.

We call this language $\Seq$ and give it a very simple structure: it
includes
% contains
three constants (for constant values, constant functions, and the
constant operator), and simple (pointwise) arithmetic operations.  Its
most important constructs are $\texttt{Pop}$ and $\texttt{Push}$,
which capture
% encode
the \emph{recursive structure} of our equation language.
% included in our language of equations:
Intuitively, they allow us express recursive calls such as
$f(n+1)$, $f(n-1)$, and to define base cases.

\begin{definition}[\Seq~syntax]%\-
  %\vspace{-.5em}
  \begingroup
  \abovedisplayskip=1pt
  \belowdisplayskip=0pt
  \begin{equation*}
    \begin{aligned}
          %\texttt{<eq>} & ::= \texttt{<init>}~\texttt{<expr>}\\
          %\texttt{<eq>} & ::= \texttt{<expr>}\\
          %\texttt{<init>} & ::= \texttt{Set}_0~\texttt{<num>}\\
          \texttt{<expr>}  ::=
            & \mid~ \texttt{Cst}~\texttt{<num>}
              \mid  \texttt{'n'}
              \mid  \texttt{'f'}\\[-1pt]
            & \mid~ \texttt{<expr>}~\Diamond~\texttt{<expr>}\\[-1pt]
              %\mid  \texttt{<expr>} - \texttt{<expr>}
              %\mid  \texttt{<expr>} \times \texttt{<expr>}\\
            & \mid~ \texttt{Pop}~\texttt{<expr>}
              \mid  \texttt{Push}~\texttt{<num>}~\texttt{<expr>}\\[-1pt]
            %& \mid~ %\alt<2>
               %{{\color{blue}{\texttt{Shift <...>} \mid \texttt{Scale <...>} \mid ...}}}
               %{...} \\
          \texttt{<num>} ::= &~c\in\rri
            \qquad\qquad \Diamond\in\{+,-,\times\}%\onslide<2>{{\color{blue}{\,\cup\,\{\circ\}}}}
    \end{aligned}
  \end{equation*}
  \endgroup
\end{definition}
\begin{definition}[\Seq~semantics]
  \begingroup
  \abovedisplayskip=1pt
  \belowdisplayskip=0pt
  \begin{equation*}\begin{aligned}
          \sem{\cdot} : \Seq &\to \End(\nn\to\rri)\\[.25em]
          %
          %\sem{\texttt{<eq>}}(f)
          %& = \sem{\texttt{<init>}}\circ\sem{\texttt{<expr>}}(f)\\
          %%%
          %\sem{\texttt{Set}_0~c}(f)
          %& = n\mapsto \ite(n=0,c,f(n))\\
          %
          \sem{\texttt{Cst}~c}(f)
          & = n\mapsto c\\
          \sem{\texttt{'n'}}(f)
          & = n\mapsto n\\
          \sem{\texttt{'f'}}(f)
          & = f\\[-2pt]
          %%%
          \sem{e_1~\Diamond~e_2}(f)
          & = \sem{e_1}(f)~\dot{\Diamond}~\sem{e_2}(f)\\[.25em]
        %% \end{aligned}\end{equation*}
          %
  %% \begin{equation*}
  %%       \boxed{
  %%       \begin{aligned}
          %%%
          \sem{\texttt{Pop}}(f)
          & = n \mapsto f(n+1)\\
          \sem{\texttt{Push}~c}(f)
          & = n \mapsto  \ite(n=0,c,f(n-1))
        \end{aligned}%}
        \end{equation*}
   \endgroup
\end{definition}

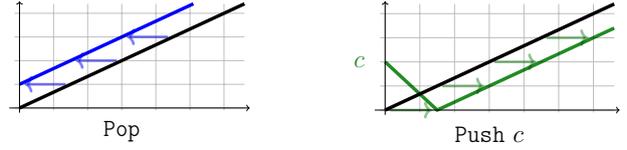
\begin{figure}[h!]
  \vspace{-1em}
  \begin{subfigure}{.45\linewidth}
    %\resizebox{\linewidth}{!}{%
    \resizebox{!}{2cm}{%
      \begin{tikzpicture}[x=1.5cm,y=.7cm]
        \begin{scope}[transparency group]
        \begin{scope}[blend mode=multiply]
        \draw[very thin,color=gray!50,ystep=1] (-0.1,-0.1) grid (4.4,4.4);
        \draw[->] (-0.2,0) -- (4.5,0) node[right] {};%{\scriptsize $n$};
        \draw[->] (0,-0.2) -- (0,4.5) node[above] {};%{\scriptsize $f(n)$};
        \foreach \x in {0,...,4} {
              \node [anchor=north] at (\x,-0.5-1ex) {};%{\tiny $\x$};
          }
          \foreach \y in {0,...,4} {
              \node [anchor=east] at (-0.5-0.5em,\y) {};%{\tiny $\y$};
          }
          \draw[-, line width=1mm, blue] (0, 1)--(3.4, 4.4);
          \draw[<-, line width=0.8mm, blue!50](0.1,1)--(0.9,1);
          \draw[<-, line width=0.8mm, blue!50](1.1,2)--(1.9,2);
          \draw[<-, line width=0.8mm, blue!50](2.1,3)--(2.9,3);
          \draw[-, line width=1mm, black] (0, 0)--(4.4, 4.4);
        \end{scope}
        \end{scope}
        \node at (2,-1) {\huge\texttt{Pop}};
      \end{tikzpicture}%
    }
  \end{subfigure}
  \hfill
  \begin{subfigure}{.45\linewidth}
    %\resizebox{\linewidth}{!}{%
    \resizebox{!}{2cm}{%
      \begin{tikzpicture}[x=1.5cm,y=.7cm]
        \begin{scope}[transparency group]
        \begin{scope}[blend mode=multiply]
        \draw[very thin,color=gray!50,ystep=1] (-0.1,-0.1) grid (4.4,4.4);
        \draw[->] (-0.2,0) -- (4.5,0) node[right] {};%{\scriptsize $n$};
        \draw[->] (0,-0.2) -- (0,4.5) node[above] {};%{\scriptsize $f(n)$};
        \foreach \x in {0,...,4} {
              \node [anchor=north] at (\x,-0.5-1ex) {};%{\tiny $\x$};
          }
          \foreach \y in {0,...,4} {
              \node [anchor=east] at (-0.5-0.5em,\y) {};%{\tiny $\y$};
          }
          \draw[-, line width=1mm, ForestGreen] (0,2) -- (1, 0)--(4.4, 3.4);
          \draw[->, line width=0.8mm, ForestGreen!50](0.1,0)--(0.9,0);
          \draw[->, line width=0.8mm, ForestGreen!50](1.1,1)--(1.9,1);
          \draw[->, line width=0.8mm, ForestGreen!50](2.1,2)--(2.9,2);
          \draw[->, line width=0.8mm, ForestGreen!50](3.1,3)--(3.9,3);
          \draw[-, line width=1mm, black] (0, 0)--(4.4, 4.4);
        \end{scope}
        \end{scope}
        \node at (-.5,2) {\textcolor{ForestGreen}{\huge $c$}};
        \node at (2,-1) {\huge\texttt{Push} $c$};
      \end{tikzpicture}%
    }
  \end{subfigure}
  \vspace{-.5em}
  \caption{Illustration of pop/push concrete semantics.}
  \label{fig:pop-push-sem-plot}
  %\vspace{-1em}
\end{figure}
\vspace{-1em}

\begin{remark}
  For simplicity, we will only consider here this minimal $\Seq$
  language.
  However, if  desired, the  language could  be extended  to represent
  more complex control flows, e.g. with additional constructs such as
  $\texttt{Scale}_a$ and $\texttt{Scale}_{1/a}$, interpreted as
  $f\mapsto n\mapsto f(a\cdot n)$ (interior multiplication), and
  $f\mapsto n\mapsto f\big(\lfloor\frac{n}{a}\rfloor\big)$ (interior
  integer division).
  Like $\texttt{Pop}$ and $\texttt{Push}$, these two constructs are
  restricted forms of precomposition.
% operation.
  %
  One could even go further and
  % To go even beyond, we could
  allow the use of the composition operation itself ($\Diamond=\circ$
  above), although care is needed to ensure the operation is
  well-defined.
  % some care is required to make it well-defined.
\end{remark}

\begin{example}
  The simple arithmetico-geometric recurrence equation defined by
  $f(0)=4$ and $f(n+1)=\frac{1}{2}\cdot f(n)+3$ for $n\in\nn$
  %% begin{equation}
  %%   \begin{cases}
  %%   f(0) = 4,\\
  %%   f(n) = 1/2 \cdot f(n-1) + 3 & \text{for } n\geq1
  %%   \end{cases}
  %% \end{equation}
  can be expressed in operator form as
  $\Phi : f \mapsto n \mapsto \ite(n=0,\, 4,\, 1/2 \cdot f(n-1) + 3),$
  which can in turn be written in $\Seq$ as
  $\texttt{Push}~4~\big(\texttt{Cst}(1/2) \times (\texttt{'f'} + \texttt{Cst}(3))\big).$
  %\scriptsize
      %\resizebox{\textwidth}{!}{%
      %\begin{minipage}[t][4em]{\textwidth}\centering
      %\vspace{-1.5em}
      %% \begin{equation*}\begin{aligned}
      %%   & f(0) = 4,\;f(n) = 1/2 \cdot f(n-1) + 3\\
      %%   \longleftrightarrow~
      %%   &\Phi : f \mapsto n \mapsto \ite(n=0,\, 4,\, 1/2 \cdot f(n-1) + 3)\\
      %%   \longleftrightarrow~
      %%   &\texttt{Push}~4~\big(\texttt{Cst}(1/2) \times (\texttt{'f'} + \texttt{Cst}(3))\big)
      %% \end{aligned}\end{equation*}
      %% %\end{minipage}%
      %% %}
\end{example}

\subsection{Warm-up: Affine Bounds}
\label{subsec:affine-bounds}

% So in practice,
What abstract domain do we obtain in the simple case of affine bounds?
We begin with this case to build intuition,
% We start slowly with this case to build up intuition,
and move on to nonlinear abstract domains in the following sections.

\begin{definition}
  In this section, we use
  $\boundaries=\{n\mapsto+\infty\}\cup
   \big\{n\mapsto an + b\;\big|\;
         a\in\rr_+\cup\{+\infty\},\,b\in\rr_+\big\}$.
  We represent these functions using
% its elements with
  pairs of numbers, via a simple two-coordinate parameterisation:
  $\boundaries\cong\rri_+\times\rr_+\,\cup\,\{\top_\boundaries\}$.\footnote{The
  treatment of $\infty$ involves some details.
%  but we can note that
  For functions of the form $f:n\mapsto an+b$, we distinguish the case
  $b=+\infty$ which directly yields
  $\top_\boundaries=n\mapsto+\infty$, while for $a=+\infty$ we
  consider that $f(0)=b$ and that it is only for $n\geq 1$ that
  $f(n)=+\infty$.
  The choice ``$(+\infty)\cdot 0=0$'' gives here a nicer treatment of \texttt{Push}.}
  %Making here the choice ``$(+\infty)\cdot 0=0$'' gives us a nicer treatment of \texttt{Push}.}
%
  We order $\boundaries$ via the coordinate-wise
  $\sqsubseteq_\boundaries$, defined by
  $(a_1,b_1)\sqsubseteq_\boundaries(a_2,b_2)\iff a_1\leq b_1\text{ and
  }a_2\leq b_2$.
\end{definition}

\begin{remark}
  For simplicity of presentation, we only discuss here the case of upper
  bounds with positive coefficients. If we wish to interpret
  subtraction and multiplication by negative numbers, we would also need to include
  lower bounds and the full \boundariesbound domain.\footnote{This
  additionally forces us to choose conventions for operations like
  $+\infty+(-\infty)$: the safe default choice is $+\infty$ for upper
  bounds and $-\infty$ for lower bounds.}
\end{remark}

Before discussing the abstract domain
$(\pp_\uparrow(\boundaries),\supseteq)$, we first consider the
properties of $(\boundaries,\pleq)$, which will help us simplify
sets of generators.

\begin{proposition}
  The coordinate-wise order is a sound
  approximation of the concrete pointwise order $\pleq$.
  In fact, in this case, $\sqsubseteq_\boundaries$ is \emph{complete},
  i.e.
  $(a_1, b_1)\!\sqsubseteq_\boundaries\!(a_2, b_2)$ if and only if
  $\forall n,\, a_1n+b_1\leq a_2n+b_2$.
\end{proposition}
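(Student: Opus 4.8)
The plan is to prove the stated biconditional as two separate implications: soundness (the ``only if'' direction, $\sqsubseteq_\boundaries$ implies the pointwise condition) and completeness (the ``if'' direction, the pointwise condition implies $\sqsubseteq_\boundaries$). Throughout, recall that $\sqsubseteq_\boundaries$ compares both coordinates, i.e.\ $(a_1,b_1)\sqsubseteq_\boundaries(a_2,b_2)$ means $a_1\leq a_2$ and $b_1\leq b_2$. I would relegate the handling of infinite slopes and the top element $\top_\boundaries$ to a final case analysis, treating the finite case first as the main argument.

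For soundness, I would assume $a_1\leq a_2$ and $b_1\leq b_2$ and fix an arbitrary $n\in\nn$. Since $n\geq 0$, the scalar action is monotone, so $a_1 n\leq a_2 n$; adding this to $b_1\leq b_2$ yields $a_1 n+b_1\leq a_2 n+b_2$. This direction is immediate and relies only on nonnegativity of the domain $\nn$ together with the fact that addition and scalar multiplication respect $\leq_{\rri}$.

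For completeness, I would assume $a_1 n+b_1\leq a_2 n+b_2$ for every $n\in\nn$ and extract the two coordinate inequalities. Evaluating at $n=0$ immediately gives $b_1\leq b_2$ (using the convention that $f(0)=b$ even when $a=+\infty$, so this evaluation is well-defined and finite). To recover $a_1\leq a_2$ when both slopes are finite, I would argue by contradiction: if $a_1>a_2$, rearranging gives $(a_1-a_2)\,n\leq b_2-b_1$ for all $n$, whose left-hand side is unbounded in $n$ while the right-hand side is a fixed real number (both $b_i\in\rr_+$ are finite), a contradiction. Equivalently, one reads off $a_i=\lim_{n\to\infty}(a_i n+b_i)/n$ and passes to the limit.

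The hard part will be the bookkeeping around the infinite coefficient $a=+\infty$ and the top element, since the convention that such a function takes the finite value $b$ at $n=0$ but $+\infty$ for $n\geq 1$ breaks the uniform slope argument. I would dispatch these as explicit cases: if $a_1=+\infty$, then evaluating the hypothesis at $n=1$ forces $a_2\cdot 1+b_2=+\infty$, hence $a_2=+\infty$, so $a_1\leq a_2$ (with equality); if instead $a_2=+\infty$, then $a_1\leq a_2$ is automatic. The top element $\top_\boundaries=n\mapsto+\infty$ is the greatest element of both $(\boundaries,\pleq)$ and $(\boundaries,\sqsubseteq_\boundaries)$, so any comparison involving it is trivially consistent between the two orders. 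Combining the finite argument with these case distinctions establishes both $a_1\leq a_2$ and $b_1\leq b_2$, which together with soundness yields the equivalence, and in particular shows that $\sqsubseteq_\boundaries$ is not merely sound but \emph{complete} for $\pleq$ on this affine $\boundaries$.
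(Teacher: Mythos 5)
Your proof is correct, and since the paper states this proposition without an explicit proof, your argument supplies exactly the standard reasoning it implicitly relies on: soundness by monotonicity over $n\geq 0$, completeness by evaluating at $n=0$ to get $b_1\leq b_2$ and by unboundedness of $(a_1-a_2)n$ (equivalently the limit $a_i=\lim_{n\to\infty}(a_in+b_i)/n$) to get $a_1\leq a_2$, with a correct dispatch of the $a=+\infty$ and $\top_\boundaries$ cases matching the paper's footnote conventions (in particular $f(0)=b$ finite and $(+\infty)\cdot 0=0$). You also rightly read the paper's displayed definition of $\sqsubseteq_\boundaries$ (``$a_1\leq b_1$ and $a_2\leq b_2$'') as a typo for the coordinate-wise comparison $a_1\leq a_2$ and $b_1\leq b_2$, which is the reading the completeness claim and the subsequent lattice proposition require.
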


\begin{proposition}
  $(\boundaries,\sqsubseteq_\boundaries)$ is a complete lattice, where
  $\bot_\boundaries=(0,0)$, and join/meet operations are
  respectively the
  coordinate-wise max/min, i.e.
  ${\bigsqcup}\{(a_i,b_i)\}=(\max_i a_i,$\\$\max_i b_i)$
  and
  ${\bigsqcap}\{(a_i,b_i)\}=(\min_i a_i,\,\min_i b_i)$.
  To facilitate computations, we can write
  $\top_\boundaries$ as $(\_,+\infty)$.
\end{proposition}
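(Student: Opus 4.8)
The plan is to reduce the statement to the completeness of the parameter lattice together with the standard fact that products of complete lattices are complete. First I would recall that $(\rri_+,\leq)$ is a complete lattice (with bottom $0$ and top $+\infty$), so that the product $(\rri_+\times\rri_+,\leq\times\leq)$ is a complete lattice whose suprema and infima are computed coordinate-wise. The whole argument then amounts to transporting this structure along the parameterisation $(a,b)\mapsto(n\mapsto an+b)$ and checking that the claimed formulas land back inside $\boundaries$.

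The concrete steps are as follows. First I would verify that the parameterisation is a bijection from $\rri_+\times\rr_+$ onto $\boundaries\setminus\{\top_\boundaries\}$, with every pair $(a,+\infty)$ collapsing to the single constant function $\top_\boundaries$; injectivity follows by comparing the functions at $n=0$ and for large $n$ (two distinct finite-slope affines differ eventually, two $a=+\infty$ functions with distinct intercepts differ at $n=0$, and an $a=+\infty$ function takes the value $+\infty$ at every $n\geq1$, so cannot coincide with a finite-slope one). Next, rather than re-proving that $\sqsubseteq_\boundaries$ is the correct order, I would invoke the preceding proposition, which already establishes that the coordinate-wise order coincides with the pointwise order $\pleq$. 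It then remains to check completeness with the stated operations: given any family $\{(a_i,b_i)\}_i$, the pair $(\sup_i a_i,\sup_i b_i)$ is an upper bound coordinate-wise and lies below any other upper bound (since $a\geq a_i$ and $b\geq b_i$ for all $i$ force $a\geq\sup_i a_i$ and $b\geq\sup_i b_i$), hence it is the join; dually, $(\inf_i a_i,\inf_i b_i)$ is the meet. Specialising to the empty family recovers $\bigsqcup\varnothing=\bot_\boundaries=(0,0)$ and $\bigsqcap\varnothing=\top_\boundaries$.

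The main obstacle — and the only place requiring care — is the behaviour at the top, namely the identification of all pairs $(a,+\infty)$ with the single element $\top_\boundaries$ together with the associated $\infty$-conventions. For joins I must observe that the coordinate-wise supremum escapes $\rri_+\times\rr_+$ exactly when $\sup_i b_i=+\infty$, and that in this case $(\sup_i a_i,+\infty)$ is correctly read as $\top_\boundaries$ irrespective of the first coordinate, so the join remains inside $\boundaries$ and is genuinely least (no element with finite $b$ can dominate a family with $\sup_i b_i=+\infty$). Dually, for meets involving $\top_\boundaries=(\_,+\infty)$ I must assign it first coordinate $+\infty$ (consistent with the representative $(+\infty,+\infty)$ and the convention $(+\infty)\cdot 0=0$ from the footnote), so that meeting with $\top_\boundaries$ acts as the identity and the formula $(\inf_i a_i,\inf_i b_i)$ stays valid. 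Once these conventions are fixed, every subset has a join and a meet inside $\boundaries$, and $(\boundaries,\sqsubseteq_\boundaries)$ is the claimed complete lattice.
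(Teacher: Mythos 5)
Your proof is correct. Note first that the paper gives no proof of this proposition at all --- it is stated as routine immediately after the definition of $\sqsubseteq_\boundaries$ --- so your argument supplies details the authors left implicit rather than diverging from a written proof. Your route (transporting completeness of a product of complete lattices along the two-coordinate parameterisation, then checking closure of the claimed formulas) is the natural one, and you correctly isolate the only genuinely delicate point: $\boundaries\cong\rri_+\times\rr_+\cup\{\top_\boundaries\}$ is \emph{not} literally a product, since the $b$-coordinate ranges over the finite $\rr_+$ and all pairs $(a,+\infty)$ collapse to the single element $\top_\boundaries$. Your two checks --- that a family with $\sup_i b_i=+\infty$ admits no upper bound with finite $b$, so the join correctly escapes to $\top_\boundaries$; and that $\top_\boundaries$ must be read with first coordinate $+\infty$ for the coordinate-wise meet formula to act as the identity (a representative such as $(0,+\infty)$ would wrongly yield $(0,b)$ when met with $(a,b)$) --- are exactly what the paper's shorthand $\top_\boundaries=(\_,+\infty)$ ``to facilitate computations'' glosses over. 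Two cosmetic remarks: the footnote convention ``$(+\infty)\cdot 0=0$'' concerns evaluation of $a=+\infty$ boundaries at $n=0$ (relevant to \texttt{Push}), and plays no role in the lattice structure, so citing it to justify the representative $(+\infty,+\infty)$ is a red herring; and your appeal to the preceding proposition (coincidence of $\sqsubseteq_\boundaries$ with $\pleq$) is not needed for the statement as written, since $\sqsubseteq_\boundaries$ is defined coordinate-wise on parameters --- though it does usefully confirm, via your injectivity check, that the complete lattice you obtain is the set of boundary functions ordered pointwise and not merely a lattice of parameter vectors.
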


As mentioned in the example of Section~\ref{subsec:absiter-demo-1},
for simplicity, we define our abstract transfer functions in
$\boundaries\to\pp_\uparrow(\boundaries)$, i.e. separately on each
generator, before extending by unions to
$\pp_\uparrow(\boundaries)\to\pp_\uparrow(\boundaries)$.
We compute the best possible transfer functions, which is relatively
straightforward
% easy to do
in the case of affine functions, a class that is preserved by most
operations in our language.

\begin{proposition}
  \label{prop:sema-affine-bounds}
  The \emph{optimal} (i.e. most precise) abstract transfer functions
  $\sema{\cdot}:\boundaries\to\pp_\uparrow(\boundaries)$ can be
  computed as follows. Notice that \texttt{Push} is the only operation
  which requires more than one generator: all other constructs can be
  interpreted exactly within affine functions (or rather optimally:
  $\times$ may lead to $\top_\boundaries$).
  \begin{itemize}[leftmargin=*]
      \item $\sema{\mathtt{Cst}~c}(\_) = \,\uparrow\!\!\big\{(0, c)\big\}$
      \item $\sema{\texttt{'n'}}(\_) = \,\uparrow\!\!\big\{(1, 0)\big\}$
      \item $\sema{\texttt{'f'}}\big(f^\sharp\big) = \,\uparrow\!\!\big\{f^\sharp\big\}$
      \item $\sema{+}\big((a_1,b_1),(a_2,b_2)\big) = \,\uparrow\!\!\big\{(a_1+a_2, b_1+b_2)\big\}$
      \item $\sema{\times}\big((a_1,b_1),(a_2,b_2)\big) = \,\uparrow\!\!\big\{(a_1a_2\infty+a_1b_2+a_2b_1, b_1b_2)\big\}$,
    \end{itemize}
  where we take the convention ``$0\times(+\infty)=0$'' in the
  expression for multiplication (the $a_1a_2\infty$ term corresponds
  to the $a_1a_2n^2$ term in the expansion of $(a_1n+b_1)\times(a_2n+b)$.
  Moreover, $\sema{\mathtt{Pop}}\big((a,b)\big) = \,\uparrow\!\!\big\{(a, b+a)\big\}$
  (indeed, $a(n+1)+b = an+(a+b)$), and
  \begingroup
    \abovedisplayskip=3pt
    \belowdisplayskip=8pt
  \begin{equation*}
    \sema{\mathtt{Push}~c}\big(\textcolor{ForestGreen}{(a,b)}\big) =
          {\small\begin{cases}
            \,\uparrow\!\!\textcolor{blue!50!black}{\big\{(a, c)\big\}} & \text{if $c \geq b-a$}\\
            \,\uparrow\!\!\textcolor{blue!50!black}{\big\{(a, b-a),\,(b-c,c)\big\}} & \text{if $c < b-a$.}
          \end{cases}}
  \end{equation*}
  \endgroup
  The two cases for $\sema{\mathtt{Push}~c}$ are illustrated in
  Fig.~\ref{fig:pop-push-asem-plot}.
\end{proposition}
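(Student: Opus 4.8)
The plan is to reduce every clause to a single guiding principle. Since the transfer functions act on a single generator $g\in\boundaries$ and $\gammaB(\uparrow\{g\})=g$ (the minimum of an up-closed set is its generator), the \emph{optimal} per-generator transfer function is forced to be the best abstraction of $\sem{e}$ restricted to $g$, namely
\[
  \sema{e}(g) \;=\; \alphaB\big(\sem{e}(g)\big) \;=\; \big\{h^\sharp\in\boundaries \;\big|\; \sem{e}(g)\pleq h^\sharp\big\}.
\]
Thus the whole proposition reduces to computing, construct by construct, the concrete function $\sem{e}(g)$ and then its set of affine upper bounds. I would also record once that every construct is $\pleq$-monotone in $f$ when coefficients are non-negative (for $\texttt{Push}$, $\ite(n=0,c,f(n-1))$ is monotone in $f$; for $\times$, products of non-negative functions are monotone), which is exactly what later legitimises recombining generators by unions without losing soundness.

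For the affinity-preserving constructs this is immediate. The images $\sem{\texttt{Cst}~c}(g)=(0,c)$, $\sem{\texttt{'n'}}(g)=(1,0)$, $\sem{\texttt{'f'}}(g)=g$, $\sem{+}(g_1,g_2)=(a_1+a_2,b_1+b_2)$ and $\sem{\texttt{Pop}}(g)=\big(n\mapsto a(n+1)+b\big)=(a,a+b)$ all lie in $\boundaries$, so by the completeness of $\sqsubseteq_\boundaries$ established just above, their set of affine upper bounds is exactly the coordinate-wise $\uparrow\{\cdot\}$ of the parameter pair, which is the stated formula. For $\times$ the image is the \emph{quadratic} $a_1a_2 n^2+(a_1b_2+a_2b_1)n+b_1b_2$; here I would split on $a_1a_2$. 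If $a_1a_2>0$ the quadratic eventually dominates every finite-slope affine, forcing slope $+\infty$, and the value at $n=0$ forces intercept $b_1b_2$, giving $\uparrow\{(+\infty,b_1b_2)\}$; if $a_1a_2=0$ the image is already affine. The two cases merge under the convention $0\cdot(+\infty)=0$ in the term $a_1a_2\infty$, yielding the single stated generator.

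The real work is $\texttt{Push}~c$, whose image is the function $h$ with $h(0)=c$ and $h(n)=an+(b-a)$ for $n\geq 1$ — affine away from the jump at the origin. First I would translate membership into constraints: the condition at $n=0$ gives $b'\geq c$, while the condition for $n\geq 1$, after the substitution $m=n-1$ and an appeal to completeness of $\sqsubseteq_\boundaries$, gives $a'\geq a$ and $b'\geq b-a'$. Hence
\[
  \alphaB(h) = \big\{(a',b') \;\big|\; a'\geq a,\ b'\geq \max(c,\,b-a')\big\},
\]
and I split on the sign of $c-(b-a)$. If $c\geq b-a$ then $b-a'\leq b-a\leq c$ for all $a'\geq a$, the $\max$ collapses to $c$, and $\alphaB(h)=\uparrow\{(a,c)\}$, a single generator. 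The hard part is the case $c<b-a$, where $\alphaB(h)$ has a genuinely two-dimensional minimal frontier and must be reconciled with the two generators $(a,b-a)$ and $(b-c,c)$; note that their naive up-closures do \emph{not} reproduce $\alphaB(h)$, since the frontier runs along the line $a'+b'=b$. I would resolve this by invoking the convexity of $\alphaB(h)$ (Theorem~\ref{th:convexity-constraints}): the segment joining the two generators lies exactly on $a'+b'=b$ between $a'=a$ and $a'=b-c$, so the up-closure of its convex hull is precisely $\alphaB(h)$. Optimality is then confirmed concretely by checking that the pointwise minimum of the two generators equals $h$ itself — they meet $h$ at $n=0$ and $n=1$, and the steeper line $(b-c,c)$ dominates for $n\geq 2$ because $c<b-a$ — so $\gammaB$ of the representation is lossless. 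Throughout, the $+\infty$ corner cases ($g=\top_\boundaries$, or $a=+\infty$) are dispatched by the conventions fixed in the definition of $\boundaries$; these are routine but must be spelled out to keep the formulas well defined.
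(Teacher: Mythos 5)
Your proposal is correct and follows essentially the same route as the paper: reduce each construct to computing $\alpha_\boundaries\big(\sem{e}(g)\big)$ on a single generator, exploit that affine functions are closed under every construct except $\times$ (when $a_1a_2>0$) and $\texttt{Push}$, invoke completeness of the coordinate-wise order, and derive for $\texttt{Push}~c$ exactly the constraint system $\{v\geq c,\ u\geq a,\ u+v\geq b\}$ that the paper obtains in Remark~\ref{rem:transfer-synthesis-example-teaser} via the Pop/Push trick (your $m=n-1$ substitution is the same computation). Your added observation -- that the coordinate-wise up-closure of the two generators $(a,b-a)$ and $(b-c,c)$ is a \emph{proper} subset of $\alpha_\boundaries(h)$, with equality recovered only after closing the segment between them under convex combinations (Theorem~\ref{th:convexity-constraints}), and losslessness confirmed by checking that their pointwise minimum equals $h$ -- is a welcome explicit treatment of a redundancy subtlety the paper only gestures at in the caption of Fig.~\ref{fig:B-bounds-alpha-plot-2}.
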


\begin{figure}
  \begin{subfigure}[t]{.45\linewidth}
    %\resizebox{\linewidth}{!}{%
    \resizebox{!}{2.5cm}{%
      \begin{tikzpicture}[x=1.25cm,y=1cm]
        \begin{scope}[transparency group]
        \begin{scope}[blend mode=multiply]
        \draw[very thin,color=gray!50,ystep=1] (-0.1,-0.1) grid (4.4,4.4);
        \draw[->] (-0.2,0) -- (4.5,0) node[right] {};%{\scriptsize $n$};
        \draw[->] (0,-0.2) -- (0,4.5) node[above] {};%{\scriptsize $f(n)$};
        \foreach \x in {0,...,4} {
              \node [anchor=north] at (\x,-0.5-1ex) {};%{\tiny $\x$};
          }
          \foreach \y in {0,...,4} {
              \node [anchor=east] at (-0.5-0.5em,\y) {};%{\tiny $\y$};
          }

          \draw[-, line width=.5mm, black, dotted] (0,0.5) -- (1, 1);
          \draw[-, line width=1mm, black] (1, 1) -- (4.4, 2.7);
          \draw[-, line width=1mm, black] (0, 2) -- (1, 1);
          \draw[->, line width=0.5mm, black!20](0.1,1)--(0.75,1);
          \draw[->, line width=0.5mm, black!20](1.1,1.5)--(1.75,1.5);
          \draw[->, line width=0.5mm, black!20](2.1,2)--(2.75,2);
          \draw[->, line width=0.5mm, black!20](3.1,2.5)--(3.75,2.5);
          \draw[-, line width=.8mm, ForestGreen] (0, 1)--(4.4, 3.2);

          %\onslide<2->{
            \draw[-, line width=.8mm, blue] (0, 2) -- (4.4, 4.2);
            \fill[fill=blue!10] (0,2)--(4.4,4.2)--(4.4,4.4)--(0,4.4);
          %}

        \end{scope}
        \end{scope}
        \node[anchor=east] (labc)  at (-.1,2) {{\huge $c$}};
        \node[anchor=east] (lab0)  at (-.5,1) {{\large $f(0)=b$}};
        \draw[->] (lab0) -- (0,1);
        \node[anchor=east] (labm1) at (-.5,0.5) {{\large $f(-1)=b-a$}};
        \draw[->] (labm1) -- (0,.5);
        %\node at (2,-1) {\huge\texttt{Push} $c$};
      \end{tikzpicture}%
    }
  \end{subfigure}
  \hfill
  \begin{subfigure}[t]{.45\linewidth}
    %\resizebox{\linewidth}{!}{%
    \resizebox{!}{2.5cm}{%
      \begin{tikzpicture}[x=1.25cm,y=1cm]
        \begin{scope}[transparency group]
        \begin{scope}[blend mode=multiply]
        \draw[very thin,color=gray!50,ystep=1] (-0.1,-0.1) grid (4.4,4.4);
        \draw[->] (-0.2,0) -- (4.5,0) node[right] {};%{\scriptsize $n$};
        \draw[->] (0,-0.2) -- (0,4.5) node[above] {};%{\scriptsize $f(n)$};
        \foreach \x in {0,...,4} {
              \node [anchor=north] at (\x,-0.5-1ex) {};%{\tiny $\x$};
          }
          \foreach \y in {0,...,4} {
              \node [anchor=east] at (-0.5-0.5em,\y) {};%{\tiny $\y$};
          }

          \draw[-, line width=.5mm, black, dotted] (0,1) -- (1, 1.5);
          \draw[-, line width=1mm, black] (1, 1.5) -- (4.4, 3.2);
          \draw[-, line width=1mm, black] (0, .25) -- (1, 1.5);
          \draw[->, line width=0.5mm, black!20](0.1,1.5)--(0.75,1.5);
          \draw[->, line width=0.5mm, black!20](1.1,2)--(1.75,2);
          \draw[->, line width=0.5mm, black!20](2.1,2.5)--(2.75,2.5);
          \draw[->, line width=0.5mm, black!20](3.1,3)--(3.75,3);
          \draw[-, line width=.8mm, ForestGreen] (0, 1.5)--(4.4, 3.7);
          %\onslide<2->{
            \fill[fill=blue!10] (0,.25)--(1,1.25)--(4.4,3.2)--(4.4,4.4)--(0,4.4);
          %}
        \end{scope}
          %\onslide<2->{
            \draw[-, line width=.8mm, blue] (0, 1) -- (4.4, 3.2);
            \draw[-, line width=.8mm, blue] (0, .25) -- (1, 1.5) -- (3.52, 4.4);
          %}
        \end{scope}
        \node[anchor=east] (labc)  at (-.1,0.25) {{\huge $c$}};
        \node[anchor=east] (labm1) at (-.1,1) {{\large $b-a$}};
        %\node at (2,-1) {\huge\texttt{Pop}};
      \end{tikzpicture}%
    }
  \end{subfigure}

  \caption{Illustration of pop/push abstract semantics for $\boundaries=$ affine bounds}
  \label{fig:pop-push-asem-plot}
\end{figure}

\begin{remark}
  \label{rem:transfer-synthesis-example-teaser}
  While these transfer functions can be computed and proven optimal
  by hand, it is also possible to \emph{synthesise} them by
  automatically inferring generators of (parameterised families of)
  convex sets. Section~\ref{sec:convexity-and-synthesis} is dedicated
  to these ideas, where they are discussed in more detail.

  As a first example, consider how
% To give a first example,
  the convex problem for ${\texttt{Push}~c}$ can be formulated.
% set up as follows.
  \begin{equation*}
  \begin{aligned}
      &\big(n\mapsto un + v\big) \,\dot{\geq}\, \sem{\texttt{Push}~c}\big(n\mapsto an + b\big)\\
      \iff& %\\%[-.75em]
      \begin{cases}
        \phantom{\wedge} & v \geq c\\
                 \wedge  & \sem{\texttt{Pop}}\big(n\mapsto un + v\big) \,\dot{\geq}\, \big(n\mapsto an + b)
      \end{cases}\\%[-.75em]
      \iff& %\\%[-.75em]
      \begin{cases}
        \phantom{\wedge} & v \geq c\\
                 \wedge  & \forall n\in\nn,\, u\cdot(n+1) + v \geq an + b
      \end{cases}\\%[-.5em]
      \iff& %\\%[-.5em]
      \boxed{
      \begin{cases}
        \phantom{\wedge}\, &\phantom{u\,+\;} v \geq c\\
                 \wedge \, &\phantom{u\,+\;} u \geq a\\
                 \wedge \, &u+v \geq b
      \end{cases}}%\\
    \end{aligned}
  \end{equation*}
\end{remark}

\subsection{Polynomial Bounds}
\label{subsec:poly-bounds}

The situation is very similar in the case of polynomials, which are
also closed under most operations in our language. The main
differences are the increased number of coefficients, and the fact
%% that we must deal with more coefficients, and
that the coordinate-wise order is now incomplete, potentially limiting
simplifications in generator sets.
%% so we may miss some simplifications in sets of generators.

\begin{definition}[Polynomial bounds, bounded degree, monomial basis]
  Let $d\in\nn$. We consider
  $
    \boundaries =
    \big\{n\mapsto \sum_{k\leq d} a_k n^k\,\big|\, a_k\in\rr_+\big\} %\\
    \cup\big\{n\mapsto\ite(n=0,a_0,+\infty)\\\,\big|\,a_0\in\rr_+\big\}
     \cup\{n\mapsto +\infty\},
    %&\cup\big\{a_0(+\infty)^*\,\big|\,a_0\in\rr_+\big\}\cup\{\top_\boundaries\},
  $
  the class of polynomial of degree at most $d$, with non-negative
  real coefficients, extended with appropriate infinities.

  We first consider a parameterisation given by its coordinates (in
  the monomial basis $n^d,...,n^2,n,1$), with which we write
  $\boundaries\cong\rr_+^d\,\cup\,\{+\infty\}\times\rr_+\,\cup\,\{\top_\boundaries\}$.
  We order $\boundaries$ with the corresponding coordinate-wise
  (product) order $\sqsubseteq_{monom}$.
\end{definition}

\begin{proposition}
  $\sqsubseteq_{monom}$ is a sound approximation of $\pleq$, but it is
  not complete.
\end{proposition}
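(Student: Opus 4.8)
The plan is to treat the two claims separately, since they are of quite different character: soundness is an immediate monotonicity argument, whereas incompleteness only requires producing one counterexample.

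For \textbf{soundness}, I would show that coordinate-wise domination implies pointwise domination. Suppose $p = \sum_{k\leq d} a_k n^k$ and $q = \sum_{k\leq d} b_k n^k$ with $a_k \leq b_k$ for every $k$. The crucial observation is that the domain here is $\nn$, so $n^k \geq 0$ for every $n\in\nn$ and every $k$. Consequently $a_k n^k \leq b_k n^k$ for each $k$, and summing over $k\leq d$ yields $p(n)\leq q(n)$; as this holds for all $n\in\nn$, we obtain $p\pleq q$, which is exactly the soundness of $\sqsubseteq_{monom}$ with respect to $\pleq$. The only point requiring a little care is to confirm that the chosen conventions for the extended elements (the $\ite(n=0,a_0,+\infty)$ functions and the top $n\mapsto+\infty$) remain compatible, but this is routine since $+\infty$ dominates every value.

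For \textbf{incompleteness}, I would exhibit two polynomials that are pointwise ordered yet have incomparable coefficient vectors. The cleanest choice (valid as soon as $d\geq 2$) is $p(n)=n$ and $q(n)=n^2$. On the one hand, $n\leq n^2$ holds for every $n\in\nn$, with equality exactly at $n\in\{0,1\}$ and strict inequality for $n\geq 2$, so $p\pleq q$. On the other hand, in the monomial basis the coefficient vector of $p$ has a $1$ in the $n$-slot and zeros elsewhere, while that of $q$ has a $1$ in the $n^2$-slot and zeros elsewhere; these are incomparable for $\sqsubseteq_{monom}$, since $p$ strictly dominates $q$ in the $n$-coordinate while $q$ strictly dominates $p$ in the $n^2$-coordinate. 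Hence $p\not\sqsubseteq_{monom} q$, witnessing the failure of completeness.

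There is essentially no hard step here; the argument is elementary. The one thing to flag is the dependence on $d$: for $d\leq 1$ the class degenerates to the affine bounds of the previous subsection, where the coordinate-wise order was shown to be complete, so the counterexample genuinely requires $d\geq 2$. Conceptually, completeness fails because the pointwise order on $\nn$ can be realised through cancellation between monomials of different degrees (here the quadratic term of $q$ eventually overtakes the linear term of $p$), a phenomenon that the purely coordinate-wise comparison cannot detect.
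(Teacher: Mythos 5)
Your proof is correct and follows essentially the same route as the paper: soundness by summing the coordinate-wise inequalities (using $n^k\geq 0$ on $\nn$), and incompleteness via the exact counterexample the paper uses, namely $n \pleq n^2$ with incomparable coefficient vectors $(0,1,0)$ and $(1,0,0)$. Your added remarks -- that the counterexample needs $d\geq 2$, and that the paper's parenthetical $(1/2)^2 \ngeq 1/2$ highlights the role of the discrete domain $\nn$ -- are accurate but not required.
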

\begin{proof}
  For $f=n\mapsto \sum a_kn^k$ and $g=n\mapsto\sum b_kn^k$,
  if $\forall i,\,a_i\leq b_i$, then $\forall n,\,f(n)\leq g(n)$.

  However, the converse does not hold.
  For example, $n^2 \mathop{\dot{\geq}} n$, i.e.
  $\forall n\in\nn,\,n^2\geq n$, but $(1,0,0)\not\sqsupseteq(0,1,0)$.
  (Note that however, on the real numbers, $(1/2)^2\ngeq(1/2)$.)
\end{proof}
\begin{remark}
  If desired, it is still possible  to compute $\pleq$ for polynomials, but
  we do not need this fact here.
\end{remark}

Like before, there is no difficulty in computing the abstract
semantics of our constants and of the arithmetic operations, using
standard polynomial arithmetic.
The situation is a bit more complex for $\texttt{Pop}$ and
$\texttt{Push}$, although still manageable.
For example, for $\texttt{Pop}$, we obtain a single generator,
computed using the classical Newton binomial expansion:
  \begingroup
    \abovedisplayskip=-3pt
    \belowdisplayskip=3pt
  $$
    \sum_{k\leq d} a_k (n+1)^k = \sum_{k\leq d} \Big(\overbrace{\sum_{j\leq d} a_j \binom{j}{k}}^{p_k}\Big) n^k.
  $$
  \endgroup
%\vspace{-.5em}
%
We can also set up a convex problem to generate the transfer function
for $\texttt{Push}$, but it is quite dense.

To simplify, we can use a
% nice
useful trick from the literature on type-based amortised cost
analysis~\cite{jan-hoffman-phd-short}: instead of the classical monomial
basis of polynomials ($1$, $n$, $n^2$, ...), \textbf{use the binomial
  basis}, where shift operations behave more simply. % cleanly.
% they are much better behaved under shift operations.

\begin{definition}[Binomial basis]
  The binomial basis of the set $\rr_d[X]$ of univariate
% monovariate
  polynomials of degree at most $d$ is the set
  $\big\{n\mapsto\binom{n}{k}\,\big|\,k\in\nn,\,0\leq k\leq d\big\}$.
  For example, $\binom{n}{0} = 1$, $\binom{n}{1}=n$,
      $\binom{n}{2}=\frac{1}{2}n^2-\frac{1}{2}n$, and
      $\binom{n}{3}=\frac{1}{6}n^3-\frac{1}{2}n^2+\frac{1}{3}n$.
\end{definition}

\begin{definition}[Polynomial bounds, bounded degree, binomial basis]
  Let $d\in\nn$. We now consider
  $
    \boundaries =
    \big\{n\mapsto \sum_{k\leq d} a_k \binom{n}{k}\,\big|\, a_k\in\rr_+\big\} %\\
    \cup\big\{n\mapsto\ite(n=0,a_0,+\infty)\,\big|\,a_0\in\rr_+\big\}
     \cup\{n\mapsto +\infty\},
    %&\cup\big\{a_0(+\infty)^*\,\big|\,a_0\in\rr_+\big\}\cup\{\top_\boundaries\},
  $
  which is also a class of polynomials of degree at most $d$ (extended
  with infinities). We consider a parameterisation in the binomial
  basis and order $\boundaries$ with the corresponding coordinate-wise
  (product) order $\sqsubseteq_{binom}$.
  This order is still a sound (but incomplete) approximation of $\pleq$.
\end{definition}

In this representation, $\sema{\texttt{Pop}}$ remains exact, and is even
simpler.

\begin{proposition}
  $\sem{\mathtt{Pop}}\big(\sum_{k\leq d} a_k \binom{n}{k}\big)$ is the polynomial
  $\sum_{k\leq d} p_k \binom{n}{k}$ with coefficients given by $p_k=a_k+a_{k+1}$.
\end{proposition}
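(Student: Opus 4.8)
The plan is to unfold the definition of the semantics and reduce the entire statement to a single combinatorial identity, Pascal's rule. By the definition of $\sem{\mathtt{Pop}}$, we have $\sem{\mathtt{Pop}}(f) = n\mapsto f(n+1)$, so substituting $f = \sum_{k\le d} a_k\binom{n}{k}$ gives $\sem{\mathtt{Pop}}(f)(n) = \sum_{k\le d} a_k \binom{n+1}{k}$. Thus everything reduces to re-expressing $\binom{n+1}{k}$ in the binomial basis $\{\binom{n}{k}\}$.

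First I would invoke Pascal's rule in the form $\binom{n+1}{k} = \binom{n}{k} + \binom{n}{k-1}$. Although classically stated for integers, this is a \emph{polynomial} identity in $n$: both sides are polynomials of degree $k$ agreeing at the infinitely many points $n\in\nn$, hence they coincide as elements of $\rr_d[X]$. This justifies its use when $\binom{n}{k}$ is read as the polynomial $\frac{1}{k!}\,n(n-1)\cdots(n-k+1)$, which is how the binomial basis is defined above.

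Next I would substitute and reorganise the sum:
$$\sum_{k\le d} a_k\binom{n+1}{k} = \sum_{k\le d} a_k\binom{n}{k} + \sum_{k\le d} a_k\binom{n}{k-1}.$$
In the second sum, the $k=0$ term vanishes since $\binom{n}{-1}=0$, and re-indexing $j=k-1$ turns it into $\sum_{j\le d-1} a_{j+1}\binom{n}{j}$. Collecting coefficients of each $\binom{n}{k}$ then yields $p_k = a_k + a_{k+1}$, under the boundary convention $a_{d+1}=0$ (so that $p_d = a_d$, reflecting that $\mathtt{Pop}$ preserves both the degree and the leading binomial coefficient). I would also note that each $p_k\geq 0$, so the result remains inside the positive cone, confirming that $\sema{\mathtt{Pop}}$ is exact with a \emph{single} generator in this domain.

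There is essentially no obstacle here beyond bookkeeping: the whole content is captured by Pascal's rule, and the only care required is the treatment of the two index boundaries, $k=0$ and $k=d$. This is precisely the simplification promised by the binomial basis — the shift that would produce a dense, multi-term expansion in the monomial basis (via the Newton binomial formula used just above for $\texttt{Pop}$) collapses to the clean two-term recurrence $p_k = a_k + a_{k+1}$.
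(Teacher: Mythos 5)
Your proof is correct and follows exactly the paper's route: the paper's own proof is a one-line application of Pascal's identity, $\sum_{k\leq d} a_k \binom{n+1}{k}= \sum_{k\leq d} (a_k + a_{k+1}) \binom{n}{k}$, which is precisely your argument with the re-indexing and the boundary convention $a_{d+1}=0$ spelled out explicitly.
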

\begin{proof}
  By Pascal's identity, we have
  \begingroup
    \abovedisplayskip=3pt
    \belowdisplayskip=5pt
  $$
  \sum_{k\leq d} a_k \binom{n+1}{k}= \sum_{k\leq d} \Big(\underbrace{a_k + a_{k+1}}_{p_k}\Big) \binom{n}{k}.
  $$
  \endgroup
\end{proof}

The case of \texttt{Push} is more involved,
% a bit more complicated,
and is most conveniently addressed by synthesis (see
Section~\ref{sec:convexity-and-synthesis}), but the corresponding
convex problem is now significantly sparser
% much more sparse
(it can be obtained by using the same push/pop trick as in the
previous section).
We simply give an example here.

\begin{example}
  For $d=2$, we obtain
  $\sema{\texttt{Push}~x}\big((a_2,a_1,a_0)\big)$
  \begingroup
    \abovedisplayskip=3pt
    \belowdisplayskip=8pt
  \begin{equation*}
           %% If time allows, would be cool to draw it, but next slide with examples should fill this goal
      %\hspace{15mm}
      =\begin{cases}
        \,\uparrow\!\!\big\{(a_2,a_1-a_2,x),\;(x+a_1-a_0,a_0-x,x)\big\}
          \\\hspace{3.25cm}\text{if }x > a_0-a_1+a_2\\
        \,\uparrow\!\!\big\{(a_2,a_1-a_2,a_0-a_1+a_2)\big\}%\\
          \\\hspace{3.25cm}\text{if }x = a_0-a_1+a_2\\
          %\hspace{6.5pt}\text{ if }x = a_0-a_1+a_2\\
        \,\uparrow\!\!\big\{(a_2,a_1-a_2,a_0-a_1+a_2),\;(a_2,a_0-x,x)\big\}
          \\\hspace{3.25cm}\text{if }x < a_0-a_1+a_2
      \end{cases}
    \end{equation*}
  \endgroup

To illustrate the result in a simple case,
Fig.~\ref{fig:absiter-simple-quadratic-plots} shows abstract
iteration in the domain \boundariesubs,
% $\boundaries$-upper bounds,
using quadratic
polynomials in the binomial basis. This is applied to the equation
$\Phi(f)(n) = \ite(n=0,\,0,\,f(n-1)+n)$, which admits a quadratic
solution.
The reader may notice that the bounded region (shown in white), is no
longer a polyhedron,
% is not of polyhedra anymore,
but a more complex set, bounded with conjunctions of quadratic
polynomials.
%
%\plgnote{Alternative: but a more complex set, defined by conjunctions
%  of quadratic polynomial inequalities.}
% LR: alternative hidden for now, but could be reintroduced

\end{example}

\begin{figure}[h]
  \begin{subfigure}{.48\linewidth}
    \resizebox{\linewidth}{!}{%
    \begin{tikzpicture}[x=2cm,y=4.7mm,domain=0:4.5]
        \begin{scope}[transparency group]
          \begin{scope}[blend mode=multiply]
              \draw[very thin,color=gray!50,ystep=1,xstep=1] (-0.1,-0.1) grid (4.4,12.4);
              \draw[->] (-0.1,0) -- (4.5,0) node[right] {\scriptsize $n$};
              \draw[->] (0,-0.2) -- (0,12.5) node[above] {\scriptsize $f(n)$};
              \foreach \x in {0,...,4} {
                  \node [anchor=north] at (\x,-0.5-0ex) {\tiny $\x$};
              }
              \foreach \y in {0,1,...,12} {
                  \node [anchor=east] at (-0.5-0.5em,\y) {\tiny $\y$};
              }
          \begin{scope}
            \clip (-.1,-.1) rectangle (4.4,12.4);
            % Above bot
            \fill[blue!10] (0,0) --
              plot[id=plot1] function{0} -- (8.4,40.4) -- (0,40.4) -- cycle;
          \end{scope}
          \end{scope}
        \end{scope}
        \begin{scope}
          \clip (-.1,-.1) rectangle (4.4,12.4);
        % Bot / f0 / (0,0,0)
        \draw[line width=3pt, black] plot[id=plot2] function{0};
        \end{scope}
        \node[anchor=south west] at (4.4,0) {$(0,0,0)$};
      \end{tikzpicture}%
    }
  \end{subfigure}
  \hfill
  %<8>
  \begin{subfigure}{.48\linewidth}
    \resizebox{\linewidth}{!}{%
    \begin{tikzpicture}[x=2cm,y=4.7mm,domain=0:4.5]
        \begin{scope}[transparency group]
          \begin{scope}[blend mode=multiply]
              \draw[very thin,color=gray!50,ystep=1,xstep=1] (-0.1,-0.1) grid (4.4,12.4);
              \draw[->] (-0.1,0) -- (4.5,0) node[right] {\scriptsize $n$};
              \draw[->] (0,-0.2) -- (0,12.5) node[above] {\scriptsize $f(n)$};
              \foreach \x in {0,...,4} {
                  \node [anchor=north] at (\x,-0.5-0ex) {\tiny $\x$};
              }
              \foreach \y in {0,1,...,12} {
                  \node [anchor=east] at (-0.5-0.5em,\y) {\tiny $\y$};
              }
          \begin{scope}
            \clip (-.1,-.1) rectangle (4.4,12.4);
            % Above bot
            \fill[blue!10] (0,0) --
              plot[id=plot3] function{0} -- (8.4,40.4) -- (0,40.4) -- cycle;
          \end{scope}
          \end{scope}
        \end{scope}
        \begin{scope}
          \clip (-.1,-.1) rectangle (4.4,12.4);
        % Bot / f0 / (0,0,0)
        \draw[line width=2pt, black, dashed] plot[id=plot4] function{0};
        % f1=Phi(f0) / (0,1,0)
        \draw[line width=2pt, black] plot[id=plot5] function{x};
        \end{scope}
        \node[anchor=south west] at (4.4,0) {$(0,0,0)$};
        \node[anchor=west] at (4.4,4.4) {$(0,1,0)$};
      \end{tikzpicture}%
    }
  \end{subfigure}

  %<9>
  %% \begin{subfigure}{.48\linewidth}
  %%   \resizebox{\linewidth}{!}{%
  %%   \begin{tikzpicture}[x=2cm,y=4.7mm,domain=0:4.5]
  %%       \begin{scope}[transparency group]
  %%         \begin{scope}[blend mode=multiply]
  %%             \draw[very thin,color=gray!50,ystep=1,xstep=1] (-0.1,-0.1) grid (4.4,12.4);
  %%             \draw[->] (-0.1,0) -- (4.5,0) node[right] {\scriptsize $n$};
  %%             \draw[->] (0,-0.2) -- (0,12.5) node[above] {\scriptsize $f(n)$};
  %%             \foreach \x in {0,...,4} {
  %%                 \node [anchor=north] at (\x,-0.5-0ex) {\tiny $\x$};
  %%             }
  %%             \foreach \y in {0,1,...,12} {
  %%                 \node [anchor=east] at (-0.5-0.5em,\y) {\tiny $\y$};
  %%             }
  %%         \begin{scope}
  %%           \clip (-.1,-.1) rectangle (4.4,12.4);
  %%           % Above f1#
  %%           \fill[blue!10] (0,0) --
  %%             plot function{x} -- (8.4,40.4) -- (0,40.4) -- cycle;
  %%         \end{scope}
  %%         \end{scope}
  %%       \end{scope}
  %%       \begin{scope}
  %%         \clip (-.1,-.1) rectangle (4.4,12.4);
  %%       % f1=Phi(f0) / (0,1,0)
  %%       \draw[line width=3pt, blue] plot function{x};
  %%       \end{scope}
  %%       \node[anchor=west, blue] at (4.4,4.4) {$(0,1,0)$};
  %%     \end{tikzpicture}%
  %%   }
  %% \end{subfigure}
  %

  %<10>
  \begin{subfigure}{.48\linewidth}
    \resizebox{\linewidth}{!}{%
    \begin{tikzpicture}[x=2cm,y=4.7mm,domain=0:4.5]
        \begin{scope}[transparency group]
          \begin{scope}[blend mode=multiply]
              \draw[very thin,color=gray!50,ystep=1,xstep=1] (-0.1,-0.1) grid (4.4,12.4);
              \draw[->] (-0.1,0) -- (4.5,0) node[right] {\scriptsize $n$};
              \draw[->] (0,-0.2) -- (0,12.5) node[above] {\scriptsize $f(n)$};
              \foreach \x in {0,...,4} {
                  \node [anchor=north] at (\x,-0.5-0ex) {\tiny $\x$};
              }
              \foreach \y in {0,1,...,12} {
                  \node [anchor=east] at (-0.5-0.5em,\y) {\tiny $\y$};
              }
          \begin{scope}
            \clip (-.1,-.1) rectangle (4.4,12.4);
            % Above f1#
            \fill[blue!10] (0,0) --
              plot[id=plot6] function{x} -- (8.4,40.4) -- (0,40.4) -- cycle;
          \end{scope}
          \end{scope}
        \end{scope}
        \begin{scope}
          \clip (-.1,-.1) rectangle (4.4,12.4);
        % f1=Phi(f0) / (0,1,0)
        \draw[line width=2pt, blue, dotted] plot[id=plot7] function{x};
        % Phi(f1) : ite(x<=1,x,2x-1)
        \draw[line width=2pt, black] plot[id=plot8] function{(x<=1)*x + (x>1)*(2*x-1)};
        \end{scope}
        \node[anchor=west, blue] at (4.4,4.4) {$(0,1,0)$};
      \end{tikzpicture}%
    }
  \end{subfigure}
  \hfill
  %<11>
  \begin{subfigure}{.48\linewidth}
    \resizebox{\linewidth}{!}{%
    \begin{tikzpicture}[x=2cm,y=4.7mm,domain=0:4.5]
        \begin{scope}[transparency group]
          \begin{scope}[blend mode=multiply]
              \draw[very thin,color=gray!50,ystep=1,xstep=1] (-0.1,-0.1) grid (4.4,12.4);
              \draw[->] (-0.1,0) -- (4.5,0) node[right] {\scriptsize $n$};
              \draw[->] (0,-0.2) -- (0,12.5) node[above] {\scriptsize $f(n)$};
              \foreach \x in {0,...,4} {
                  \node [anchor=north] at (\x,-0.5-0ex) {\tiny $\x$};
              }
              \foreach \y in {0,1,...,12} {
                  \node [anchor=east] at (-0.5-0.5em,\y) {\tiny $\y$};
              }
          \begin{scope}
            \clip (-.1,-.1) rectangle (4.4,12.4);
            % Above f2#
            \fill[blue!10] (0,0) --
              plot[id=plot9] function{(x*(x+1)*.5<2*x)?(x*(x+1)*.5):(2*x)} -- (8.4,40.4) -- (0,40.4) -- cycle;
          \end{scope}
          \end{scope}
        \end{scope}
        \begin{scope}
          \clip (-.1,-.1) rectangle (4.4,12.4);
        % f1=Phi(f0) / (0,1,0)
        \draw[line width=2pt, blue, dotted] plot[id=plot10] function{x};
        % Phi(f1) : ite(x<=1,x,2x-1)
        \draw[line width=2pt, black] plot[id=plot11] function{(x<=1)*x + (x>1)*(2*x-1)};
        % f2# : {(1,1,0), (0,2,0)} ((1,1,0) is below)
        \draw[line width=2pt, blue] plot[id=plot12] function{2*x};
        % fsol / (1,1,0)
        \draw[line width=2pt, purple] plot[id=plot13] function{x*(x+1)*.5};
        \end{scope}
        \node[anchor=west, blue] at (4.4,4.4) {$(0,1,0)$};
        \node[anchor=west, blue] at (4.4,8.8) {$(0,2,0)$};
        \node[anchor=north west, purple] at (4.4,12) {$(1,1,0)$};
      \end{tikzpicture}%
    }
  \end{subfigure}
  %

  %<12>
  \begin{subfigure}{.48\linewidth}
    \resizebox{\linewidth}{!}{%
    \begin{tikzpicture}[x=2cm,y=4.7mm,domain=0:4.5]
        \begin{scope}[transparency group]
          \begin{scope}[blend mode=multiply]
              \draw[very thin,color=gray!50,ystep=1,xstep=1] (-0.1,-0.1) grid (4.4,12.4);
              \draw[->] (-0.1,0) -- (4.5,0) node[right] {\scriptsize $n$};
              \draw[->] (0,-0.2) -- (0,12.5) node[above] {\scriptsize $f(n)$};
              \foreach \x in {0,...,4} {
                  \node [anchor=north] at (\x,-0.5-0ex) {\tiny $\x$};
              }
              \foreach \y in {0,1,...,12} {
                  \node [anchor=east] at (-0.5-0.5em,\y) {\tiny $\y$};
              }
          \begin{scope}
            \clip (-.1,-.1) rectangle (4.4,12.4);
            % Above f2#
            \fill[blue!10] (0,0) --
              plot[id=plot13] function{(x*(x+1)*.5<2*x)?(x*(x+1)*.5):(2*x)} -- (8.4,40.4) -- (0,40.4) -- cycle;
          \end{scope}
          \end{scope}
        \end{scope}
        \begin{scope}
          \clip (-.1,-.1) rectangle (4.4,12.4);
        % f2# : {(1,1,0), (0,2,0)} ((1,1,0) is below)
        \draw[line width=2pt, blue] plot[id=plot14] function{2*x};
        % fsol / (1,1,0)
        \draw[line width=2pt, purple] plot[id=plot15] function{x*(x+1)*.5};
        \end{scope}
        %\node[anchor=west, blue] at (4.4,4.4) {$(0,1,0)$};
        \node[anchor=west, blue] at (4.4,8.8) {$(0,2,0)$};
        \node[anchor=north west, purple] at (4.4,12) {$(1,1,0)$};
      \end{tikzpicture}%
    }
  \end{subfigure}
  \hfill
  %<13>
  \begin{subfigure}{.48\linewidth}
    \resizebox{\linewidth}{!}{%
    \begin{tikzpicture}[x=2cm,y=4.7mm,domain=0:4.5]
        \begin{scope}[transparency group]
          \begin{scope}[blend mode=multiply]
              \draw[very thin,color=gray!50,ystep=1,xstep=1] (-0.1,-0.1) grid (4.4,12.4);
              \draw[->] (-0.1,0) -- (4.5,0) node[right] {\scriptsize $n$};
              \draw[->] (0,-0.2) -- (0,12.5) node[above] {\scriptsize $f(n)$};
              \foreach \x in {0,...,4} {
                  \node [anchor=north] at (\x,-0.5-0ex) {\tiny $\x$};
              }
              \foreach \y in {0,1,...,12} {
                  \node [anchor=east] at (-0.5-0.5em,\y) {\tiny $\y$};
              }
          \begin{scope}
            \clip (-.1,-.1) rectangle (4.4,12.4);
            % Above f3#
            \fill[blue!10] (0,0) --
              plot[id=plot16] function{(x*(x+1)*.5<3*x)?(x*(x+1)*.5):(3*x)} -- (8.4,40.4) -- (0,40.4) -- cycle;
          \end{scope}
          \end{scope}
        \end{scope}
        \begin{scope}
          \clip (-.1,-.1) rectangle (4.4,12.4);
        % f2# : {(1,1,0), (0,2,0)} ((1,1,0) is below)
        \draw[line width=2pt, blue!50, dashed] plot[id=plot17] function{2*x};
        % f3# : {(1,1,0) | (0,3,0), (2,1,0)} ((1,1,0) is below)
        \draw[line width=2pt, blue] plot[id=plot18] function{3*x};
        \draw[line width=1.5pt, blue!50] plot[id=plot19] function{2*(.5*x*(x-1))+1*(x)};
        % fsol fixed
        \node at (2.5,3) {\textcolor{purple}{\large \textbf{Fixed!}}};
        \draw[->, line width=1pt, purple] (2.8,5) arc (80:-140:3mm) -- ++(120:2mm);
        % fsol / (1,1,0)
        \draw[line width=2pt, purple] plot[id=plot20] function{x*(x+1)*.5};
        \end{scope}
        \node[anchor=west, blue] at (4.4,8.8) {$(0,2,0)$};
        \node[anchor=north west, purple] at (4.4,12) {$(1,1,0)$};
        \node[anchor=west, blue] at (2.72,7.5) {\footnotesize $(2,1,0)$};
        \node[anchor=south east, blue] at (1.6,5) {\footnotesize $(0,3,0)$};
      \end{tikzpicture}%
    }
  \end{subfigure}
  %

  %<14>
  \begin{subfigure}{.48\linewidth}
    \resizebox{\linewidth}{!}{%
    \begin{tikzpicture}[x=2cm,y=4.7mm,domain=0:4.5]
        \begin{scope}[transparency group]
          \begin{scope}[blend mode=multiply]
              \draw[very thin,color=gray!50,ystep=1,xstep=1] (-0.1,-0.1) grid (4.4,12.4);
              \draw[->] (-0.1,0) -- (4.5,0) node[right] {\scriptsize $n$};
              \draw[->] (0,-0.2) -- (0,12.5) node[above] {\scriptsize $f(n)$};
              \foreach \x in {0,...,4} {
                  \node [anchor=north] at (\x,-0.5-0ex) {\tiny $\x$};
              }
              \foreach \y in {0,1,...,12} {
                  \node [anchor=east] at (-0.5-0.5em,\y) {\tiny $\y$};
              }
          \begin{scope}
            \clip (-.1,-.1) rectangle (4.4,12.4);
            % Above fsol
            \fill[blue!10] (0,0) --
              plot[id=plot21] function{x*(x+1)*.5} -- (8.4,40.4) -- (0,40.4) -- cycle;
          \end{scope}
          \end{scope}
        \end{scope}
        \begin{scope}
          \clip (-.1,-.1) rectangle (4.4,12.4);
        % f4#
        \draw[line width=1pt, blue!50] plot[id=plot22] function{4*x};
        \draw[line width=.5pt, blue!25] plot[id=plot23] function{3*(.5*x*(x-1))+1*(x)};
        % fsol fixed
        \node at (2.5,3) {\textcolor{purple}{\large \textbf{Fixed!}}};
        \draw[->, line width=1pt, purple] (2.8,5) arc (80:-140:3mm) -- ++(120:2mm);
        % fsol / (1,1,0)
        \draw[line width=2pt, purple] plot[id=plot24] function{x*(x+1)*.5};
        \end{scope}
        \node[anchor=north west, purple] at (4.4,12) {$(1,1,0)$};
        \node[anchor=west, blue] at (2.4,7.5) {\scriptsize $(3,1,0)$};
        \node[anchor=south east, blue] at (1.6,6) {\scriptsize $(0,4,0)$};
      \end{tikzpicture}%
    }
  \end{subfigure}
  \hfill
  %<16>
  \begin{subfigure}{.48\linewidth}
    \resizebox{\linewidth}{!}{%
    \begin{tikzpicture}[x=2cm,y=4.7mm,domain=0:4.5]
        \begin{scope}[transparency group]
          \begin{scope}[blend mode=multiply]
              \draw[very thin,color=gray!50,ystep=1,xstep=1] (-0.1,-0.1) grid (4.4,12.4);
              \draw[->] (-0.1,0) -- (4.5,0) node[right] {\scriptsize $n$};
              \draw[->] (0,-0.2) -- (0,12.5) node[above] {\scriptsize $f(n)$};
              \foreach \x in {0,...,4} {
                  \node [anchor=north] at (\x,-0.5-0ex) {\tiny $\x$};
              }
              \foreach \y in {0,1,...,12} {
                  \node [anchor=east] at (-0.5-0.5em,\y) {\tiny $\y$};
              }
          \begin{scope}
            \clip (-.1,-.1) rectangle (4.4,12.4);
            % Above fsol
            \fill[blue!10] (0,0) --
              plot[id=plot25] function{x*(x+1)*.5} -- (8.4,40.4) -- (0,40.4) -- cycle;
          \end{scope}
          \end{scope}
        \end{scope}
        \begin{scope}
          \clip (-.1,-.1) rectangle (4.4,12.4);
        % fsol fixed
        \node at (2.5,3) {\textcolor{purple}{\large \textbf{Fixed!}}};
        \draw[->, line width=1pt, purple] (2.8,5) arc (80:-140:3mm) -- ++(120:2mm);
        % fsol / (1,1,0)
        \draw[line width=2pt, purple] plot function{x*(x+1)*.5};
        \end{scope}
        \node[anchor=north west, purple] at (4.4,12) {$(1,1,0)$};
      \end{tikzpicture}%
    }
  \end{subfigure}
  \caption{Example of simple iteration with quadratic bounds. $\Phi(f)(n) = \ite(n=0,\,0,\,f(n-1)+n)$.}
  \label{fig:absiter-simple-quadratic-plots}
\end{figure}
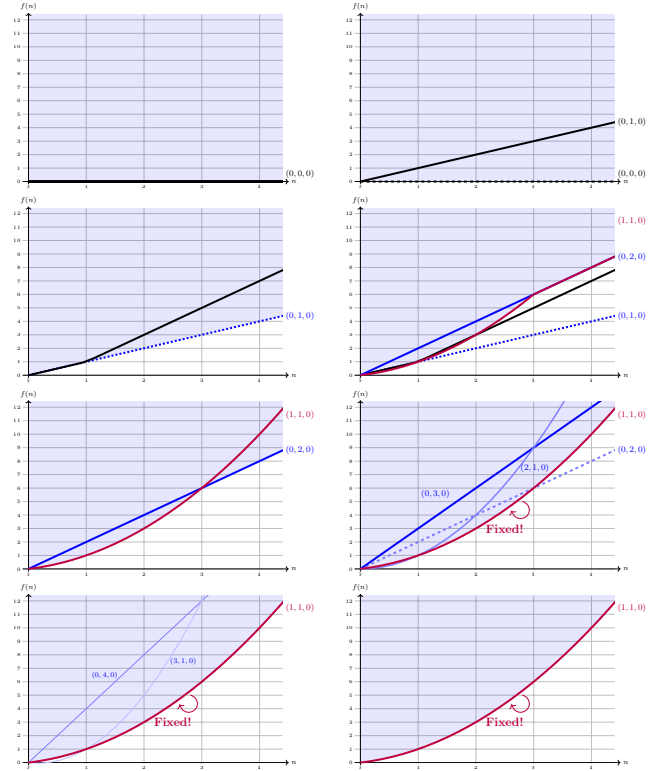

\subsection{Beyond Polynomials: Exponential Polynomials}
\label{subsec:exp-poly-bounds}

We may go even further and consider sets $\boundaries$ of boundaries
composed of all elements of the form $n\mapsto \sum_{b=1}^m a_b b^n$
(exponentials, spanned by bounded integer bases), or, even better,
$n\mapsto \sum_{b=1}^m\sum_{k=0}^d b^n n^k$ (exponential polynomials,
i.e. products of polynomials with exponentials).
In full generality, it would be desirable to allow arbitrary bases
rather than a finite number of them, or even work with more expressive
% complex
templates such as $n\mapsto \sum a_{b,k,e,f} b^n n^{k}\log(en+f)$.
However, for practical reasons,
% but in practice
it is convenient to let $\boundaries$ form a \emph{finite-dimensional
vector space}, which simplifies
% as it facilitates
transfer function design.

The principles that allow us to compute abstract semantics $\sema{.}$
of $\Seq$ for such exponentials and exponential polynomials are very
similar to those in previous sections: in several cases (e.g. $+$) we
immediately obtain it from the stability properties of $\boundaries$.
To go further, we can apply convexity arguments
% use convexity
and some transfer function synthesis.

In the case of exponentials and exponential polynomials, we can also
simplify (and make the optimisation problems more sparse) by using a
trick similar to the binomial basis used for polynomials. Once again,
this technique originates in the context of type-based automated
amortised resource analysis (AARA), in more recent
work~\cite{Kahn20-short}. The idea is to use the \emph{Stirling numbers of
the $2^\text{nd}$ kind}, which grow exponentially, and satisfy a
Pascal-like identity that makes them well-behaved under shifts.

\begin{definition}
  \textbf{Stirling numbers of the $2^\text{nd}$ kind}, defined for
  each $n,k\in\nn$ and denoted by $\stirling{n}{k}$, are
  given by the following recurrence relations
  \begingroup
    \abovedisplayskip=3pt
    \belowdisplayskip=3pt
  \begin{equation*}
    \begin{gathered}\textstyle
      \stirling{n}{n} = 1 \text{ for } n\geq 0,
      \qquad
      \stirling{n}{0} = \stirling{0}{n} = 0 \text{ for } n > 0,\\
      \textstyle
      \stirling{n+1}{k} = k\stirling{n}{k} + \stirling{n}{k-1}
        \text{ for } 0<k<n.
    \end{gathered}
  \end{equation*}
  \endgroup
  %% They may also be computed explicitly by
  %% $\stirling{n}{k}=\sum_{i=0}^k\frac{(-1)^{k-i}i^n}{(k-i)!i!}$,
They grow exponentially in $n$ for fixed $k$:
% They enjoy exponential growth in $n$ for fixed $k$:
  $\stirling{n}{k}$ is asymptotically equivalent to $\frac{1}{k!}k^n$.

  %% \begingroup
  %%   \abovedisplayskip=2pt
  %%   \belowdisplayskip=2pt
  %%   $$\stirling{n}{k}\underset{n\to\infty}{\sim}\frac{k^n}{k!}.$$
  %% \endgroup
  Moreover, the sets of functions on $n$ spanned (as vector spaces) by $\{1, 2^n, 3^n, ... m^n\}$
  and by $\{\stirling{n+1}{1},...,\stirling{n+1}{m}\}$ are the same.
% identical.
  We can change basis between these representations and provide
  explicit formulas for $\stirling{n}{k}$, see
  e.g.~\cite{ConcreteMathematics}.
\end{definition}

Using this, we can work in the vector space
$\boundaries=\Vect\{\stirling{n+1}{1},...,\stirling{n+1}{m}\}$,
extended with appropriate top elements, as before.
The \texttt{Pop} operation is simple. %has a simple form.
% relation is simple.
\begin{proposition}
  $\sem{\mathtt{Pop}}\big(\sum_{b\leq m} a_b \stirling{n+1}{b}\big)$ is the function
  $\sum_b p_b \stirling{n+1}{b}$, with coefficients given by %where coefficients are given by
  $p_b=b\cdot a_b+a_{b+1}$.
\end{proposition}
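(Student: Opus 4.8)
The plan is to mirror exactly the Pascal-identity argument used just above for the binomial basis, replacing Pascal's identity with the defining recurrence for Stirling numbers of the second kind. First I would unfold the concrete semantics of \texttt{Pop}, recalling that $\sem{\mathtt{Pop}}(f) = n\mapsto f(n+1)$. Applying this to $f = \big(n\mapsto\sum_{b\leq m} a_b \stirling{n+1}{b}\big)$ gives $\sem{\mathtt{Pop}}(f)(n) = \sum_{b\leq m} a_b \stirling{n+2}{b}$, so the task reduces to re-expressing each $\stirling{n+2}{b}$ in the same basis $\{\stirling{n+1}{1},\dots,\stirling{n+1}{m}\}$ and reading off the coefficients.

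The key step is to apply the Stirling recurrence in shifted form. Substituting $n\mapsto n+1$ and $k\mapsto b$ into $\stirling{n+1}{k} = k\stirling{n}{k} + \stirling{n}{k-1}$ yields $\stirling{n+2}{b} = b\stirling{n+1}{b} + \stirling{n+1}{b-1}$. Plugging this in and splitting the sum, I would write $\sum_{b\leq m} a_b \stirling{n+2}{b} = \sum_{b\leq m} b\,a_b \stirling{n+1}{b} + \sum_{b\leq m} a_b \stirling{n+1}{b-1}$, then reindex the second sum by $b' = b-1$ so that it too is expressed on the generators $\stirling{n+1}{b'}$. Collecting the coefficient of each $\stirling{n+1}{b}$ then gives exactly $p_b = b\,a_b + a_{b+1}$, as claimed. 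The bulk of this is a routine coefficient comparison identical in spirit to the binomial-basis proof.

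The only points requiring care --- and the place where the argument is marginally more delicate than the binomial case --- are the two boundary terms. At the bottom, the reindexed second sum contributes a term $a_1\stirling{n+1}{0}$, which I would discard using $\stirling{n+1}{0}=0$ (valid since $n+1>0$ for all $n\in\nn$); this is consistent with $p_1 = a_1 + a_2$, noting also that $\stirling{n+2}{1}=\stirling{n+1}{1}$. At the top, the first sum contributes $m\,a_m\stirling{n+1}{m}$ while the reindexed second sum only reaches $\stirling{n+1}{m-1}$, so no $a_{m+1}$ genuinely appears; adopting the convention $a_b=0$ for $b>m$ lets me write $p_m = m\,a_m + a_{m+1}$ uniformly and keeps the result within $\boundaries$. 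I would also remark that, although the recurrence is quoted for $0<k<n$, it holds on the full range needed here under the stated boundary values $\stirling{n}{0}=\stirling{0}{n}=0$ and $\stirling{n}{n}=1$, so every case is covered.
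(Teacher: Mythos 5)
Your proposal is correct and matches the paper's intended argument: the paper gives no explicit proof for this proposition, but it is the direct analogue of the preceding binomial-basis proof via Pascal's identity, with the Stirling recurrence $\stirling{n+2}{b}=b\stirling{n+1}{b}+\stirling{n+1}{b-1}$ playing the role of Pascal's rule, exactly as you do. Your extra care with the boundary terms ($\stirling{n+1}{0}=0$, the convention $a_{m+1}=0$, and the validity range of the recurrence) is sound and, if anything, more thorough than the paper's one-line treatment.
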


This approach extends naturally
% We can easily extend this
to exponential polynomials in
stirling+binomial basis, using
$$\boundaries=\Vect\left\{\stirling{n+1}{b}\binom{n}{k}\;\middle|\;1\!\leq\! b\!\leq\!m, 0\!\leq\! k\!\leq\! d\right\}.$$
Here, similar computations can be performed by observing
 $\stirling{n+1}{b}\binom{n+1}{k} %= (b\stirling{n}{b}+\stirling{n}{b-1}))\cdot(\binom{n}{k}+\binom{n}{k-1})
  = b\stirling{n}{b}\binom{n}{k} + b\stirling{n}{b}\binom{n}{k-1} +
    \stirling{n}{b-1}\binom{n}{k} + \stirling{n}{b-1}\binom{n}{k-1}
 $.

\subsection{Widenings}
\label{subsec:widenings}

We have not yet discussed widenings, which are crucial in practice for
ensuring the convergence of ascending Kleene sequences in the domain
$(\pp_\uparrow(\boundaries),\supseteq)$.
However, designing widenings is a highly heuristic task, best
approached in parallel with implementation and (automated)
experimentation.  We therefore present here only a few core ideas --
sufficient to guarantee finite-time convergence -- and defer their
evaluation and tuning to future work.
%% However, the design of widenings is also a very heuristic task, which
%% is better performed in parallel with implementation and (automated)
%% experimentations. For this reason, we simply give here a few ideas --
%% enough to ensure finite-time convergence -- and leave their evaluation
%% and fine-tuning for future work.

There are two primary sources of non-termination when working in
% in the chains of
$\pp_\uparrow(\boundaries)$, especially when sets are represented
% when representing these sets
as upper closures of a finite set of generators.
\begin{enumerate}
\item The cardinality of the generator set may grow without bound
  % arbitrarily
    (no ascending chain condition for sets of generators, no finite
  antichain condition in $(\boundaries,\pleq)$, no descending chain
  condition for sets of valid bounds).
\item The generators themselves may be repeatedly updated and grow
  unboundedly
% keep growing / being updated
    (no ascending chain condition in $(\boundaries,\pleq)$),
    as illustrated in Section~\ref{subsec:absiter-demo-1}.
\end{enumerate}

To address the first issue, we adopt a classical strategy used for
powerset
% propose to use a classical approach when dealing with powersets
domains:\footnote{See~\cite{DBLP:journals/sttt/BagnaraHZ07} for references and advanced discussions
on widenings for powerset domains.}
% LR: I just discovered this paper by Bagnara and Zaffanella,
%     looks great! We should check it out.
fix an arbitrary integer bound \texttt{max-nb-constraints} on the
number of generators.  When this bound is exceeded, we either drop or
join new constraints, selecting them in arbitrary order.
%In practice, appropriate heuristics may be beneficial here.
%% , and \textbf{drop or join new constraints above max cardinality}
%% (ordering the constraints generated at each step arbitrarily --
%% heuristics may be useful in this case).

To address the second issue, as mentioned in
Section~\ref{subsec:absiter-demo-1}, we only consider the case where
$\boundaries$ is parameterised by finite vectors of numbers,
and propose using parameter-wise \textbf{stability widening},
possibly combined with thresholds and delayed widening -- this is the
most common widening for intervals~\cite{CousotCousot76-short,MineTutorialBook2017-short}, where it is
called \emph{standard widening}.
%%, and has also been
%% applied to polyhedra~\cite{poly-widen-stab}.
%
Intuitively, the effect of $\nabla_{\text{stability}}$ is to track the
evolution of each parameter of each constraint: once a parameter is
% it gets
updated repeatedly, its value is increased to the
% too many times, raise the value to
next threshold, and ultimately
% eventually
to $+\infty$.

We may also try to directly use precise polyhedra
widenings~\cite{BHRZ05-SCP} on the convex set of parameter vectors.

Finally, to accelerate convergence and potentially improve precision,
% improve convergence speed and possibly precision,
we may choose to do as mentioned in
Section~\ref{subsec:absiter-demo-1}, and exit early if we detect
\textbf{(post)fixed generators}, i.e. generators $b\in\boundaries$ of
an abstract value $F^\sharp\in\pp_\uparrow(\boundaries)$ such that
$b\in \sema{\texttt{Prog}}(b)$ for the formulation
$\sema{\cdot}:\boundaries\to\pp_{\uparrow}(\boundaries)$ of the
abstract semantics.

%% (part of slide 21)

%% [...]

%%  %\vskip-1pt
%%     We have not discussed widenings. They are crucial.
%%     Our idea is
%%    % \vskip-2pt
%%     \begin{itemize}%\setlength{\itemsep}{-2pt}\scriptsize
%%       \item Use \textbf{stability} $\nabla$ + \textbf{thresholds} (+ delayed), parameter-wise (within $\boundaries$),
%%       \item On $\pp_\uparrow(\boundaries)$: bound the number of constraints, \textbf{drop/join above max cardinality},
%%       \item Perhaps \textbf{default to ``fixed constraints''}? ``local fixpoints''.
%%       %\vspace{-.25em}
%%     \end{itemize}
%%     \bularrow~Can we do better? Is this appropriate?

%% [...]

%% \lipsum[6]
%\clearpage

%\section{On Convexity and Transfer Function Synthesis}
\section{Convexity and Transfer Function Synthesis}
\label{sec:convexity-and-synthesis}

%\lrnote{Maybe use $T$ and $T^\sharp$ insead of $\sem{\cdot}$ and $\sema{\cdot}$, for visual simplicity?}

As observed in Theorem~\ref{th:convexity-constraints}, when the class
of boundary functions $\boundaries$ is convex, the set
$\alpha_\boundaries(f)$ of all constraints in $\boundaries$ satisfied
by a concrete $f$ is itself a convex set.

Moreover, designing an (optimal) abstract transfer function
$\transfa:\boundaries\to\pp_\uparrow(\boundaries)$ corresponding
to a construct with concrete semantics
$\transfc:(\dd\to\rri)\to(\dd\to\rri)$ can be stated as the
following problem.
\begin{center}\fbox{
  \begin{minipage}[c]{.5\linewidth}\centering
    Given a function $b\in\boundaries$,\\
    find \emph{all} $f^\sharp\in\boundaries$ such that\\
    %$\transfc(b) \pleq f^\sharp.$
    $f^\sharp\mathop{\dot{\geq}}\transfc(b).$
    %% \begingroup
    %%   \abovedisplayskip=2pt
    %%   \belowdisplayskip=2pt
    %%   $$\transfc(b) \pleq f^\sharp.$$
    %% \endgroup
  \end{minipage}%
  }
\end{center}
This can be easily generalised to multi-input constructs,
and the set of solutions to this problem
is convex.
Note that it is acceptable to discover \emph{some} but not all
solutions: %to the problem:
this yields
% would provide
a non-optimal but sound transfer function.
In particular, it is acceptable to replace $\pleq$ in the above
problem with a sound approximation $\sqsubseteq_\boundaries$, and/or
to perform integer-to-real relaxations.
Now, assume we are given a parametrisation $p:\rr^k \to \boundaries$
which preserves the notion of convexity from constraint space
$\boundaries$ to parameter space $\rr^k$, i.e. such that
$F^\sharp\in\pp(\boundaries)$ is convex whenever
$p^{-1}(F^\sharp)\in\pp(\rr^k)$ is convex, e.g. if $p$ is linear,
%a linear parametrisation,
as in all prior examples.
Transfer function design can then be reformulated
% be reframed
as the following problem, whose set of solutions is convex in
$\rr^k$.\\[-1.75em]
\begin{center}\fbox{
  \begin{minipage}[c]{.5\linewidth}\centering
    Given a vector $\vec{y}\in\rr^k$,\\
    find \emph{all} $\vec{x}\in\rr^k$ such that\\
    %$\transfc(p(\vec{y})) \pleq p(\vec{x}).$
    $p(\vec{x})\mathop{\dot{\geq}}\transfc(p(\vec{y})).$
    %% \begingroup
    %%   \abovedisplayskip=2pt
    %%   \belowdisplayskip=2pt
    %%   $$\transfc(b) \pleq f^\sharp.$$
    %% \endgroup
  \end{minipage}%
  }
\end{center}
Therefore, \textbf{transfer function synthesis} reduces to the
question of (automatic) discovery of a set of generators for this
\emph{finite-dimensional convex set}, or more exactly this family of
convex sets parameterised by $\vec{y}$.

% of course can be generalised to multiple inputs. Warning pairwise
% for generators.

%\phantom{.}
\vspace{.25em}

For many classes of expressions allowed
% that we may allow ourselves to use
in $p$ and $\transfc$, the above problem is tractable for tools such
as CAS
% computer algebra systems (CAS) -> full name already in the intro Sec.
augmented with SMT solvers and
advanced algebraic decision procedures, e.g. for quantifier
elimination.
For example, we have performed transfer function synthesis experiments
in two CAS, \texttt{Mathematica}~\cite{mathematica-v13-2-short} and
\texttt{Sage}~\cite{sagemath-short}, by explicitly encoding the semantics of
several concrete transfer functions $\transfc$ and explicitly giving
various families of boundaries $\boundaries$.
These CAS then possess primitives that allow to
$(1)$ query for instances of parameters that solve the above problem,
$(2)$ provide a finite representation of the set of solutions of the
above problem (via quantifier elimination and other reductions),
or even $(3)$ discover minimal solutions and express these sets via
generators.

Of course, the problem does not always admit a minimal
% \plgnote{minimal?}
% LR: correct, thanks
set of solutions that generate all others under
convexity, even when the solution set is finitely representable.
% can be finitely described:
In such cases, we may (arbitrarily or automatically) select a
finite number of generators.\footnote{This is reminiscent of the classical example of
approximating a disk with a polyhedron.
% the abstraction of a disk by a polyhedra.
This is not a coincidence (see
Rem.~\ref{rem:convexity-for-relational-constraint-domains}).}

As the reader might expect, in general, these synthesis operations can
become computationally expensive.
Nevertheless, it is important to note that synthesis is required
\emph{only once}, during the \emph{design} our abstract domain.  For
example, we may begin by specifying that we want to consider the case
where $\boundaries$ is the set of bivariate quartic polynomials
$\rr_4[X,Y]$.  We can then apply transfer function synthesis \emph{a single
time} to derive all relevant abstract transfer functions,
% Then, we can use transfer function synthesis \emph{once} to discover
% all abstract transfer functions,
e.g. for some extension of the language $\Seq$.  Finally, the abstract
domain can be implemented, using the automatically generated transfer
functions, with no further calls to the CAS required at analysis
time.\footnote{In addition to alleviating the burden of manual abstract domain
\emph{design}, transfer function synthesis can be used to improve
\emph{precision}, by generating transfer function for more than just
the most basic syntactic constructs, see related
work~\cite{Reps2004-short,Monniaux_LMCS10-short,BrauerKingLMCS12-short,AmurthOOPSLA22-short}.}

%% . Indeed, we can
%% choose to infer optimal transfer functions not just for basic
%% syntactic constructs, but also for small (concrete) combinations of
%% them, instead of resorting to composition of abstract transfer
%% functions, which is the typical source of loss of precision in
%% abstract interpretation. This motivation is exposed in related work
%% such as~\cite{...,...,...}.}

%\phantom{.}
\vspace{.25em}

Nevertheless, it is desirable to speed up the process of transfer
function synthesis. By relaxing full automation and allowing ourselves
to reformulate the above optimisation problems, we can drastically
simplify the queries sent to solvers.
Indeed, a key difficulty of the above problem lies in the use of the
order $\pleq$, which introduces an extra quantifier of the form
``$\forall n\in\nn,\,\texttt{<...>}$'' in our queries (or possibly a
%, at best, a relaxed variant like
relaxed variant like
% least a relaxed version of the form
``$\forall n\in \rr_+$'').
The problem becomes significantly easier
% to solve
if we find a way to discard the variable $n$, and formulate the
problem purely in terms of inequalities over the parameters $\vec{x}$ and
$\vec{y}$.
This is precisely what we did in our ``warm-up'' example of
Remark~\ref{rem:transfer-synthesis-example-teaser}, which gave the
abstract semantics of $\sema{\mathtt{Push}~c}$ in
Proposition~\ref{prop:sema-affine-bounds} from
Section~\ref{subsec:affine-bounds}.
We now present two further examples, corresponding to transfer
function synthesis in
Sections~\ref{subsec:poly-bounds}~and~\ref{subsec:exp-poly-bounds}.

\begin{example}
  For univariate
% monovariate
  polynomials of degree $d$ written in the monomial basis, the convex
  problem for synthesising $\sema{\texttt{Push}~x}$ on the input $\sum
  a_k n^k$, with unknowns $(p_d,...,p_1,p_0)$, can be written
  \begingroup
  \abovedisplayskip=4pt
  \belowdisplayskip=2pt
  $$
  %%\begin{gathered}
    %% \Big(n\mapsto \sum p_k n^k \Big) \,\dot{\geq}\, \sem{\texttt{Push}~x}\Big(n\mapsto \sum a_k n^k\Big)\\
    %% \text{is implied by}\\
    {\small
    \begin{cases}
      & \hfill p_0 \geq x\phantom{_k,} \\
      \wedge & \forall k \in [0,d],\; {\displaystyle \sum_{j\leq d}  p_j \binom{j}{k}} \geq a_k,
    \end{cases}}
  %%\end{gathered}
  $$
\endgroup
  following the same Pop/Push trick used in
  Remark~\ref{rem:transfer-synthesis-example-teaser}, and using
  relaxation to only enforce coordinate-wise order.
  Note, however, that the corresponding matrix is very dense.
  By contrast, using the \emph{binomial} basis presentation, we obtain
\begingroup
  \abovedisplayskip=2pt
  \belowdisplayskip=2pt
  \abovedisplayshortskip=2pt
  \belowdisplayshortskip=2pt
  $$\small
    \begin{cases}
      & \hfill p_0 \geq x \phantom{_d,}\\
      \wedge & \forall k<d,\;p_k + p_{k+1} \geq a_k\phantom{,}\\\
      \wedge & \hfill p_d \geq a_d,
    \end{cases}
  $$
\endgroup
  which is much more sparse, and whose generators can in fact even
  reasonably be computed by hand.
\end{example}

\begin{example}
  For exponential polynomials in the basis of binomials and Stirling
  numbers, we can synthesise the abstract transfer function
  $\sema{\texttt{Push}~x}$ applied to the function
  $\big(\sum_{k,b}a_{b,k}\stirling{n+1}{b}\binom{n}{k}\big)$ by
  computing generators (in $\vec{p_{b,k}}$) for the family of convex
  sets given by the following inequalities (with the convention
  $p_{m+1,k}=p_{b,d+1}=0$ for all $b,k$ to simplify notation).
    \begingroup
  \abovedisplayskip=2pt
  \belowdisplayskip=4pt
  \abovedisplayshortskip=2pt
  \belowdisplayshortskip=4pt
  $$\small
    \begin{cases}
      & \hfill p_{1,0} \geq x \phantom{_{b,k},}\\
      \wedge & \forall b,\,\forall k,\;
       b p_{b,k} + b p_{b,k+1} + p_{b+1,k} + p_{b+1,k+1} \geq a_{b,k},
    \end{cases}
  $$
\endgroup
\end{example}

\begin{remark}[Convexity for Relational Domains]
  \label{rem:convexity-for-relational-constraint-domains}
  One may wonder what enabled us to design non-linear numerical
  abstractions for functional constraint domains, since this
  is a challenge in the classical case of numerical relations (c.f.
  Section~\ref{subsec:B-bound-intro}).
  The convexity property of Theorem~\ref{th:convexity-constraints} is
  clearly a key ingredient.

  Does this mean that Theorem~\ref{th:convexity-constraints} no longer
  holds for relational constraints?
  No.
  In fact, a variant of Theorem~\ref{th:convexity-constraints} can be
  formulated in the relational case. We conjecture that this approach
  may prove useful to enable transfer function synthesis in classical
  numerical abstract domains, and perhaps even enable novel non-linear
  abstractions of $\pp(\rr^d)$.

  We leave these directions to future work.
  %
  %% LRNOTE: The full remain part of this section could be skipped for
  %% space constraints (e.g. kept in arxiv version). Dualisation of
  %% polyhedra/disks is fun, but not super important/insightful here.
  %
  For now, we simply offer the reader a ``constraint space
  perspective'' on the example of abstracting of a disk using
  polyhedra -- the classical example used to demonstrate that
  polyhedra do not come from a full Galois connection.

  From a purely theoretical perspective, we can in fact build a Galois
  connection from $\pp(\rr^d)$ to the set of \emph{all} affine
  constraints, just as we did for \boundariesbounds, allowing
  infinitely many constraints in principle.
  For such affine constraints, however, closed convex polyhedra in
  state space correspond dually to closed convex polyhedra
  in constraint space. %by the following perspective.
\end{remark}
\begin{example}[Dual of a Disk]
  We may abstract the disk
  $D=\{(x,y)\in\rr^2\,|\,x^2+y^2 \leq 1\}$ %can be abstracted
  by the infinite set
  $\alpha(D)=\{(a,b,c)\in\rr^3\,|\,\forall(x,y)\in D,\, ax+by\leq c\}$
  of affine inequalities satisfied by $D$.

  We can compute the shape of $\alpha(D)$ in constraint space,
  and obtain $\alpha(D)=\{(a,b,c)\in\rr^3\,|\,c \geq \|(a,b)\|_2\}$,
  where $\|\cdot\|_2$ is the Euclidean norm.
  %
  %% To compute the shape of $\alpha(D)$ in constraint space, we
  %% can use a quantifier elimination engine, or think geometrically of
  %% affine inequality parameterised by $(a,b,c)$: it corresponds to the
  %% (lower) half-space with frontier tangent to the vector
  %% $\frac{c}{\|(a\;b)\|_2}(a\;b)$ where $\|\cdot\|_2$ is the Euclidean
  %% norm.
  %% %
  %% Hence,
  %% $\alpha(D)=\{(a,b,c)\in\rr^3\,|\,c \geq \|(a\;b)\|_2 \}$.
  In other words, the abstraction of the disk may be visualised in
  constraint space as an upward, closed convex cone.
  In fact, for any object $E\in\pp(\rr^2)$, the set $\alpha(E)$ would
  always be a convex cone in constraint space.

  There is, however, a redundancy in the parametrisation of affine
  inequalities by $\rr^3$.
  %by vectors of $\rr^3$, which are not normalised.
  If we remove this
  redundancy, we are left with the \emph{projective space}
  $\mathbb{RP}^2$, in which $\alpha(D)$ looks like a disk (its points
  are rays of the %previous
  above cone).
  To recover a finite set of constraints (at the cost of precision),
  one must choose a finite number of points of the disk, and they
  should be \emph{extremal} -- i.e. belong to its boundary, which is a
  circle.
  Finally, if these points are chosen in an equipartition of the
  circle, and interpreted back from constraint space to state space, we
  obtain the classical approximation of a disk by a regular
  polyhedron.
\end{example}

More generally, this final example suggests the viewpoint of transfer
function synthesis via tessellations of boundaries in constraint
space.

\section{Abstract Domains of Functions}
\label{sec:abstr-dom-functions}

We now succinctly present how the ideas introduced in
Section~\ref{sec:abstr-dom-sequences} can be extended beyond the case
of \emph{sequences} $\nn\to\rri$, defined over a minimalistic operator
language based on \texttt{Push}/\texttt{Pop} constructs, to more
general \emph{functions} $\dd\to\rri$ on more general
domains $\dd$.
This extension is presented as a %natural
generalisation of the
sequence case, highlighting the additional features required.
%% We present these ideas as an extension of the case of sequences,
%% explaining which new features are needed. %to be added.

%% In this section, we go beyond the simple cases above, to work with
%% ``real-life function'' -- multivariate, more complex
%% control-flow/recursive cases, disjunctive/multiphase, and perhaps
%% non-discrete domains. This is presented as extensions on the case of
%% sequences, by adding new features.

%% ($\sim$~slides 22-25)

\subsection{Multivariate Functions}

Handling multivariate functions of the form $\dd\to\rri$ with
$\dd=\nn^d$, is relatively straightforward, at least when starting
from a finite-dimensional vector space $\boundaries \cong \bigoplus_k
\rr\cdot f_k$ (extended with top elements), as in the case of the
examples in Section~\ref{sec:abstr-dom-sequences}.

To extend such \boundariesbound domain, initially defined for
univariate
% monovariate
functions $\nn\to\rri$ to an abstract domain for multivariate
functions $\nn^d\to\rri$, we can simply perform a \textbf{(tensor)
  product} (using the coordinate-wise order on parameters), i.e.
\begin{equation*}
  %\boxed{
    \boundaries^{\otimes d}
    := \bigotimes_{i<d} \bigoplus_k \rr\cdot f_k^{(i)}
    = \bigoplus_{k_1,...,k_d} \rr \cdot f_{k_1}^{(1)} \otimes ... \otimes f_{k_d}^{(d)}.
  %}
  %\vspace{-.5em}
\end{equation*}
The main significant change %difference
is that more memory is required to
store abstract values: if the univariate
% monovariate
case was dealt with an $r$-dimensional parameterisation, we now need
$r\times d$ % real
numbers %instead
to describe an element of
$\boundaries^{\otimes d}$.

\begin{example}%\scriptsize
Univariate
% Monovariate
  polynomials $\rr[n]$ in the monomial basis $P(n) = \sum_k a_k n^k$
  simply give rise to multivariate polynomials $\rr[x,y] =
  (\rr[n])^{\otimes 2}$ in the monomial basis $P(x,y) = \sum
  a_{k_x,k_y} x^{k_x}y^{k_y}$.
\end{example}

We can easily define new recursive constructs in each
direction $i$, written $\sem{\texttt{Pop}_i}$ and
$\sem{\texttt{Push}_i~c}$, e.g.\phantom{.}
$\texttt{Push}_y~c:\;\;f(x,y,z)\rightsquigarrow\ite\big(y=0,\,c,\,f(x,y-1,z)\big),$
%% \begingroup
%%   \abovedisplayskip=2pt
%%   \belowdisplayskip=4pt
%%   \begin{equation*}
%%     \texttt{Push}_y~c:\;\;f(x,y,z)\rightsquigarrow\ite\big(y=0,\,c,\,f(x,y-1,z)\big),
%%   \end{equation*}
%% \endgroup
which may be composed to construct more complex recursive calls.
Transfer functions (e.g. $\sema{\Diamond}$, $\sema{\texttt{Cst~c}}$,
$\sema{\texttt{Pop}_i}$, etc.) generalise naturally.
%are easily extended.
%\plgnote{generalise naturally?}
% LR: OK
\begin{example}%\scriptsize
  For polynomials, $1\cdot (x+1)^2yz = (x^2+2x+1)yz = 1\cdot x^2yz +2\cdot xyz + 1\cdot yz$.
  Similarly, in the binomial basis,
  $\binom{x+1}{k_x}\binom{y}{k_y}\binom{z}{k_z} = \binom{x}{k_x}\binom{y}{k_y}\binom{z}{k_z} + \binom{x}{k_x-1}\binom{y}{k_y}\binom{z}{k_z},$
  so more generally, %\vspace{-1em}
  $\sema{\texttt{Pop}_x}\big(\sum a_{(k_x,k_y,k_z)}\binom{x}{k_x}\binom{y}{k_y}\binom{z}{k_z}\big)=$
  \begingroup
    \abovedisplayskip=2pt
    \belowdisplayskip=0pt
    $$\uparrow\!\!\Big\{\sum \big(a_{(\mathbf{k_x},k_y,k_z)}+a_{(\mathbf{k_x+1},k_y,k_z)}\big)\binom{x}{k_x}\binom{y}{k_y}\binom{z}{k_z}\Big\}.$$
  %% \begin{equation*}
  %% \begin{aligned}
  %%   &\sema{\texttt{Pop}_x}\Big(\sum a_{(k_x,k_y,k_z)}\binom{x}{k_x}\binom{y}{k_y}\binom{z}{k_z}\Big)\\
  %%   &= \,\uparrow\!\!\Big\{\sum \big(a_{(\mathbf{k_x},k_y,k_z)}+a_{(\mathbf{k_x+1},k_y,k_z)}\big)\binom{x}{k_x}\binom{y}{k_y}\binom{z}{k_z}\Big\}.%\vspace{-.25em}
  %% \end{aligned}
  %% \end{equation*}
  \endgroup
\end{example}

\subsection{Piecewise Functions, for Complex Control Flows}
\label{subsec:piecewise-functions}

Extending our domains to support control flows beyond the ones
associated with \emph{sequences} is more interesting, and requires
adding new features.

Note that for now, we can only represent recursive calls of the form
$f(\vec{n}-\vec{k})$, where $\vec{k}$ is a constant vector, in a way that
is identical in the whole domain $\dd$, except for base
cases which occur only when at least one input variable is $0$.
In the cost analysis of symbolic declarative programs, it is reasonable to
assume that base cases are defined for data structures of small size.
It is also reasonable to assume that recursive cases consists in taking
data, deconstructing it, and making recursive calls on data
structures of smaller size. Such assumptions are made, implicitly or
explicitly, in many cost analysers, such as~\cite{DBLP:journals/toplas/0002AH12-shortest,caslog-shortest,vh-03-shortest}.

However, this is not a sound assumption for more general static
analysis: variables inside loops, for example, may not evolve in a
global monotone way. This assumption is also a limitation in the
\emph{cost} analysis of typical \emph{imperative} programs, as
thoroughly
% extensively
discussed in~\cite{montoya-phdthesis-short}.
%the PhD thesis~\cite{montoya-phdthesis-short}.

\begin{lrbox}{\codebox}
\begin{minipage}{4cm}\small
\begin{verbatim}
while (0<i&&i<n){
  if(b)
    i++;
  else
    i--;
}
\end{verbatim}
\end{minipage}
\end{lrbox}

\begin{example}
  \label{ex:aflores-loop-1}
% For these reasons,
  Let us consider a simple multiphase loop
  presented in~\cite{montoya-phdthesis-short}, and the corresponding cost
  equation $\Phi$ below, which gives the number of loop iterations as
  a function of the initial values of the variables.
  Crucially, observe that the domain $\dd=\rr^3$ must be partitioned
  into several subdomains $\dd_i$ in order to express the equation
  faithfully.

  \fbox{\usebox{\codebox}}

  \begingroup
    \abovedisplayskip=2pt
    \belowdisplayskip=0pt
  \begin{equation*}\small
    \Phi(f)(i,n,b) = {\footnotesize\begin{cases}
          0 & \text{if $i=0$}\quad\hfill(\dd_1)\\
          0 & \text{if $i\geq n$}\quad\hfill(\dd_2)\\
          1+f(i+1,n,b) & \text{if $0<i<n \wedge b\geq 1$}\quad\hfill(\dd_3)\\
          1+f(i-1,n,b) & \text{if $0<i<n \wedge b=0$}\quad\hfill(\dd_4)
        \end{cases}}
  \end{equation*}
  \endgroup
\end{example}

To handle such cases, we propose replacing our domains of functions,
which give a single \emph{global} bound expressed as a conjunction of
$b\in\boundaries$, by a domain of \emph{piecewise functions}, where
$\dd$ is partitioned into several subdomains, and a conjunction of
bounds is associated with each
% we have one conjunction of bounds per
subdomain.
Note that the subdomains explicitly (syntactically) present in the
equation provide valuable information that can be used in the
analysis.

\begin{definition}[Domain of $\boundaries$-bounds by $\casesdomain$-cases]
  Let $\boundaries\subseteq(\dd\to\rri)$ as for usual \boundariesbound domains,
  and let $\casesdomain$ be a numerical abstract domain (e.g.
  polyhedra, intervals) with concretisation
  $\gamma_\casesdomain:\casesdomain\to\pp(\dd)$
  in our set of inputs.
  We define the abstract domain $\mathbb{P}(\boundaries,\casesdomain)$
  via
  $\gamma_{\mathbb{P}\boundaries}:\mathbb{P}(\boundaries,\casesdomain)\to(\dd\!\to\!\rri),$
  where elements $F^\sharp\in\mathbb{P}(\boundaries,\casesdomain)$
  are of the form $F^\sharp=\big\{(F_1,c_1),...,(F_k,c_k),F_{\text{else}}\big\}$
  with $F_i\in\pp_\uparrow(\boundaries)$ and $c_i\in\casesdomain$,
  and $\gamma_{\mathbb{P}\boundaries}(F^\sharp)(\vn)=$\\
  $\begin{cases}
        \gamma_\boundaries(F_i)(\vn)
        & \text{ if } \exists i,\vn\in\gamma_\casesdomain(c_i)
          \text{ and } \forall j\neq i,\, \vn\not\in\gamma_\casesdomain(c_j),\\
        \gamma_\boundaries(F_{\text{else}})(\vn)
        & \text{ otherwise.}
   \end{cases}$
  %% $$\gamma_{\mathbb{P}\boundaries}:\mathbb{P}(\boundaries,\casesdomain)\to(\dd\to\rri),$$
  %% %
  %% where elements $F^\sharp\in\mathbb{P}(\boundaries,\casesdomain)$
  %% are of the form $F^\sharp=\big\{(F_1,c_1),...,(F_k,c_k),F_{default}\big\}$
  %% with the $F_i\in\pp_\uparrow(\boundaries)$ and $c_i\in\casesdomain$,
  %% %
  %% and
  %% $\gamma_{\mathbb{P}\boundaries}(F^\sharp)(\vn)=\gamma_\boundaries{F_i}$
  %% whenever $\exists i,\vn\in\gamma_\casesdomain(c_i)$ and
  %% $\forall j\neq i,\, \vn\not\in\gamma_\casesdomain(c_j)$,
  %% whereas
  %% $\gamma_{\mathbb{P}\boundaries}(F^\sharp)(\vn)=\gamma_\boundaries(F_{default})$
  %% otherwise.
  %% \begin{align*}
  %%   \gamma_{\mathbb{P}\boundaries}:
  %%     \mathbb{P}(\boundaries,\casesdomain)
  %%     \to
  %%     (\dd\to\rri)\\
  %%   %
  %%     \big\{(F_1,c_1),...,(F_k,c_k),F_{default}\big\}
  %%     \mapsto \vn\mapsto
  %%     \begin{cases}
  %%       \gamma_\boundaries{F_i}(\vn)
  %%       & \text{ if } \exists i,\vn\in\gamma_\casesdomain(c_i)
  %%         \text{ and } \forall j\neq i,\, \vn\not\in\gamma_\casesdomain(c_j),\\
  %%       \gamma_\boundaries{F_{default}}(\vn)
  %%       & \text{ otherwise.}
  %%     \end{cases}
  %% \end{align*}
\end{definition}

In other words, if the $c_i$ define a partition of $\dd$, $F^\sharp$
bounds concrete functions by a piecewise conjunction of bounds,
where each piece is defined by $\gamma_\casesdomain(c_i)$, and the
corresponding set of bounds is $F_i\in\pp_{\uparrow}(\boundaries)$:
%$f\pleq \gamma_{\mathbb{P}\boundaries}(F^\sharp)$ whenever
\begingroup
  \abovedisplayskip=2pt
  \belowdisplayskip=0pt
\[
  f\pleq \gamma_{\mathbb{P}\boundaries}(F^\sharp)
  %\iff
  \Leftrightarrow
  \bigwedge_i\Big(
    \vn\in\gamma_\casesdomain(c_i)
    %\implies
    \Rightarrow\!\!
    \bigwedge_{b_{i,j}\in F_i} f(\vn) \leq b_{i,j}(\vn)
  \Big).
\]
\endgroup

\begin{figure}[H]
  \begingroup
  \abovedisplayskip=0pt
  \belowdisplayskip=0pt
  \abovedisplayshortskip=0pt
  \belowdisplayshortskip=0pt
  \begin{center}
    %\vspace{-1.5em}
    \vspace{-.65cm}
    %\includegraphics[height=.325\linewidth,
    %width=.5\linewidth]{\figpath/voronoi-tweaked}
    \includegraphics[height=2cm,width=3.25cm]{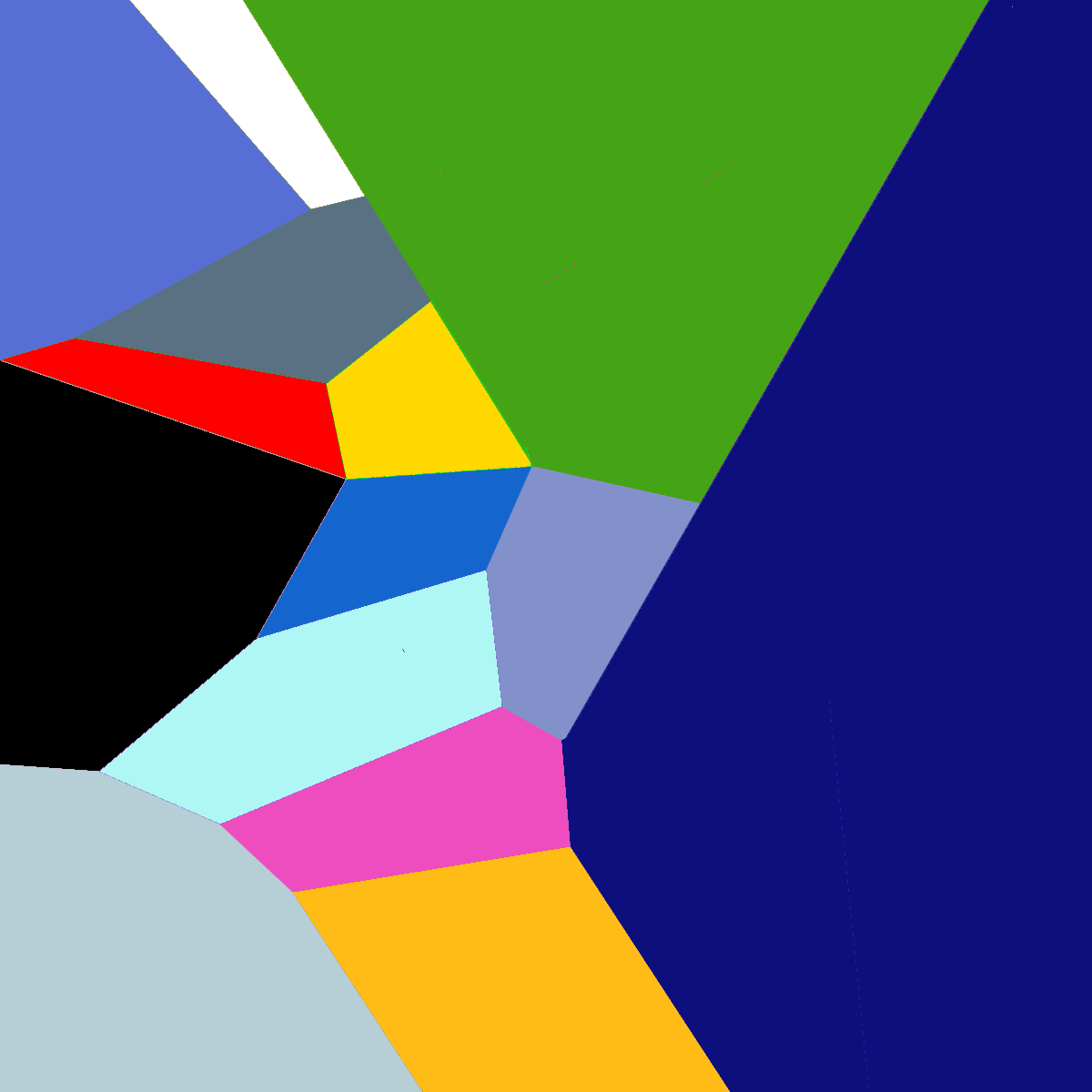}
    % (LR): do this positioning better at the very end.
    \vspace{-.5cm}
  \end{center}
  \caption{A Voronoi diagram as an example of partition by polyhedra.
    Our abstract domain
    $\mathbb{P}(\boundaries,\casesdomain)$ associates
%   holds
    a conjunction of bounds $b\in\boundaries$ with each coloured
    piece.}
  \label{fig:voronoi}
  \endgroup
\end{figure}
\vspace{-.5cm}

\begin{remark}
  Functors for binary decision tree (BDT) abstract domains, as
  introduced in~\cite{urban2014-bdt-short,ChenCousotBDT15-short}, can be used to
  implement these ideas.
\end{remark}

%% \begin{remark}[Partitions and disjunctions]
%% Note that, in a sense, \boundariesbounds already implement a restricted
%% form of disjunctive reasoning, since a conjunction of bounds
%% implicitly defines a partition of $\dd$, given by the subdomains
%% where one bound is strictly below all others. This is the type of
%% ideas used in $\min/\max$ (and \emph{tropical}) approaches to partial
%% disjunctive reasoning~\cite{...,...}.
%% %
%% In a sense, what we do here is to mix \emph{explicit} disjunctive
%% reasoning, using explicitly computed subdomains which allow
%% discontinuities, with \emph{implicit} disjunctive reasoning based on
%% $\min/\max$.
%% \end{remark}

The \emph{key} additional
% extra
primitives needed to implement such piecewise \boundariesbound domains
are
\begin{enumerate}[leftmargin=*,topsep=2pt]
  \item The ability to create new cases to handle recursive calls.
    Indeed, calls made in one subdomain may query the value of $f$ in
    other subdomains -- this is referred to as \emph{the interface between
    subdomains} in~\cite{order-recsolv-preprint-extended},
  \item The ability to \textbf{merge cases}: i.e. ``compute best
    $\boundaries$-approximation of $\boundaries$-cases using
    $\casesdomain$-parts''.
\end{enumerate}
In this paper, we do not aim
% claim
to provide a general solution to these challenges
% problem
for complex $\boundaries$ and $\casesdomain$.  Instead, we simply
state these requirements to help guide future work.

%\begin{figure}
 %[...]$
 % \caption{Example of abstract iteration with $\mathbb{P}(\boundaries,\casesdomain)$
 %   on Ex.~\ref{ex:aflores-loop-1} with affine bounds and affine cases.}
 % \label{fig:absiter-piecewise}
%\end{figure}

\begin{example}
  To illustrate how such constructs may be used, we demonstrate an
% perform below
  abstract iteration using $\boundaries$ affine bounds and
  $\casesdomain$ polyhedra on the case of Ex.~\ref{ex:aflores-loop-1}.
  %in Fig.~\ref{fig:absiter-piecewise}.
  %
  We restrict attention to $\rri_+$ for simplicity, and initialise at
  $\bot^\sharp$ with the constant $0$ function, one per subdomain
  syntactically given by the equation.
  % specified by the equation.
  %
  The first step is straighforward in both concrete and abstract
  semantics,
  % In both the concrete and the abstract semantics, the first step is
  % straightforward,
% easy to represent,
  since $f$ resolves to $0$ everywhere, yielding
  $\Phi^\sharp(\bot^\sharp)$.

  However, the second
% next
  iteration becomes more interesting. For example, consider $(i,n,b)\in\dd_3$: the
  recursive call $f(i+1,n,b)$ may either remain in $\dd_3$, or enter
% lead to
  $\dd_2$, which resolves to different values. To capture this precisely,
% account precisely for this fact,
  we should \textbf{split} $(\dd_3)$ into two new
  subdomains: the part $(\dd_{3,2})$ that calls $(\dd_2)$ (the
  ``interface with $\dd_2$''), and the part $(\dd_{3,3})$ that remains
  in $(\dd_3)$. A similar split applies to $(\dd_4)$.

%%% Must be placed at the end, when all paragraphs are in place
\begin{minipage}{\linewidth}
  \vspace{-.5\baselineskip}
  {\small
  \begin{flalign*}
    %% f0#
    %\only<3-5>{
        \bot^\sharp &= \begin{cases}
          \{0\} & \quad(\dd_1)\\
          \{0\} & \quad(\dd_2)\\
          \{0\} & \quad(\dd_3)\\
          \{0\} & \quad(\dd_4)
        \end{cases}\\
    %}
        %
    %% f1#
    %\only<4-6>{
        (\Phi^\sharp)(\bot^\sharp) &= \begin{cases}
          \{0\} & \quad(\dd_1)\\
          \{0\} & \quad(\dd_2)\\
          \{1\} & \quad(\dd_3)\\
          \{1\} & \quad(\dd_4)
        \end{cases} \\
    %%     %
    %% f2#
    %\only<5-6>{
    %% (\Phi^\sharp)^{(2)}(\bot^\sharp) &=~?\\
    %}
        %
    %\only<6-8>{
        (\Phi\circ\Phi^\sharp)(\bot^\sharp) &= \begin{cases}
          0 & \hspace{-5cm}(\dd_1)\\
          0 & \hspace{-5cm}(\dd_2)\\
          \begin{cases}
            2 & \texttt{if $(i,n,b)\in\dd_3\wedge(i+1,n,b)\in\dd_3$}\\
            1 & \texttt{if $(i,n,b)\in\dd_3\wedge(i+1,n,b)\in\dd_2$}
          \end{cases}\\
          \begin{cases}
            1 & \texttt{if $(i,n,b)\in\dd_4\wedge(i-1,n,b)\in\dd_1$}\\
            2 & \texttt{if $(i,n,b)\in\dd_4\wedge(i-1,n,b)\in\dd_4$}
          \end{cases}
        \end{cases}\\
%%   \end{flalign*}}
%% \end{minipage}
%%
%% \begin{minipage}{\linewidth}
%%   %\vspace{-1.5\baselineskip}
%%   {\small
%%   \begin{flalign*}
    %}
        %
    %\only<7-8>{
        (\Phi^\sharp)^{(2)}(\bot^\sharp) &= \begin{cases}
          \{0\} & \quad(\dd_1)\\
          \{0\} & \quad(\dd_2)\\
          \{n-i\} & \quad(\dd_3)\\
          \{i\} & \quad(\dd_4)
        \end{cases}
        \qquad\longleftrightarrow \fsol(i,n,b)~!
%\\
    %}
    %\only<8>{
        %&\leftrightarrow \fsol(i,n,b)~!\\
    %}
    %% alignment phantom
    %% \phantom{(\Phi\circ\Phi^\sharp)(\bot^\sharp)}
    %% & \phantom{= \begin{cases}
    %%      0 & \mqquad\mqquad(\dd_1)\\
    %%      0 & \mqquad\mqquad(\dd_2)\\
    %%      \begin{cases}
    %%        2 & \texttt{if $(i,n,b)\in\dd_3\wedge(i+1,n,b)\in\dd_3$}\\
    %%        1 & \texttt{if $(i,n,b)\in\dd_3\wedge(i+1,n,b)\in\dd_2$}
    %%      \end{cases}\\
    %%      \begin{cases}
    %%        1 & \texttt{if $(i,n,b)\in\dd_4\wedge(i-1,n,b)\in\dd_1$}\\
    %%        2 & \texttt{if $(i,n,b)\in\dd_4\wedge(i-1,n,b)\in\dd_4$}
    %%      \end{cases}
    %%   \end{cases}}
  \end{flalign*}
  }
  \vspace{.5\baselineskip}
  \end{minipage}

  Despite our goal of
% desire for
  precision, we must limit subdomain proliferation
% the growth of the number of subdomains
  to prevent combinatorial explosion. To do so, we \textbf{merge}
  again the subdomains. For this, we must find \emph{the best
  approximation} of the function values over
% value of the function computed on
  $\dd_{3,2}\cup\dd_{3,3}$ by a function \emph{representable over
  $\dd_3$ as a conjunction of affine bounds}.

  In our case, this \texttt{merge} operation can be implemented using
  the \texttt{join} of polyhedra, and directly leads to the function
  shown above, which in this case is the exact
  solution to the equation of Ex.~\ref{ex:aflores-loop-1}.
\end{example}

\subsection{Towards Continuous Systems (Future Work)}
\label{subsec:continous-systems}

As a final direction for extension, we consider
% highlight the case of
\emph{continuous systems}, %an exciting direction for future work,
where this work may serve as a stepping stone.
%
%% As a final possibility of extension, we mention an exciting direction
%% for future work, for which this paper may be seen as a stepping stone:
%% \emph{the case of continuous systems}.
%
Here, functions $f:\rr_+^{(k)}\to\rri$ are defined over domains $\dd$
of real numbers rather than integers, so that $f$ may arise as the
solution of \textbf{differential equations} (ordinary or partial).
%
%% By this, we mean $f:\rr_+^{(k)}\to\rri$, using $\dd$ a vector of reals
%% instead of integers, so that $f$ may be defined as the solution of
%% \textbf{differential equations} (ordinary or partial).
%
%
As noted in the comparison with~\cite{order-recsolv-sas24-nourl}, this
direction is especially
% particularly
promising for applying
% applications of
\boundariesbounds, since exact input-output sampling is unavailable
for differential equations.
%
%
%% In other words, we wish to consider the extension to
%% $\boundaries\in\pp(\rr_+\to\rri)$, which we believe might not be very
%% different to the discrete case, since families considered above (e.g.
%% exponential polynomials) can be adapted to differential settings.

As a first step in this direction, we present an example where
Theorem~\ref{th:proof-principle} yields \emph{infinite-horizon} bounds
on a differential equation’s solution.
%% give an example of differential equation where
%% Theorem~\ref{th:proof-principle} yields \emph{infinite-horizon} bounds
%% on its solution.

In our example, the (interval) candidate bound is asserted \emph{a
priori} and checked. An extension of \boundariesbound domains to
$\boundaries\in\pp(\rr_+\to\rri)$ could enable automated
\emph{discovery} of such bounds.

%we believe that this direction is particularly compelling for future
%work on applications of \boundariesbounds, especially since
%input-output sampling cannot be performed for differential equations.

%% we believe that we believe that continuous systems are a particularly interesting
%% direction of future work on \boundariesbounds. Indeed, while dynamic
%% invariant generation techniques can be applied in the discrete cases
%% where input-output samples can be computed, we cannot compute exactly
%% such samples for differential equations.

\begin{example}
  Consider the following non-linear differential equation, where
  $\dot{v}$ is the first time derivative of $v$.
  %, and $\alpha,\beta,\gamma>0$.
  \begingroup
    \abovedisplayskip=2pt
    \belowdisplayskip=2pt
  %\begin{equation*}
    %\boxed{
    $$v : \rr_+ \to \rr,\quad v(0) = v_0,\quad \dot{v} =  -\alpha\cdot v^2 - \beta\cdot v + \gamma,$$
    %}
    %\vspace{-.75em}
  %\end{equation*}
  \endgroup
  To mimic the discrete case,
  we can turn this equation into an operator
  $\Phi\in\End_{\dot{\sqsubseteq}_\ii}({^*}\rr_+\to\ii({^*}\rr))$,
  inspired by non-standard analysis (c.f.~\cite{Robinson1966-nsa,Keisler2012-em,Hasuo12-nsa-CAV,KidoHasuo15-absint-nsa-short}) on the
  hyperreals $^*\rr$, which are an extension of the real numbers with
  an infinitesimal $\epsilon$.
  Given the technical overhead of working with hyperreals, we may
  instead choose to view this as a family of operators parameterised
  by (finite) real numbers
  $\Phi_\epsilon\in\End_{\dot{\sqsubseteq}_\ii}(\rr_+\to\ii(\rr))$,
  corresponding to Euler schemata with $\epsilon>0$, and make proofs
  for $\epsilon\to 0$.
  We can define $\Phi$ by $\Phi(f)(t)=$\\
  {$\begin{cases}
      [v_0, v_0] %&
      \qquad
      \text{if }t<\epsilon\\

      f(t\!-\!\epsilon)-\alpha\!\cdot\!\epsilon\!\cdot\! f(t\!-\!\epsilon) \cdot_\ii f(t\!-\!\epsilon) - \beta\!\cdot\!\epsilon\!\cdot\! f(t\!-\!\epsilon) + \epsilon\!\cdot\!\gamma
      %&
      \\\phantom{[v_0, v_0]}\qquad
      \text{if }t\geq\epsilon,
    \end{cases}$}\\[2pt]
  where $\cdot_\ii$ is the monotone multiplication of intervals.
  Then, the solution of the differential equation is $\lfp\Phi$.
  We can also rewrite this operator in a different form, that is
  less natural, but diminishes the loss of precision caused by
  interval arithmetic. Indeed, with some factorisations, we can write
  the operator $\widetilde{\Phi}$, defined %given
  by $\widetilde{\Phi}(f)(t)=[v_0,v_0]$
  for $t<\epsilon$ and %instead
  $\widetilde{\Phi}(f)(t)=$\\
  $f(t\!-\!\epsilon) \cdot_\ii
    \big((1-\epsilon\!\cdot\!\beta)\cdot_\ii[1,1] + (-\epsilon\!\cdot\!\alpha)\cdot_\ii f(t\!-\!\epsilon)\big)
    + \epsilon\cdot\gamma$
  for $t\geq\epsilon$.
  The solution is unchanged: $\lfp\tilde{\Phi}=\lfp\Phi.$

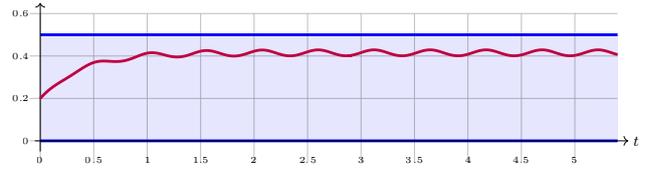
\begin{figure}[t]
  \resizebox{\linewidth}{!}{%
      \begin{tikzpicture}[x=2cm,y=4cm,domain=0:5.5]
        \begin{scope}[transparency group]
        \begin{scope}[blend mode=multiply]
        \draw[very thin,color=gray!50,ystep=.2,xstep=.5] (-.1,-.1) grid (5.4,0.6);
        \draw[->] (-.05,0) -- (5.5,0) node[right] {$t$};%{\scriptsize $n$};
        \draw[->] (0,-.05) -- (0,.65) node[above] {};%{\scriptsize $f(n)$};
        \foreach \x in {0,0.5,...,5} {
              \node [anchor=north] at (\x,-.05) {\tiny $\x$};
          }
          \foreach \y in {0,0.2,0.4,0.6} {
              \node [anchor=east] at (-0.05,\y) {\tiny $\y$};
          }
        \begin{scope}
          \clip (-.1,-1.1) rectangle (5.4,0.6);

          \fill[fill=blue!10] (0,0)--(5.4,0)--(5.4,.5)--(0,.5);
          \draw[-, line width=1.5pt, blue] (0, 0) -- (5.4, 0);
          \draw[-, line width=1.5pt, blue] (0, .5) -- (5.4, .5);

        \end{scope}
        \end{scope}
        \end{scope}
        \begin{scope}
          \clip (-.1,-.1) rectangle (5.4,0.6);
          \draw[line width=1.5pt, purple] plot [smooth] file {ode-ex.table};
        \end{scope}
        %\node[anchor=north] at (2.5,-.15) {Example with $\alpha(t)\in[0,2]$, $\beta=2$, $\gamma=1$, $v_0=0.2$.};
      \end{tikzpicture}%
     }

  \vspace{-3cm}

  \caption{Example with $\alpha(t)\!\in\![0,2]$, $\beta\!=\!2$, $\gamma\!=\!1$, $v_0=\frac{1}{5}$.}
  \label{fig:ode-ex-plot}
\end{figure}

  Now, consider a candidate interval bound of the form
  $\hat{f}:t\mapsto [0,M]$, with $M>0$ a fixed real.
  We can then check that $\widetilde{\Phi}\hat{f}\,\dot{\sqsubseteq}_\ii\,\hat{f}$
  whenever $v_0\in[0,M]$, $\alpha,\beta,\gamma\geq 0$ and $M \geq \gamma/\beta$,
  which gives us an infinite-horizon bound on the solution.

  The differential equation does not admit a closed-form solution if
  we generalise it with time-dependent $\alpha(t)$, $\beta(t)$, $\gamma(t)$,
  but the reasoning above still applies, and we can obtain the bounds
  illustrated in Fig.~\ref{fig:ode-ex-plot}.
  With further computations, it is also possible to discover and prove
  non-constant bounds on the solution.
\end{example}

\begin{remark}
  Note that the operator used in our proof is \emph{globally
  monotone}, directly allowing the use of order-theoretical fixed
  point theorems.
  This contrasts with the Picard-Lindelh{\"o}f operator
  %(also called Cauchy/Lipschitz operator)
  often used in static analysis of hybrid
  systems, which uses instead a \emph{topological} fixed point
  theorem, and requires a restriction of the time horizon to get back
  to order theory.
\end{remark}

%% \begin{example}
%%   For $\hat{f}:t\mapsto [0,M]$, $\Phi\hat{f}\,\dot{\sqsubseteq}_\ii\,\hat{f}$
%%   whenever $x_0\in[0,M]$, $\alpha,\beta,\gamma\geq 0$ and $M \geq \gamma/\beta$.
%%   %
%%   Note that to verify this inequality, we must compute this in the
%%   following way to avoid losing precision in interval multiplication:
%%   [...].

%%   \bularrow~This can also be applied with time-dependent $\alpha(t), \beta(t), \gamma(t)$, which
%%   is typically not solvable.

%%   See Fig.~\ref{fig:ode-ex-plot}.
%% \end{example}

\section{Related Work}
\label{sec:related-work}
%%%
\textbf{Higher-order abstract interpretation}~\cite{cousot1994higher-short},
can be defined as the study of abstract domains whose
% abstract -> redundant
values are interpreted as functions and operators.
Its relevance is clear for the abstract interpretation of higher-order
functional languages, a topic first studied in the early 90s (e.g.
\cite{BurnSCP86,BurnFP91-short,HuntPhd91-short,cousot1994higher-short}),
and which has recently received renewed attention for practical
analysis of languages such as
\texttt{OCaml}~\cite{montagu-icfp2020-short,montagu-salto-prelim-ml23-short,Bautista-FMSD2024-short,valnet-ECOOP25-short},
where more refined value properties are explored.

Abstractions of higher-order objects are valuable
% useful
in other settings,
including static analysis of dynamical systems, such as simple loops
(e.g. for summarisation, acceleration, and in numerical filters) and
differential equations in hybrid systems. In dynamical systems,
non-linear properties quickly arise by iteration of simple (linear)
dynamics.
%
% Unfortunately still no publi to cite from Charles de Haro
Sequence domains are also relevant to static analysis of data-flow
languages in synchronous programming.

For the analysis of filters written in \texttt{C},
\cite{Feret05-short} introduced an abstract domain of
\emph{arithmetico-geometric sequences}, which gave us the initial
inspiration for \boundariesbounds.
This domain abstracts sets of functions, but is intended to be used as
a classical ``non-relational'' value domain: variables in a loop are
(independently) bounded by a (non-linear) function of a local loop
counter. As in
% Like for
\boundariesbounds, the finite representation of
non-linear functions is chosen to admit precise abstraction of simple
arithmetic and recursive structures, e.g. the operation \texttt{next},
which is similar to our \texttt{Push}.

For differential equations, the \emph{flowpipes}
of~\cite{GoubaultHSCC17-short,GoubaultCAV18} provide inner and outer
interval-valued approximations of functions. The interval version of
\boundariesbounds is similar to outer flowpipes in both spirit and
concretisations, though their
% although the
inner machinery and order theory differs.

%%%
\textbf{Non-linear abstract domains.} Few domains go beyond the
expressiveness of polyhedra. Notable exceptions include those
% Nevertheless, examples include domains
for polynomial
equalities~\cite{RodriguezKapurSas04-short,MullerOlm04-short},
polynomial \emph{inequalities} by reduction to
polyhedra~\cite{BRZ-polyineq-SAS05-short}, and the \emph{wedge}
domain~\cite{kincaid2018-shortest}.
Outside pure abstract interpretation, semialgebraic reasoning
techniques (e.g. cylindrical algebraic
decomposition~\cite{Collins75-CAD-short} or more tractable
optimisation-based sums-of-squares methods~\cite{RouxVS18-short})
have been used to reason on polynomial inequality invariants,
especially for continuous and hybrid systems.
Efficient domains with lower expressivity than general (semi)algebraic
sets have also been designed, e.g. the ellipsoid
domain~\cite{Feret04-short} and the conic extrapolation of higher
ellipsoids of~\cite{MV-conic-ellipsoids-CAV15-short}.
Finally, incomparable domains have been considered, such as tropical
polyhedra~\cite{allamigeonTropicalSAS08-short}, or the
arithmetico-geometric sequences of~\cite{Feret05-short}, which are a
particular case of exponential polynomials.

%%%
\textbf{Transfer function synthesis} is a growing family of techniques
enabled by the emergence of powerful decision procedures, quantifier
elimination techiques, as well as approaches for optimisation, search
and learning, among others. These techniques help design abstract
domains and improve their precision.
%% -- made possible by the emergence of powerful decision
%% procedures, quantifier elimination techiques, as well as approaches
%% for optimisation, search and learning, among others -- that facilitate
%% abstract domain design and can improve their precision.
%
Important early work includes~\cite{GrafSaidi97-short} for
predicate-abstraction domains and~\cite{Reps2004-short} for domains with no
infinite ascending chains.
For numerical domains, the power of quantifier elimination and
optimisation techniques was demonstrated in a line of
work by Monniaux~\cite{Monniaux_POPL09-short,Monniaux_CAV09-shortest,Monniaux_CAV10-short,Monniaux_LMCS10-short}.
Strategies to avoid expensive quantifier elimination were discussed
in~\cite{BrauerKingLMCS12-short}, which also contains a recommended
introduction to the literature on the topic.
%
% The phd thesis chapter should get published at SAS25, we should be
% able to update during this article review period.
More recent
works~\cite{AmurthOOPSLA22-short,PasadoOOPSLA23-short,PruningOOPSLA24-short,USSTAD-preprint-25,ZixinHuangPhD25-chap5}
have explored many additional
% further
strategies, e.g. based on search or
numerical optimisation, and inspired by approaches from program
synthesis or machine learning.

%%%
\textbf{Exact recurrence solvers.}
Centuries of work have created a large body of
knowledge on recurrence solving, %whose full account is a matter of mathematical history.
with classical results including closed forms of \emph{C-recursive
sequences}~\cite{Petkovsek2013sketch-short},
%i.e. of solutions of linear recurrence equations with constant
%coefficients,
or more recently,
% for
hypergeometric solutions of \emph{P-recursive
sequences}~\cite{Petkovsek92}.
Several algorithms have been implemented
in CAS, based on mathematical frameworks such as
\emph{Difference Algebra}~\cite{Karr81-short,Levin08}, \emph{Finite
Calculus}~\cite{ConcreteMathematics}, or \emph{Generating
Functions}~\cite{Flajolet09-shortest}, mixed with template-based methods.
However, as explained in the introduction, these techniques have a
different focus than ours.
Beyond classical CAS, specialised tools such as
\texttt{PURRS}~\cite{BagnaraPZZ05} provide
% support for
approximate resolution of some non-linear, multivariate, and
divide-and-conquer equations.

%% %%% Simplified end of related work section
\textbf{Applications of recurrence solving} to static analysis include
\emph{cost analysis}, \emph{loop summarisation/acceleration}, and more
generally, \emph{invariant synthesis and verification}. The reader is
invited to consult our references and previous contributions for more
information on these topics.

\section{Conclusions and Future Work}
\label{sec:conclusions}

%% \lrnote{A tentative of much shorter conclusion, leaving some space for
%%  limitations + future work}

Motivated by the inference of closed-form bounds for
recursively defined functions, we have presented both theoretical and
practical results on the order theory of lattices of functions, within
an equation-as-operator viewpoint.
In particular, we proposed a \emph{domain abstraction} mechanism,
which formalises the abstraction of a symbolic semantic equation by a
numerical functional equation, also also paves the way towards
applications to dimensionality reduction.
Most importantly, we introduced \emph{\boundariesbound domains} --
constraint-based abstract domains using predefined boundary functions
(e.g. polynomials, exponentials) -- enabling the inference of
\emph{non-linear numerical invariants}. A key convexity property
simplifies transfer function design and, in some cases, enables
automation.
We demonstrated
% exposed
their application to
% for
concrete domains of \emph{sequences}, and
% finally
sketched how they may be used for more general \emph{functions}.
%
%%%

This work complements our previous contributions, where we showed
% have demonstrated
that our results have multiple applications in program analysis, such
as static cost analysis, loop acceleration, and the analysis of
higher-order declarative languages.
%
%%%
%
%\phantom{.}
Still, this paper is a stepping stone towards broader applications and
generalisation. We have provided theoretical constructions and
illustrative examples, laying the groundwork for future
implementations. Experimental work will be essential to refine these
ideas -- especially regarding 
% with regard to
widening operators and efficient representations of abstract
values. In particular, it would be valuable to develop tools based on
\boundariesbound domains for cost or value analysis of numerical
higher-order programs, and compare them with 
% against
existing approaches.

%% Nevertheless, this paper is a stepping stone towards further research,
%% applications and generalisations.
%% %
%% We have provided theoretical results, constructions, and pen-and-paper
%% examples, which lay the groundwork for future implementations and
%% experimentations, which will undoubtedly allow us to improve and
%% fine-tune the ideas presented, in particular with respect to widenings
%% and efficient representation of abstract values.
%% %
%% In particular, it would be interesting to construct tools based
%% of \boundariesbound domains, for static cost analysis or value
%% analysis in numerical higher-order programs, and compare them
%% experimentally with established approached.

%%%%
%\phantom{.}

%% Moreover, this paper raises new open problems.
Our investigations also raise several open questions.
%
%% In particular, to go beyond minimalistic operator languages with
%% hardcoded recursive structures,
%% Section~\ref{subsec:piecewise-functions} suggested to devise practical
%% computational technique to compute ``compute best
%% $\boundaries$-approximation of $\boundaries$-cases using
%% $\casesdomain$-parts.''
%
To move beyond minimal operator languages with hardcoded recursive
structure, Section~\ref{subsec:piecewise-functions} highlights a
compelling problem: designing
% the design of
practical techniques to compute ``best $\boundaries$-approximations of
$\boundaries$-cases by $\casesdomain$-parts''.
%
%% Additionally, the operator abstraction induced by domain
%% abstraction (Section~\ref{subsec:domain-abstr-galois}) is, a priori,
%% non-constructive. Developing automated techniques to select
%% appropriate $m$ and to compute their abstractions -- especially for
%% broader classes of $m$ -- is a key challenge in static cost analysis.
%
Another key challenge lies in the operator abstraction induced by
domain abstraction (Section~\ref{subsec:domain-abstr-galois}), which
is a priori non-constructive. Developing automated techniques to
select appropriate mappings $m$ and compute their abstractions --
especially across broader classes of $m$ -- is of central importance
in static cost analysis.
%
%% Aditionally, we have focused on sets $\boundaries$ that can be easily
%% described as finitely-generated vector spaces. Extensions to even
%% larger classes of $\boundaries$ are a interesting topic for further
%% study.
%
So far, we have focused on sets $\boundaries$ that form finitely
generated vector spaces. Extending the framework to richer families of
bounds is another natural direction.
% avenue.
%
%% In future work, we also plan to extend these ideas to the continuous
%% setting, as outlined in Section~\ref{subsec:continous-systems}. While
%% this paper focuses on discrete systems to develop the theoretical
%% foundations of \boundariesbounds, their generality makes them a promising
%% framework for continuous domains, where the need for them
%% can be even greater than in cost analysis.
%
Finally, as outlined in Section~\ref{subsec:continous-systems}, we
plan to explore applications to continuous systems. While this paper
focuses
% concentrates
on discrete domains to develop the theoretical foundations of
\boundariesbounds, their generality makes them a promising framework
for continuous domains, where the need for such abstractions
% them -> repetition
may
% can
be even greater than in cost analysis.

\addcontentsline{toc}{section}{References}
\bibliographystyle{spmpsci}
\bibliography{\bibpath/clip,\bibpath/general}

\begin{thebibliography}{10}
\providecommand{\url}[1]{{#1}}
\providecommand{\urlprefix}{URL }
\expandafter\ifx\csname urlstyle\endcsname\relax
  \providecommand{\doi}[1]{DOI~\discretionary{}{}{}#1}\else
  \providecommand{\doi}{DOI~\discretionary{}{}{}\begingroup
  \urlstyle{rm}\Url}\fi

\bibitem{AdjeGoubaultPlevels-LMCS11-short}
Adjé, A., Gaubert, S., Goubault, E.: Coupling policy iteration with
  semi-definite relaxation to compute accurate numerical invariants in static
  analysis.
\newblock LMCS \textbf{8}(1) (2011)

\bibitem{RedDragon-short}
Aho, A.V., Sethi, R., Ullman, J.D.: {C}ompilers - {P}rinciples, {T}echniques
  and {T}ools.
\newblock Addison-Wesley (1986)

\bibitem{AichingerDicksonSurvey2020-short}
Aichinger, E., Aichinger, F.: {D}ickson’s {L}emma, {H}igman’s {T}heorem and
  {B}eyond: A survey of some basic results in order theory.
\newblock Expositiones Mathematicae \textbf{38}(4) (2020)

\bibitem{AlbertAGP11a-short}
Albert, E., Arenas, P., Genaim, S., Puebla, G.: {C}losed-{F}orm {U}pper
  {B}ounds in {S}tatic {C}ost {A}nalysis.
\newblock Journal of Automated Reasoning \textbf{46}(2), 161--203 (2011)

\bibitem{allamigeonTropicalSAS08-short}
Allamigeon, X., Gaubert, S., Goubault, E.: Inferring min and max invariants
  using max-plus polyhedra.
\newblock In: SAS (2008)

\bibitem{curien-pcf-chapter-98}
Amadio, R.M., Curien, P.L.: Domains and Lambda-Calculi, chap. The Language PCF,
  p. 124–143.
\newblock Cambridge University Press (1998)

\bibitem{BHRZ05-SCP}
Bagnara, R., Hill, P.M., Ricci, E., Zaffanella, E.: Precise widening operators
  for convex polyhedra.
\newblock Science of Computer Programming \textbf{58}(1), 28--56 (2005).
\newblock Special Issue on SAS 2003

\bibitem{DBLP:journals/sttt/BagnaraHZ07}
Bagnara, R., Hill, P.M., Zaffanella, E.: Widening operators for powerset
  domains.
\newblock STTT \textbf{8}(4-5), 449--466 (2007)

\bibitem{BagnaraPZZ05}
Bagnara, R., Pescetti, A., Zaccagnini, A., Zaffanella, E.: {PURRS}: Towards
  {C}omputer {A}lgebra {S}upport for {F}ully {A}utomatic {W}orst-{C}ase
  {C}omplexity {A}nalysis.
\newblock Tech. rep. (2005).
\newblock {\tt arXiv:cs/0512056}

\bibitem{BRZ-polyineq-SAS05-short}
Bagnara, R., Rodr{\'i}guez-Carbonell, E., Zaffanella, E.: Generation of basic
  semi-algebraic invariants using convex polyhedra.
\newblock In: SAS (2005)

\bibitem{Bautista-FMSD2024-short}
Bautista, S., Jensen, T., Montagu, B.: An input–output relational domain for
  algebraic data types and functional arrays.
\newblock FMSD  (2024)

\bibitem{BrauerKingLMCS12-short}
Brauer, J., King, A.: Transfer function synthesis without quantifier
  elimination.
\newblock LMCS  (2012)

\bibitem{BurnFP91-short}
Burn, G.L.: {The Abstract Interpretation of Higher-Order Functional Languages:
  From Properties to Abstract Domains}.
\newblock In: R.~Heldal, C.K. Holst, P.~Wadler (eds.) Functional Programming
  (1991)

\bibitem{BurnSCP86}
Burn, G.L., Hankin, C., Abramsky, S.: Strictness analysis for higher-order
  functions.
\newblock Science of Computer Programming \textbf{7}, 249--278 (1986)

\bibitem{nested-receq-undecidable}
Celaya, M., Ruskey, F.: {A}n {U}ndecidable {N}ested {R}ecurrence {R}elation.
\newblock arXiv \textbf{1203.0586} (2012).
\newblock \urlprefix\url{https://arxiv.org/abs/1203.0586}

\bibitem{ChenCousotBDT15-short}
Chen, J., Cousot, P.: A binary decision tree abstract domain functor.
\newblock In: SAS (2015)

\bibitem{Collins75-CAD-short}
Collins, G.E.: Quantifier elimination for the elementary theory of real closed
  fields by cylindrical algebraic decomposition.
\newblock In: ATFL (1975)

\bibitem{Cousot97-types-as-abstract-interpretation-short}
Cousot, P.: {T}ypes as {A}bstract {I}nterpretations.
\newblock In: POPL (1997)

\bibitem{Cousot21-book}
Cousot, P.: Principles of {A}bstract {I}nterpretation.
\newblock MIT Press (2021)

\bibitem{CousotCousot76-short}
Cousot, P., Cousot, R.: Static determination of dynamic properties of programs.
\newblock In: ISOP (1976)

\bibitem{cousot1994higher-short}
Cousot, P., Cousot, R.: Higher-order abstract interpretation (and application
  to comportment analysis generalizing strictness, termination, projection and
  per analysis of functional languages).
\newblock In: ICCL (1994)

\bibitem{cousotpopl14galoiscalculus-short}
Cousot, P., Cousot, R.: A galois connection calculus for abstract
  interpretation.
\newblock In: POPL, p. 3–4. ACM (2014)

\bibitem{CousotH78-short}
Cousot, P., Halbwachs, N.: Automatic {D}iscovery of {L}inear {R}estraints
  {A}mong {V}ariables of a {P}rogram.
\newblock In: POPL (1978)

\bibitem{caslog-shortest}
Debray, S.K., Lin, N.W.: {C}ost {A}nalysis of {L}ogic {P}rograms.
\newblock TOPLAS  (1993)

\bibitem{granularity-shortest}
Debray, S.K., Lin, N.W., Hermenegildo, M.: {T}ask {G}ranularity {A}nalysis in
  {L}ogic {P}rograms.
\newblock In: PLDI (1990)

\bibitem{low-bounds-ilps97-short}
Debray, S.K., Lopez-Garcia, P., Hermenegildo, M., Lin, N.W.: {L}ower {B}ound
  {C}ost {E}stimation for {L}ogic {P}rograms.
\newblock In: ILPS'97, pp. 291--305. MIT Press (1997)

\bibitem{Feret04-short}
Feret, J.: {S}tatic {A}nalysis of {D}igital {F}ilters.
\newblock ESOP (2004)

\bibitem{Feret05-short}
Feret, J.: {The Arithmetic-Geometric Progression Abstract Domain}.
\newblock VMCAI (2005)

\bibitem{Flajolet09-shortest}
Flajolet, P., Sedgewick, R.: Analytic Combinatorics.
\newblock CUP (2009)

\bibitem{montoya-phdthesis-short}
Flores-Montoya, A.: Cost analysis of programs based on the refinement of cost
  relations.
\newblock Ph.D. thesis, T.U. Darmstadt (2017).
\newblock Advisor: Reiner H{\"a}hnle

\bibitem{USSTAD-preprint-25}
Gomber, S., Banerjee, D., Singh, G.: Universal synthesis of differentiably
  tunable numerical abstract transformers (2025).
\newblock \urlprefix\url{https://arxiv.org/abs/2507.11827}

\bibitem{gordan1899-hilbertbasisbydickson}
Gordan, P.: {N}euer {B}eweis des {H}ilbertschen {S}atzes {\"u}ber homogene
  {F}unktionen.
\newblock Nachrichten von der Gesellschaft der Wissenschaften zu G{\"o}ttingen,
  Mathematisch-Physikalische Klasse \textbf{1899}, 240--242 (1899)

\bibitem{GoubaultHSCC17-short}
Goubault, E., Putot, S.: Forward {I}nner-{A}pproximated {R}eachability of
  {N}on-{L}inear {C}ontinuous {S}ystems.
\newblock In: Proceedings of HSCC'17, p. 1–10. ACM (2017)

\bibitem{GoubaultCAV18}
Goubault, E., Putot, S., Sahlmann, L.: {I}nner and {O}uter {A}pproximating
  {F}lowpipes for {D}elay {D}ifferential {E}quations.
\newblock In: H.~Chockler, G.~Weissenbacher (eds.) Computer Aided Verification,
  pp. 523--541. Springer, Cham (2018)

\bibitem{GrafSaidi97-short}
Graf, S., Saidi, H.: {Construction of abstract state graphs with PVS}.
\newblock In: CAV (1997)

\bibitem{ConcreteMathematics}
Graham, R.L., Knuth, D.E., Patashnik, O.: Concrete Mathematics.
\newblock Addison-Wesley (1989)

\bibitem{Hasuo12-nsa-CAV}
Hasuo, I., Suenaga, K.: {E}xercises in {N}onstandard {S}tatic {A}nalysis {o}f
  {H}ybrid {S}ystems.
\newblock In: CAV (2012)

\bibitem{ciaopp-sas03-journal-scp-shortest}
Hermenegildo, M., Puebla, G., Bueno, F., Lopez-Garcia, P.: {I}ntegrated
  {P}rogram {D}ebugging, {V}erification, and {O}ptimization {U}sing {A}bstract
  {I}nterpretation (and {T}he {C}iao {S}ystem {P}reprocessor).
\newblock Science of Comp.~Progr. \textbf{58}(1--2) (2005)

\bibitem{Hilbert1890-Basistheorem}
Hilbert, D.: {U}eber die {T}heorie der algebraischen {F}ormen.
\newblock Mathematische Annalen \textbf{36}(4), 473–534 (1890)

\bibitem{Hilbert1893-Nullstellensatz}
Hilbert, D.: {U}eber die vollen {I}nvariantensysteme.
\newblock Mathematische Annalen \textbf{42}(3), 313–373 (1893)

\bibitem{jan-hoffman-phd-short}
Hoffmann, J.: {T}ypes with {P}otential: {P}olynomial {R}esource {B}ounds via
  {A}utomatic {A}mortized {A}nalysis.
\newblock Ph.D. thesis (2011).
\newblock Advisor: Martin Hofmann

\bibitem{DBLP:journals/toplas/0002AH12-shortest}
Hoffmann, J., Aehlig, K., Hofmann, M.: {M}ultivariate {A}mortized {R}esource
  {A}nalysis.
\newblock TOPLAS  (2012)

\bibitem{ZixinHuangPhD25-chap5}
Huang, Z.: Enhancing trustworthiness in probabilistic programming: systematic
  approaches for robust and accurate inference.
\newblock Ph.D. thesis (2025).
\newblock
  \urlprefix\url{https://misailo.cs.illinois.edu/papers/zixin-huang-phd-thesis.pdf}.
\newblock Chap.\ 5, Precise abstract interpretation of probabilistic programs
  with interval data uncertainty. Advisor: Sasa Misailovic.

\bibitem{HuntPhd91-short}
Hunt, S.: Abstract interpretation of functional languages: from theory to
  practice.
\newblock Ph.D. thesis (1991).
\newblock Advisor: Chris Hankin

\bibitem{PruningOOPSLA24-short}
Johnson, K.J., Krishnan, R., Reps, T., D’Antoni, L.: Automating pruning in
  top-down enumeration for program synthesis problems with monotonic semantics.
\newblock In: OOPSLA (2024)

\bibitem{Kahn20-short}
Kahn, D.M., Hoffmann, J.: Exponential automatic amortized resource analysis.
\newblock In: FOSSACS (2020)

\bibitem{AmurthOOPSLA22-short}
Kalita, P.K., Muduli, S.K., D’Antoni, L., Reps, T., Roy, S.: Synthesizing
  abstract transformers.
\newblock In: OOPSLA (2022)

\bibitem{Karr81-short}
Karr, M.: Summation in finite terms.
\newblock J. ACM \textbf{28}(2), 305–350 (1981)

\bibitem{Kaucher80}
Kaucher, E.: {I}nterval {A}nalysis in the {E}xtended {I}nterval {S}pace {IR},
  pp. 33--49.
\newblock Springer Vienna (1980)

\bibitem{Keisler2012-em}
Keisler, H.J.: Elementary Calculus: An Infinitesimal Approach, 3 edn.
\newblock Dover Publications (2012).
\newblock \urlprefix\url{https://people.math.wisc.edu/~hkeisler/calc.html}

\bibitem{EnrichedCateogories-Kelly1982}
Kelly, M.: {B}asic {C}oncepts of {E}nriched {C}ategory {T}heory.
\newblock Cambridge University Press (1982)

\bibitem{KidoHasuo15-absint-nsa-short}
Kido, K., Chaudhuri, S., Hasuo, I.: {A}bstract {I}nterpretation with
  {I}nfinitesimals: {T}owards {S}calability in {N}onstandard {S}tatic
  {A}nalysis.
\newblock In: Symbolic and Numerical Methods for Reachability Analysis (2015)

\bibitem{kincaid2018-shortest}
Kincaid, Z., Cyphert, J., Breck, J., Reps, T.W.: Non-linear reasoning for
  invariant synthesis.
\newblock {POPL} (2018)

\bibitem{PasadoOOPSLA23-short}
Laurel, J., Qian, S.B., Singh, G., Misailovic, S.: Synthesizing precise static
  analyzers for automatic differentiation.
\newblock In: OOPSLA (2023)

\bibitem{lawvere1970quantifiers}
Lawvere, F.W.: Quantifiers and sheaves.
\newblock In: Actes du congres international des mathematiciens, Nice, vol.~1,
  pp. 329--334 (1970).
\newblock
  \urlprefix\url{https://lawverearchives.com/wp-content/uploads/2024/12/1970-quantifiers-and-sheaves.pdf}

\bibitem{Le88-shortest}
{Le Metayer}, D.: {ACE}: {A}n {A}utomatic {C}omplexity {E}valuator.
\newblock TOPLAS  (1988)

\bibitem{Levin08}
Levin, A.: Difference Algebra.
\newblock Algebra and Applications. Springer, NY (2008)

\bibitem{LommenGiesl23-shortest}
Lommen, N., Giesl, J.: Targeting completeness: Using closed forms for size
  bounds of integer programs.
\newblock In: {FroCos} (2023)

\bibitem{resource-verification-tplp18-extrashortest}
Lopez-Garcia, P., Darmawan, L., Klemen, M., Liqat, U., Bueno, F., Hermenegildo,
  M.: {I}nterval-based {R}esource {U}sage {V}erification by {T}ranslation into
  {H}orn {C}lauses and an {A}pplication to {E}nergy {C}onsumption.
\newblock TPLP  (2018)

\bibitem{gen-staticprofiling-iclp16-extrashortest}
Lopez-Garcia, P., Klemen, M., Liqat, U., Hermenegildo, M.: {A} {G}eneral
  {F}ramework for {S}tatic {P}rofiling of {P}arametric {R}esource {U}sage.
\newblock In: ICLP (2016)

\bibitem{mathematica-v13-2-short}
{W}olfram {M}athematica (v13.2): the {W}orld's {D}efinitive {S}ystem for
  {M}odern {T}echnical {C}omputing.
\newblock \url{https://www.wolfram.com/mathematica}.
\newblock Accessed: May 25, 2023

\bibitem{Mine2001-octagons-short}
Min{\'e}, A.: The {O}ctagon {A}bstract {D}omain.
\newblock In: AST (2001)

\bibitem{Mine2001-zones-short}
Miné, A.: {A} {N}ew {N}umerical {A}bstract {D}omain {B}ased on
  {D}ifference-{B}ound {M}atrices (2001)

\bibitem{MineTutorialBook2017-short}
Miné, A.: Tutorial on static inference of numeric invariants by abstract
  interpretation.
\newblock Foundations and Trends in Programming Languages  (2017)

\bibitem{Monniaux_POPL09-short}
Monniaux, D.: Automatic modular abstractions for linear constraints.
\newblock In: POPL (2009)

\bibitem{Monniaux_CAV09-shortest}
Monniaux, D.: On using floating-point computations to help an exact linear
  arithmetic decision procedure.
\newblock In: CAV (2009)

\bibitem{Monniaux_LMCS10-short}
Monniaux, D.: Automatic modular abstractions for template numerical
  constraints.
\newblock LMCS  (2010)

\bibitem{Monniaux_CAV10-short}
Monniaux, D.: Quantifier elimination by lazy model enumeration.
\newblock In: CAV (2010)

\bibitem{montagu-salto-prelim-ml23-short}
Montagu, B.: {The Design and Implementation of an Abstract Interpreter for
  OCaml Programs}.
\newblock In: {ML Family Workshop} (2023)

\bibitem{montagu-icfp2020-short}
Montagu, B., Jensen, T.: Stable relations and abstract interpretation of
  higher-order programs.
\newblock In: ICFP (2020)

\bibitem{MullerOlm04-short}
M\"{u}ller-Olm, M., Seidl, H.: Computing polynomial program invariants.
\newblock Inf. Proc. Lett. \textbf{91}(5), 233–244 (2004)

\bibitem{resource-iclp07-extrashortest}
Navas, J., Mera, E., Lopez-Garcia, P., Hermenegildo, M.: {U}ser-{D}efinable
  {R}esource {B}ounds {A}nalysis for {L}ogic {P}rograms.
\newblock In: ICLP (2007)

\bibitem{MV-conic-ellipsoids-CAV15-short}
Oulamara, M., Venet, A.J.: Abstract interpretation with higher-dimensional
  ellipsoids and conic extrapolation.
\newblock In: CAV (2015)

\bibitem{Petkovsek92}
Petkov{\v{s}}ek, M.: Hypergeometric solutions of linear recurrences with
  polynomial coefficients.
\newblock Journal of Symbolic Computation \textbf{14}(2), 243--264 (1992)

\bibitem{Petkovsek2013sketch-short}
Petkov{\v{s}}ek, M., Zakraj{\v{s}}ek, H.: Solving linear recurrence equations
  with polynomial coefficients.
\newblock In: Computer Algebra in Quantum Field Theory: Integration, Summation
  and Special Functions, pp. 259--284. Springer (2013)

\bibitem{Plotkin77-LCF-short}
Plotkin, G.: {LCF} considered as a programming language.
\newblock Theoretical Computer Science  (1977)

\bibitem{Reps2004-short}
Reps, T., Sagiv, M., Yorsh, G.: Symbolic implementation of the best
  transformer.
\newblock In: VMCAI (2004)

\bibitem{Robinson1966-nsa}
Robinson, A.: Non-standard analysis.
\newblock Princeton University Press (1966)

\bibitem{RodriguezKapurSas04-short}
Rodríguez-Carbonell, E., Kapur, D.: Automatic generation of polynomial
  invariants of bounded degree using abstract interpretation.
\newblock Science of Computer Programming \textbf{64}(1), 54--75 (2007).
\newblock {S}pecial issue on {SAS 2004}

\bibitem{Rosendahl89-shortest}
Rosendahl, M.: {A}utomatic {C}omplexity {A}nalysis.
\newblock In: FPCA (1989)

\bibitem{RouxVS18-short}
Roux, P., Voronin, Y.L., Sankaranarayanan, S.: Validating numerical
  semidefinite programming solvers for polynomial invariants.
\newblock FMSD  (2018)

\bibitem{rustenholz-aars-msc}
Rustenholz, L.: {A}utomated {A}pproximate {R}ecurrence {S}olving applied to
  {S}tatic {A}nalysis of {E}nergy {C}onsumption.
\newblock Tech. rep., CLIP Lab, IMDEA Software Institute (2022).
\newblock \urlprefix\url{https://cliplab.org/papers/rustenholz-aars-msc.pdf}.
\newblock Master internship report. Supervised by M.V.~Hermenegildo,
  P.~Lopez-Garcia, and J.~F.~Morales

\bibitem{mlrec-tplp2024-nourl}
Rustenholz, L., Klemen, M., Carreira-Perpi{\~{n}}{\'{a}}n, M.{\'{A}}.,
  L\'{o}pez-Garc\'{\i}a, P.: {A} {M}achine {L}earning-based {A}pproach for
  {S}olving {R}ecurrence {R}elations and its use in {C}ost {A}nalysis of
  {L}ogic {P}rograms.
\newblock TPLP \textbf{24}(6), 1163--1207 (2024).
\newblock \doi{10.1017/S1471068424000413}

\bibitem{order-recsolv-sas24-nourl}
Rustenholz, L., Lopez-Garcia, P., Morales, J.F., Hermenegildo, M.V.: {A}n
  {O}rder {T}heory {F}ramework of {R}ecurrence {E}quations for {S}tatic {C}ost
  {A}nalysis - {D}ynamic {I}nference of {N}on-{L}inear {I}nequality
  {I}nvariants.
\newblock In: Proceedings of the 31st Static Analysis Symposium (SAS 2024),
  \emph{LNCS}, vol. 14995. Springer (2024).
\newblock \doi{10.1007/978-3-031-74776-2_14}

\bibitem{order-recsolv-preprint-extended}
Rustenholz, L., Lopez-Garcia, P., Morales, J.F., Hermenegildo, M.V.: {A}n
  {O}rder {T}heory {F}ramework of {R}ecurrence {E}quations for {S}tatic {C}ost
  {A}nalysis - {D}ynamic {I}nference of {N}on-{L}inear {I}nequality
  {I}nvariants (2024).
\newblock \urlprefix\url{https://arxiv.org/abs/2406.18260}.
\newblock Extended (preprint) version.

\bibitem{ModalItvBook2014-short}
Sainz, M.A., Armengol, J., Calm, R., Herrero, P., Jorba, L., Vehi, J.: {M}odal
  {I}nterval {A}nalysis: New Tools for Numerical Information.
\newblock Springer (2014)

\bibitem{SriramAffineTemplatesVMCAI05-short}
Sankaranarayanan, S., Sipma, H.B., Manna, Z.: Scalable analysis of linear
  systems using mathematical programming.
\newblock In: VMCAI (2005)

\bibitem{Sco82}
Scott, D.S.: {D}omains for {D}enotational {S}emantics.
\newblock In: Proc. International Conference on Algebraic and Logic
  Programming. Springer-Verlag (1982)

\bibitem{sized-types-iclp2013-shortest}
Serrano, A., Lopez-Garcia, P., Bueno, F., Hermenegildo, M.: {S}ized {T}ype
  {A}nalysis for {L}ogic {P}rograms.
\newblock In: ICLP (2013)

\bibitem{plai-resources-iclp14-shortest}
Serrano, A., Lopez-Garcia, P., Hermenegildo, M.: {R}esource {U}sage {A}nalysis
  of {L}ogic {P}rograms via {A}bstract {I}nterpretation {U}sing {S}ized
  {T}ypes.
\newblock TPLP \textbf{14}(4-5), 739--754 (2014)

\bibitem{Shary-nontraditv-2023-short}
Shary, S.P.: Non-traditional intervals and their use. which ones really make
  sense?
\newblock Numerical Analysis and Applications  (2023)

\bibitem{SmithNotesGaloisSyntaxSemantics}
Smith, P.: {T}he {G}alois {C}onnection between {S}yntax and {S}emantics.
\newblock \url{https://www.logicmatters.net/resources/pdfs/Galois.pdf}

\bibitem{tanny-talk-nestedreceq-short}
Tanny, S.: An invitation to nested recurrence relations.
\newblock Talk at CanaDAM (2013)

\bibitem{sagemath-short}
{The Sage Developers}: {S}ageMath, the {S}age {M}athematics {S}oftware {S}ystem
  ({V}ersion 10.2) (2024)

\bibitem{urban2014-bdt-short}
Urban, C., Min{\'e}, A.: A decision tree abstract domain for proving
  conditional termination.
\newblock In: SAS (2014)

\bibitem{valnet-ECOOP25-short}
Valnet, M., Monat, R., Min\'{e}, A.: {Compositional Static Value Analysis for
  Higher-Order Numerical Programs}.
\newblock In: ECOOP (2025)

\bibitem{vh-03-shortest}
Vasconcelos, P., Hammond, K.: {I}nferring {C}ost {E}quations for {R}ecursive,
  {P}olymorphic and {H}igher-{O}rder {F}unctional {P}rograms.
\newblock In: IFL (2003)

\bibitem{Wegbreit75-short-plus}
Wegbreit, B.: {M}echanical {P}rogram {A}nalysis.
\newblock Comm. of the ACM \textbf{18}(9), 528--539 (1975)

\end{thebibliography}

%% \appendix
%% \section{Proofs}
%% \lipsum[1-3]

\end{document}